\newlength\aftertitskip     \newlength\beforetitskip
\newlength\interauthorskip  \newlength\aftermaketitskip
\newtheorem{theorem}{Theorem}[section]
\newtheorem{lemma}[theorem]{Lemma}
\newtheorem{proposition}[theorem]{Proposition}
\theoremstyle{definition}
\newtheorem{definition}[theorem]{Definition}
\newtheorem{remark}[theorem]{Remark}
\numberwithin{equation}{section}
\newcommand{\iu}{\mathrm{i}}
\title{Structural Balance and Random Walks on Complex Networks with Complex Weights}
\author{Yu Tian\thanks{Nordita, Stockholm University and KTH Royal Institute of Technology, SE-106 91 Stockholm, Sweden
		(yu.tian@su.se).}
	\and Renaud Lambiotte\thanks{Mathematical Institute, University of Oxford, OX2 6GG Oxford, UK
		(renaud.lambiotte@maths.ox.ac.uk).}}
\date{}
\begin{document}
	\maketitle
	\begin{abstract}
		Complex numbers define the relationship between entities in many situations. A canonical example would be the off-diagonal terms in a Hamiltonian matrix in quantum physics. Recent years have seen an increasing interest to extend the tools of network science when the weight of edges are complex numbers. Here, we focus on the case when the weight matrix is Hermitian, a reasonable assumption in many applications, and investigate both structural and dynamical properties of the complex-weighted networks. Building on concepts from signed graphs, we  introduce a classification of complex-weighted networks based on the notion of structural balance, and  illustrate the shared spectral properties within each type. We then apply the results to characterise the  dynamics of random walks on complex-weighted networks, where local consensus can be achieved asymptotically when the graph is structurally balanced, while global consensus will be obtained when it is strictly unbalanced. Finally, we explore potential applications of our findings by generalising the notion of cut, and propose an associated spectral clustering algorithm. We also provide further characteristics of the magnetic Laplacian, associating directed networks to complex-weighted ones. The performance of the algorithm is verified on both synthetic and real networks. 
	\end{abstract}
	
	
	\section{Introduction}
	Networks have been popular models in representing complex systems over the past few decades, and can provide valuable insights into various fields, such as physics, biology, economy and social sciences \cite{newman_networks_2018}. At its core, network science questions the relations between the structure of a system and the dynamics taking place on it, investigating how certain type of patterns may either slow down or accelerate diffusing dynamics, and using dynamical processes to extract important information from the underlying structure \cite{lambiotte_2022_dynamics}. A majority of the research focuses on unweighted networks, where each pair of nodes may be connected or not, hence being encoded as a binary variable, but several methods and models have later been generalised to situations when edges are equipped with a real-value weight, often considered to be positive (e.g., an intensity or a frequency) but also more recently with an arbitrary sign (e.g., to encode a conflict).
	
	This paper focuses on networks where the weight of an edge is not a real number, but a complex one. Networks with complex weights find applications in a variety of scientific areas, including quantum information, computational social science, and machine learning \cite{bottcher2021quantumw,cucuringu2020herimitian,he2022msgnn,hoser2005social,kadian2021review,kubota2021quantumw,zhang2021neural,zhang2021magnet}. Complex numbers are widely used in applied mathematics and physics, such as to represent periodically varying signals in signal processing and to describe potential flow in two dimensions in fluid mechanics. The complex field is also intrinsic to quantum mechanics, in terms of complex-valued equations, wave functions, operators and Hilbert spaces \cite{griffiths2005quantumm}. Further, the fundamental role of complex numbers in quantum theory are verified in various experiments \cite{chen2022quantumc,li2022quantumc,renou2021quantumc}. Meanwhile, in social network analysis, the use of complex adjacency matrices can store more information in the context of asymmetric weighted communication \cite{hoser2005social}. Artificial neural networks have also been extended to consider complex weights, which outperform their real-valued counterparts \cite{zhang2021neural}.
	
	In many applications, the weight matrices are assumed to be Hermitian \cite{bottcher2022complex} -- which naturally generalises symmetric real matrices to the complex domain. Specifically, for an isolated quantum system, we can think of the Hermitian matrices associated with the Hamiltonian as an adjacency matrix, where it has complex-valued off-diagonal terms which are indicative of the amplitude for transition from one state to another. This can also be extended to quantum networks based on entangled states and physical connectivity \cite{biamonte2019quantumn}. As an example, originated from quantum mechanics, the magnetic Laplacian (investigated further in the last part of this paper), being Hermitian, has been considered as an efficient representation for directed networks in various downstream applications, including community detection \cite{fanuel2017magnetic}. In machine learning, it has been shown that the representation of a dataset can be dramatically improved by attaching to each edge not only a weight but also the transformation from one to another \cite{singer2012vdm}. If rotational transformation is considered, then from its connection with complex numbers, the weight matrix of the graph is necessarily Hermitian.
	
	The theoretical and mathematical understanding of networks with complex weights is relatively preliminary. Among the few works in the literature, B\"{o}ttcher and Porter have generalised several network measures to the complex domain, including random walk centralities \cite{bottcher2022complex}. Lange \textit{et al.} have proved extensions of the Cheeger inequality for several versions of magnetic Laplacian, and as byproducts, they have considered the notion of balance on graphs with complex weights (or signature values) \cite{Lange2015MagL}. Note that the concept of balance has initially been motivated by problems in social psychology \cite{cartwrightharary_1956_gbalance,harary_1953_balance}, and has mainly been considered in the study of signed networks where the edge weights can be positive or negative but still in the real domain \cite{Kunegis_2009_Zoo,Symeonidis_2013_biology,Symeonidis_2013_multiway,Wu_2012_eigenspace,zaslavsky_1982_signed}. Within the literature of signed networks, works on signed networks that are not balanced \cite{Atay_signedCheeger_2020} are relatively limited, especially in relation to dynamics. In that direction, let us point \cite{tian2022sign} where the researchers extended random walks to signed networks, and investigated their behaviour in different types of signed networks.

	In this paper, we have developed further the idea of balance in complex networks, to consider the whole range of situations when the networks with complex weights may be balanced, antibalanced, and strictly balanced. Our first contribution is to characterise each type of the complex-weighted networks in terms of the spectral properties of the complex weight matrix. In particular, we show that the spectral radius of this matrix is smaller than its real counterpart where each element is replaced by its magnitude, if and only if the underlying network is strictly unbalanced. Then we extend an important dynamics, random walks, to the complex domain, and our second contribution is to characterise its variety of behaviour on complex-weighted networks. Finally, we have applied our analysis to two applications: spectral clustering and the magnetic Laplacian. We generalise the notion of cut for complex weights, and our third contribution is to propose a spectral clustering algorithm on complex-weighted networks. It is the first algorithm of this type, to the best of our knowledge. Our fourth contribution is to provide further characterisation for the eigenvalues and eigenvectors of the magnetic Laplacian, thereby helping us to build a spectral clustering algorithm to detect communities in directed networks. Through the paper, the results are numerically verified in both synthetic and real networks.
	
	This paper is organised as follows. In section \ref{sec:net-complexwei}, we first introduce the main notations in section \ref{sec:net-notation}, then propose the classification of complex-weighted networks based on the notion of balance in section \ref{sec:net-balance}, and further characterise the spectral properties in each type in section \ref{sec:net-spect}. Then in section \ref{sec:randwalk}, after illustrating the random walk dynamics on networks with complex weights in section \ref{sec:randwalk-def}, we further characterise the distinct behaviour of the dynamics on each type of complex-weighted networks in section \ref{sec:randwalk-dynpro}. An example is given in section \ref{sec:randwalk-eg} where the potential phases uniformly divide the unit circle. Based on the results we have developed, we consider the clustering problem on complex-weighted networks, and propose a spectral clustering algorithm in section \ref{sec:app-spect}. We also provide further characteristics of the magnetic Laplacian in section \ref{sec:app-magL}. The results in both sections are verified in synthetic and real networks. Finally, we conclude with potential future directions in section \ref{sec:discussion}.
	
	\section{Networks with complex weights}\label{sec:net-complexwei}
	\subsection{Notations}\label{sec:net-notation}
	We consider networks in the form of weighted graphs $G=(V,E,\mathbf{W})$, being connected, directed and complex-weighted, where $V=\{v_1, v_2, \dots, v_n\}$ is the node set, $E$ is the edge set, and $\mathbf{W} = (W_{ij})$ is the complex weight matrix, with $W_{ij}\in\mathbb{C}$ characterising the edges between nodes. We assume that $\mathbf{W}$ is Hermitian where $\mathbf{W}^* = \mathbf{W}$. Specifically, we further decompose each element in the complex weight matrix as $W_{ij} = r_{ij}e^{\iu \varphi_{ij}}$, where $r_{ij} \ge 0$ indicates the magnitude of the value while $\varphi_{ij}\in [0, 2\pi)$ is the phase, and then we have $r_{ij} = r_{ji}$ and $\varphi_{ij} = -\varphi_{ji} + 2\pi$. For example, if each node represents a figure and the edge weights characterise the similarity between them, $r_{ij}$ can store the maximum similarity value between the two figures subject to rotations, while the extra freedom introduced by $\varphi_{ij}$ can be used to record the rotation value associated with the maximum. Clearly, classic networks can be retrieved if all phases are $0$, while signed networks can be obtained if all phases can only be $0$ or $\pi$. Furthermore, corresponding to the sign of paths (cycles) in signed networks, we also define the phase of a path (cycle)\footnote{\footnotesize Note that the term ``directed paths (cycles)" are generally used in directed graphs, but since we assume that $\mathbf{W}$ is Hermitian, the existence of a path (cycle) without direction is equivalent to the existence of a path (cycle) with either direction in $G$. Hence, we directly use ``paths (cycles)" in this paper, and specify the direction when necessary.} as the sum of phases of composing edges, and we still restrict the phase to lie in $[0,2\pi)$. 
	
	The degree of a node $v_i$ sums over the magnitudes of edges incident on this node, 
	\begin{align*}
		d_i = \sum_j \abs{W_{ij}} = \sum_j r_{ij},
	\end{align*}
	and we define the complex Laplacian matrix as
	\begin{align*}
		\mathbf{L} = \mathbf{D} - \mathbf{W},
	\end{align*}
	where the complex degree matrix $\mathbf{D}$ is the diagonal matrix with $\mathbf{d} = (d_i)$ on its diagonal. As the graph Laplacian in classic networks, we will show later in section \ref{sec:app-spect} that the complex Laplacian can also be used to perform clustering on complex-weighted networks. Meanwhile, we define the complex random walk Laplacian as
	\begin{align*}
		\mathbf{L}_{rw} = \mathbf{I} - \mathbf{D}^{-1}\mathbf{W},
	\end{align*}
	where $\mathbf{I}$ is the identity matrix. Later in section \ref{sec:randwalk}, we will show that it relates to a type of random walks on complex-weighted networks. In the case of $r_{ij} = 1, \forall (v_i,v_j)\in E$, i.e., only uniform magnitude $1$, we define the network as being ``unweighted". In this case, $\mathbf{W} = \mathbf{A}$, where $\mathbf{A}$ is the adjacency matrix and only records the phase information, i.e., $A_{ij} = e^{\iu \varphi_{ij}}$ if $r_{ij} > 0$ and $0$ otherwise, $\forall v_i,v_j\in V$.
	
	\subsection{Structural balance}\label{sec:net-balance} 
	In this section, we propose a classification of graphs with complex weights. The classification is based on extending the important notions of balance and antibalance to complex-weighted graphs, motivated by the special case of signed networks \cite{cartwrightharary_1956_gbalance,harary_1953_balance,harary_1957_duality,heider_1946_psychology}. 
	\begin{definition}[Structural Balance of Complex-Weighted Graphs] Let $G = (V, E, \mathbf{W})$ be a complex-weighted graph.
		\begin{enumerate}
			\item \textbf{Structural Balance.} $G$ is structurally balanced if the phase of every cycle is $0$.
			\item \textbf{Structural Antibalance.} $G$ is structurally antibalanced if the phase of every cycle, after adding $\pi$ to the phase of each composing edge, is $0$. 
			\item \textbf{Strict Unbalance.} $G$ is strictly unbalanced if it is neither balanced nor antibalanced.
		\end{enumerate} 
		\label{def:balance}
	\end{definition}
	\begin{remark}
		Since $\mathbf{W}$ is Hermitian, 
		if one cycle has phase $\theta$, the one with the reversed direction will have phase $2\pi-\theta$. The correctness of the definition relies on the fact that the only two cases where the cycle and its reversed one have the same phase is when $\theta = 0$ or $\pi$. 
	\end{remark}
	
	\begin{proposition}
		A complex-weighted graph $G$ which is both balanced and antibalanced has to be bipartite.
		\label{pro:struct-ba-bi}
	\end{proposition}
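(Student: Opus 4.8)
The plan is to show that the combination of balance and antibalance forces every cycle of $G$ to have even length, and then to invoke the classical characterization that a connected graph is bipartite if and only if it contains no cycle of odd length. Since the paper already assumes $G$ is connected, this characterization applies directly, so the entire content of the proposition reduces to the length argument.

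First I would fix an arbitrary cycle $C$ of $G$, say of length $k$, whose composing edges carry phases $\varphi_1, \varphi_2, \dots, \varphi_k$. By the definition of the phase of a cycle as the sum of the phases of its composing edges (reduced into $[0,2\pi)$), the assumption that $G$ is structurally balanced gives
\begin{align*}
	\sum_{i=1}^{k} \varphi_i \equiv 0 \pmod{2\pi}.
\end{align*}
On the other hand, structural antibalance asserts that after adding $\pi$ to the phase of each composing edge the cycle phase is again $0$, which reads
\begin{align*}
	\sum_{i=1}^{k} (\varphi_i + \pi) \equiv 0 \pmod{2\pi}.
\end{align*}
Subtracting the first congruence from the second yields $k\pi \equiv 0 \pmod{2\pi}$, and hence $k$ is even. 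As $C$ was an arbitrary cycle, every cycle of $G$ has even length.

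With this in hand, the conclusion follows immediately: a connected graph all of whose cycles are even contains no odd cycle, and is therefore bipartite. I do not expect a serious obstacle here, as the argument is essentially a one-line modular computation; the only points requiring care are keeping all phase sums consistently modulo $2\pi$ (so that the subtraction of the two congruences is legitimate) and confirming that the length argument applies to \emph{every} cycle, which is precisely what licenses the appeal to the odd-cycle characterization of bipartiteness. The degenerate case in which $G$ has no cycles at all (a tree) is covered as well, since a tree is trivially bipartite.
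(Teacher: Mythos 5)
Your proof is correct and follows essentially the same argument as the paper: both use balance to get cycle phase $0$, antibalance to get $k\pi \equiv 0 \pmod{2\pi}$ for each cycle of length $k$, and conclude bipartiteness from the absence of odd cycles. Your version merely spells out the modular arithmetic more explicitly, which is fine.
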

	\begin{proof}
		From $G$ being balanced, the phase of every cycle is $0$. Then if we add $\pi$ to each composing edge in a cycle, we add $\pi n_c$ to the phase of the cycle, where $n_c$ is the length of the cycle. From $G$ being antibalanced, $\pi n_c = 2k\pi$ for some $k\in\mathbb{Z}^+$, and this is true for every cycle in $G$. Hence, there is no odd cycle in $G$, which indicates that $G$ is bipartite.
	\end{proof}
	
	We now consider the structural theorems for balanced and antibalance in complex-weighted graphs, in order to provide the graph decomposition based on the phase of edges. See section \ref{sec:sm-complex} in the Appendix for more details of the proofs.
	\begin{theorem}[Structural Theorem for Balance]
		A complex-weighted graph $G$ is balanced if and only if there is a partition $\{V_i\}_{i=1}^{l_p}$ s.t. (i) any edges within each node subset have phase $0$, (ii) any edges between the same pair of node subsets have the same phase, and (iii) if we consider each node subset as a super node, then the phase of any cycle is $0$.  
		\label{the:balance-part}
	\end{theorem}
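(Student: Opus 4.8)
The plan is to reduce the statement to a single gauge/potential lemma: a connected complex-weighted graph $G$ is balanced if and only if there exist node phases $\theta_1,\dots,\theta_n \in [0,2\pi)$ such that $\varphi_{ij} \equiv \theta_i - \theta_j \pmod{2\pi}$ for every edge $(v_i,v_j)\in E$. The nontrivial (forward) direction of this lemma is where the balance hypothesis does its work. I would fix a spanning tree $T$ of $G$ (which exists since $G$ is connected), root it at some $v_1$, set $\theta_1 = 0$, and propagate along tree edges by the rule $\theta_j \equiv \theta_i - \varphi_{ij}$ whenever $v_j$ is the child of $v_i$ in $T$. For a non-tree edge $(v_i,v_j)$, the unique tree path from $v_i$ to $v_j$ together with this edge forms a cycle; since $G$ is balanced this cycle has phase $0$, which is exactly the algebraic identity forcing $\varphi_{ij}\equiv\theta_i-\theta_j$. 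Hence all edges satisfy the potential relation. The converse direction of the lemma is immediate: if $\varphi_{ij}\equiv\theta_i-\theta_j$, then the phase of any cycle telescopes to $0 \pmod{2\pi}$.

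For the forward direction of the theorem, I take the partition $\{V_i\}_{i=1}^{l_p}$ whose blocks are the level sets of $\theta$, i.e., two nodes lie in the same block iff they carry the same potential value. Condition (i) is then automatic, since an intra-block edge $(v_i,v_j)$ has $\varphi_{ij}\equiv\theta_i-\theta_j=0$. Condition (ii) follows because every (consistently oriented) edge between a fixed ordered pair of blocks with potentials $\alpha,\beta$ has phase $\alpha-\beta$; here the Hermitian convention $\varphi_{ij}=-\varphi_{ji}+2\pi$ is what makes ``the phase between the pair'' well defined once a direction is fixed. Finally, assigning the common value $\alpha$ to the super node of a block realises the quotient edge phases as $\alpha-\beta$, so the quotient graph again satisfies a potential relation and condition (iii) holds.

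For the reverse direction, I start from an arbitrary partition satisfying (i)--(iii) and show $G$ is balanced. I take any cycle $C$ in $G$ and split its edges into intra-block edges, which contribute phase $0$ by (i), and crossing edges. Collapsing each block to its super node turns the crossing edges of $C$ into a closed walk $\bar C$ in the quotient graph, and by (ii) the phase of $C$ equals the phase of $\bar C$. Applying the already-proved lemma to the quotient graph --- which is balanced by (iii) --- gives a potential on the super nodes, so the phase of the closed walk $\bar C$ telescopes to $0$. Hence every cycle of $G$ has phase $0$ and $G$ is balanced.

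I expect the main obstacle to be the bookkeeping of the potential modulo $2\pi$ together with edge orientation: one must check that the propagation rule along $T$ is consistent with the Hermitian relation $\varphi_{ij}=-\varphi_{ji}+2\pi$ (so that reversing an edge negates the accumulated phase), and that the cycle-phase identity used to close up non-tree edges is stated for the correctly oriented cycle. Everything else --- existence of the spanning tree, additivity of phase along walks, and the decomposition of a cycle into intra-block and crossing parts --- is routine once the potential is in hand.
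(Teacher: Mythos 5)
Your proof is correct, but it takes a genuinely different route from the paper's. The paper proves the hard direction (balanced $\Rightarrow$ partition) by a direct propagation argument: starting from a node $v_1$, it groups neighbours according to the phase of their edge to $v_1$, repeats the process from each neighbour, uses balance on the triangles through pairs of common neighbours to check that groupings made from different base nodes agree, and then verifies (i)--(iii) for the resulting partition. You instead factor everything through a gauge (switching) lemma: $G$ is balanced iff the edge phases are differences of node potentials, $\varphi_{ij} \equiv \theta_i - \theta_j \pmod{2\pi}$, proved with a spanning tree and closed up on non-tree edges by the balance hypothesis; the partition is then simply the level sets of the potential. Both arguments are sound, and your handling of the orientation bookkeeping (reversal negating the phase, consistent with $\varphi_{ji} = -\varphi_{ij} + 2\pi$) is exactly the check that matters. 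Your route buys three things: the spanning-tree construction avoids the fiddly consistency verification in the paper's propagation step; your converse direction is cleaner, since applying the lemma to the quotient graph lets closed walks (not merely cycles) telescope to zero --- a point the paper glosses over when it asserts that a collapsed cycle ``contains one or more cycles,'' whereas the collapsed object is in general a closed walk that may backtrack; and your potential is precisely the diagonal gauge matrix $\mathbf{I}_1$ that the paper needs later in Theorem \ref{the:transition-spect}, where $\mathbf{W} = \mathbf{I}_1^*\bar{\mathbf{W}}\mathbf{I}_1$, so your lemma is a reusable stepping stone rather than a one-off construction. What the paper's approach buys in exchange is that the partition and properties (i)--(iii) are exhibited directly by the grouping process itself, without passing through a quotient graph or a second application of an auxiliary lemma.
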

	\begin{proof}[Proof sketch]
		If such partition exists, $G$ is balanced by definition. If $G$ is balanced, we can show that such partition exists by construction: we start with an arbitrary node, group its neighbours by the phase of edges between them, and repeat the process for other nodes until all nodes have been grouped. 
	\end{proof}
	
	\begin{theorem}[Structural Theorem for Antibalance]
		A complex-weighted graph $G$ is antibalanced if and only if there is a partition $\{V_i\}_{i=1}^{l_p}$ s.t. (i) any edges within each node subset have phase $\pi$, (ii) any edges between the same pair of node subsets have the same phase, and (iii) if we consider each node subset as a super node and add phase $\pi$ to each super edge, then the phase of any cycle is $0$.  
		\label{the:antibalance-part}
	\end{theorem}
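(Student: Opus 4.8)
The plan is to obtain this as a corollary of the Structural Theorem for Balance (Theorem~\ref{the:balance-part}) through a phase-shifting reduction. Define an auxiliary graph $G' = (V, E, \mathbf{W}')$ on the same node and edge sets with $\mathbf{W}' = -\mathbf{W}$, i.e.\ every edge phase is shifted by $\pi$ so that $\varphi'_{ij} = \varphi_{ij} + \pi$. First I would check that $G'$ is again a legitimate complex-weighted graph: since $\mathbf{W}^* = \mathbf{W}$, we have $(\mathbf{W}')^* = -\mathbf{W}^* = -\mathbf{W} = \mathbf{W}'$, so $G'$ is Hermitian and the phase convention $\varphi'_{ij} = -\varphi'_{ji} + 2\pi \pmod{2\pi}$ is preserved.

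The key reduction is the equivalence: $G$ is antibalanced if and only if $G'$ is balanced. A cycle of length $n_c$ and phase $\theta$ in $G$ becomes, in $G'$, a cycle of phase $\theta + \pi n_c$, since each of the $n_c$ composing edges contributes an extra $\pi$ (this is exactly the bookkeeping already used in Proposition~\ref{pro:struct-ba-bi}). By Definition~\ref{def:balance}, $G$ is antibalanced precisely when $\theta + \pi n_c \equiv 0 \pmod{2\pi}$ for every cycle, which is the statement that every cycle of $G'$ has phase $0$, i.e.\ that $G'$ is balanced.

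With this equivalence in hand, I would apply Theorem~\ref{the:balance-part} to $G'$ to obtain a partition $\{V_i\}_{i=1}^{l_p}$ satisfying the three balance conditions for $G'$, and then translate each condition back to $G$ using $\varphi'_{ij} = \varphi_{ij} + \pi$. Condition (i) for $G'$ (intra-subset edges have phase $0$) becomes intra-subset edges of $G$ having phase $\pi$, matching (i). Condition (ii) is invariant under a uniform shift of all phases by $\pi$, so equal phases between a pair of subsets in $G'$ correspond to equal phases in $G$, matching (ii). For condition (iii), each super edge of $G'$ carries the phase of the corresponding $G$ super edge plus $\pi$; hence requiring all super-node cycles of $G'$ to have phase $0$ is exactly requiring that, after adding $\pi$ to each super edge of $G$, all super-node cycles have phase $0$, matching (iii). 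Running the same translation in reverse gives the converse, since a partition satisfying (i)--(iii) for $G$ satisfies the balance conditions for $G'$.

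I expect the main obstacle to be purely bookkeeping rather than conceptual: one must track phases modulo $2\pi$ carefully, especially in condition (iii), where the total shift accrued around a super-node cycle is $\pi$ times the number of super edges traversed, and confirm that the super-node contraction is well defined (which is inherited from condition (ii)). One subtlety worth flagging is that the partition produced for $G'$ must be verified to be a valid partition for $G$ as well; but since $G$ and $G'$ share the same underlying vertex and edge sets and the grouping in the balance proof depends only on equalities among phases --- which are preserved by the uniform $\pi$-shift --- the same partition transfers directly.
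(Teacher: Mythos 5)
Your proposal is correct and takes essentially the same route as the paper: the paper's proof also passes to the $\pi$-shifted graph $G_n=(V,E,-\mathbf{W})$, notes that $G$ is antibalanced if and only if $G_n$ is balanced (immediate from Definition~\ref{def:balance}), applies Theorem~\ref{the:balance-part} to $G_n$, and translates the three conditions back. Your version merely makes explicit some bookkeeping (Hermitian preservation, the $\theta+\pi n_c$ computation) that the paper leaves implicit.
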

	\begin{proof}[Proof sketch]
		The result follows from the relationship between balance and antibalance, and Theorem \ref{the:balance-part}.
	\end{proof}
	
	\subsection{Spectral characterisation}
	\label{sec:net-spect}
	Now, we further characterise the spectral properties of complex weight matrix. Specifically, we aim to explore the different effects from adding the phase information to the classic weighted graphs only with the magnitude information.
	Throughout this section, we consider the complex weight matrix $\mathbf{W}$, and its relations with the one ignoring the phase information $\bar{\mathbf{W}} = (\bar{W}_{ij})$, where $\bar{W}_{ij}\coloneqq \abs{W_{ij}} = r_{ij}$, encodes the weights in the classic weighted graph $\bar{G} = (V, E, \bar{\mathbf{W}})$.
	
	We first show that when $G$ is either balanced or antibalanced, the full spectrum of the complex weight matrix $\mathbf{W}$ can be obtained from $\bar{\mathbf{W}}$ in Theorem \ref{the:transition-spect}, and further characterise its leading eigenvalues and eigenvectors in Proposition \ref{pro:transition-spect-rho}. Finally if $G$ is strictly unbalanced, we demonstrate a general property that its spectral radius will be smaller than $\bar{\mathbf{W}}$'s in Theorem \ref{the:strict-unb-rho}. We defer more details of the proofs to section \ref{sec:sm-complex} in the Appendix. 
	\begin{lemma}
		If $G$ is balanced, then $\forall v_i,v_j\in V$, all paths from $v_i$ to $v_j$ have the same phase. 
		\label{lem:ba-path}
	\end{lemma}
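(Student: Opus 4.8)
The plan is to prove that in a balanced graph, any two paths sharing the same endpoints $v_i$ and $v_j$ carry identical phase. The key observation is that two such paths can be combined into a closed walk, and balance controls the phase of every cycle. First I would take two arbitrary paths $P_1$ and $P_2$ from $v_i$ to $v_j$, and form the closed walk $C$ obtained by traversing $P_1$ forward and then $P_2$ backward. Since $\mathbf{W}$ is Hermitian, traversing an edge in the reverse direction negates its phase (modulo $2\pi$), so the phase of $C$ equals $\varphi(P_1) - \varphi(P_2)$, where $\varphi$ denotes the total phase of a path (the sum of its edge phases, reduced to $[0,2\pi)$). The goal then reduces to showing $\varphi(P_1) = \varphi(P_2)$, equivalently that this closed walk has phase $0$.

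The main obstacle is that $C$ need not be a simple cycle: the two paths may share intermediate vertices or edges, so $C$ is in general only a closed walk and the balance hypothesis as stated in Definition \ref{def:balance} refers to cycles. The plan to overcome this is to decompose the closed walk $C$ into a collection of edge-disjoint simple cycles (plus possibly edges traversed once forward and once backward, whose phase contributions cancel by the Hermitian property). I would argue this by induction on the length of the closed walk: whenever the walk revisits a vertex, it splits off a shorter closed subwalk, which is eventually a simple cycle, and removing it leaves a strictly shorter closed walk to which the inductive hypothesis applies. Each extracted simple cycle has phase $0$ by the balance assumption, and the cancelling back-and-forth edges contribute $0$ as well.

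Summing these contributions, the total phase of $C$ is a sum of zeros, hence $0$ modulo $2\pi$. Therefore $\varphi(P_1) - \varphi(P_2) \equiv 0$, and since both phases lie in $[0,2\pi)$, we conclude $\varphi(P_1) = \varphi(P_2)$. Because $P_1$ and $P_2$ were arbitrary paths between the same endpoints, all paths from $v_i$ to $v_j$ share a common phase, as claimed. I expect the cleanest writeup to handle the cancellation of repeated edges carefully, since this is where the Hermitian condition $\varphi_{ij} = -\varphi_{ji} + 2\pi$ does the essential work, and to state explicitly that phase addition is taken modulo $2\pi$ throughout so that the cycle-decomposition argument composes correctly.
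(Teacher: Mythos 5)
Your proof is correct, and its core is the same as the paper's: concatenate $P_1$ with the reversal of $P_2$, use the Hermitian condition $\varphi_{ji} = 2\pi - \varphi_{ij}$ so that reversed edges contribute cancelling phases, and invoke balance to force the resulting closed structure to have phase $0$. Where you genuinely differ is in handling the overlap between the two paths. The paper argues by contradiction and disposes of overlaps with a WLOG: it assumes the two offending paths share no internal vertices, so that $P_1$ followed by reversed $P_2$ is a single simple cycle of nonzero phase, contradicting Definition \ref{def:balance}. You instead keep the general closed walk and decompose it inductively into simple cycles plus back-and-forth edge traversals. Your route is longer but buys two things: (i) it is self-contained where the paper's WLOG glosses over a real point --- shared vertices may appear in different orders along the two paths, so one must argue (e.g.\ by taking a length-minimal offending pair) that internally disjoint subpaths with different phases exist; your induction avoids this entirely; and (ii) it proves the stronger statement that in a balanced graph every closed walk has phase $0$, hence all \emph{walks}, not merely simple paths, between two fixed nodes share a phase, which is the kind of statement the paper needs again around Lemma \ref{lem:unbalanced}. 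Two small wording cautions for the writeup: the cycles produced by your splitting argument partition the edge \emph{traversals} of the walk but need not be edge-disjoint as subgraphs (a walk may run around the same triangle twice), which is harmless since you only sum phases over traversals; and the degenerate closed walks of length two must be read as back-and-forth traversals of a single edge (phase $2\pi \equiv 0$ by Hermiticity) rather than as cycles, which your parenthetical remark indeed covers.
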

	\begin{proof}[Proof sketch]
		We can show by contradiction that if there are two paths of different phases between the same nodes, $G$ is not balanced.
	\end{proof} 
	
	\begin{theorem}[Spectral Theorem of Balance and Antibalance]
		Let $\mathbf{W} = \mathbf{U}\Lambda\mathbf{U}^{*}$ and $\bar{\mathbf{W}} = \bar{\mathbf{U}}\bar{\Lambda}\bar{\mathbf{U}}^*$ be the unitary eigendecompositions of $\mathbf{W}$ and $\bar{\mathbf{W}}$, respectively, where $\mathbf{U}\mathbf{U}^* = \mathbf{I}$ and $\bar{\mathbf{U}}\bar{\mathbf{U}}^* = \mathbf{I}$. Let $\{V_i\}_{i=1}^{l_p}$ denote the corresponding partition for either balanced or antibalanced graphs, and $\mathbf{I}_1$ denote the diagonal matrix whose $(i,i)$ element is $\exp(\theta_{1\sigma(i)}\iu)$, where $\sigma(\cdot)$ returns the node subset that a node is associated with, and $\theta_{hl}$ is the phase of a path from nodes in $V_h$ to nodes in $V_l$ in the balanced case and is the phase of the path after adding $\pi$ to each composing edge in the antibalanced case. 
		\begin{enumerate}
			\item If $G$ is balanced, 
			\begin{align*}
				\Lambda = \bar{\Lambda},\ \mathbf{U} = \mathbf{I}_1^*\bar{\mathbf{U}}.
			\end{align*}
			\item If $G$ is antibalanced,
			\begin{align*}
				\Lambda = -\bar{\Lambda},\ \mathbf{U} = \mathbf{I}_1^*\bar{\mathbf{U}}.
			\end{align*}
		\end{enumerate}
		\label{the:transition-spect}
	\end{theorem}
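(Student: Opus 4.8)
The plan is to exhibit $\mathbf{W}$ as a diagonal unitary conjugation of $\bar{\mathbf{W}}$ (in the balanced case) or of $-\bar{\mathbf{W}}$ (in the antibalanced case), so that the full eigendecomposition transfers directly. First I would observe that $\mathbf{I}_1$ is itself unitary, since each diagonal entry $\exp(\theta_{1\sigma(i)}\iu)$ has unit modulus; hence $\mathbf{I}_1^*\bar{\mathbf{U}}$ is a product of unitary matrices and is again unitary, which is precisely what lets us read off a legitimate eigendecomposition at the end.

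The core of the argument is the entrywise identity $\mathbf{W} = \mathbf{I}_1^*\bar{\mathbf{W}}\mathbf{I}_1$ in the balanced case. Computing the $(i,j)$ entry of the right-hand side gives $e^{-\theta_{1\sigma(i)}\iu}\,r_{ij}\,e^{\theta_{1\sigma(j)}\iu} = r_{ij}\,e^{(\theta_{1\sigma(j)}-\theta_{1\sigma(i)})\iu}$, so it suffices to show that $\varphi_{ij} = \theta_{1\sigma(j)} - \theta_{1\sigma(i)} \pmod{2\pi}$ for every edge $(v_i,v_j)\in E$. Here I would invoke Lemma \ref{lem:ba-path}, which guarantees that in a balanced graph the phase of a path from $V_1$ to a given node depends only on the endpoints, so $\theta_{1\sigma(i)}$ and $\theta_{1\sigma(j)}$ are well defined (connectivity of $G$ ensures such paths exist). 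Concatenating a path from $V_1$ to $v_i$ with the edge $(v_i,v_j)$ produces a path from $V_1$ to $v_j$, whose phase must equal $\theta_{1\sigma(j)}$; this forces $\theta_{1\sigma(i)} + \varphi_{ij} \equiv \theta_{1\sigma(j)}$, the desired relation. Entries with $r_{ij}=0$ vanish on both sides, and the diagonal matches trivially.

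For the antibalanced case I would apply the identical computation to the phases shifted by $\pi$ on each edge, under which the graph becomes balanced by definition; the only change is an extra additive $\pi$ in the cocycle relation, which makes $\mathbf{I}_1^*\bar{\mathbf{W}}\mathbf{I}_1 = -\mathbf{W}$ via the factor $e^{\pi\iu}=-1$, i.e.\ $\mathbf{W} = -\,\mathbf{I}_1^*\bar{\mathbf{W}}\mathbf{I}_1$. Substituting $\bar{\mathbf{W}} = \bar{\mathbf{U}}\bar\Lambda\bar{\mathbf{U}}^*$ into $\mathbf{W} = \mathbf{I}_1^*(\pm\bar{\mathbf{W}})\mathbf{I}_1 = (\mathbf{I}_1^*\bar{\mathbf{U}})(\pm\bar\Lambda)(\mathbf{I}_1^*\bar{\mathbf{U}})^*$ then exhibits a unitary eigendecomposition of $\mathbf{W}$, from which $\Lambda = \pm\bar\Lambda$ and $\mathbf{U} = \mathbf{I}_1^*\bar{\mathbf{U}}$ follow.

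The main obstacle is establishing the cocycle relation $\varphi_{ij} = \theta_{1\sigma(j)} - \theta_{1\sigma(i)}$ cleanly, i.e.\ verifying that the path-phases are genuinely well defined and mutually consistent across all edges; this is exactly where balance (via Lemma \ref{lem:ba-path}) and the super-node consistency condition (iii) of Theorem \ref{the:balance-part} are needed. I would also flag that, because unitary eigendecompositions are not unique under eigenvalue degeneracies or per-vector phase rescalings, the claimed equality $\mathbf{U} = \mathbf{I}_1^*\bar{\mathbf{U}}$ should be read as asserting that this particular choice diagonalises $\mathbf{W}$, rather than that every diagonalising basis of $\mathbf{W}$ arises this way.
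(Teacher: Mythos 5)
Your proposal is correct and follows essentially the same route as the paper: both establish the conjugation identity $\mathbf{W} = \pm\mathbf{I}_1^*\bar{\mathbf{W}}\mathbf{I}_1$ (plus sign for balanced, minus for antibalanced) and then substitute the eigendecomposition of $\bar{\mathbf{W}}$ to read off $\Lambda = \pm\bar{\Lambda}$ and $\mathbf{U} = \mathbf{I}_1^*\bar{\mathbf{U}}$. Your entrywise cocycle verification of that identity simply makes explicit what the paper obtains by citing Theorem \ref{the:balance-part}, and your closing caveat is apt: the paper's appeal to ``uniqueness'' of the unitary decomposition is not literally valid under eigenvalue degeneracies, so the conclusion should indeed be read as exhibiting $\mathbf{I}_1^*\bar{\mathbf{U}}$ as one valid diagonalising unitary for $\mathbf{W}$.
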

	\begin{proof}
		We first note that such $\mathbf{I}_1$ matrix exists in the balanced case since, if we consider each node subset as a super node, all paths from $V_1$ to $V_i$ have the same phase, by Lemma \ref{lem:ba-path}. Such $\mathbf{I}_1$ matrix exists in the antibalanced case since the graph becomes balanced after adding phase $\pi$ to each edge. 
		
		If $G$ is balanced, $\mathbf{W} = \mathbf{I}_1^*\bar{\mathbf{W}}\mathbf{I}_1$, by Theorem \ref{the:balance-part}. Then 
		\begin{align*}
			\mathbf{W} = \mathbf{I}_1^*\bar{\mathbf{W}}\mathbf{I}_1 = \mathbf{I}_1^*\bar{\mathbf{U}}\bar{\Lambda}\bar{\mathbf{U}}^{*}\mathbf{I}_1 = (\mathbf{I}_1^*\bar{\mathbf{U}})\bar{\Lambda}(\mathbf{I}_1^*\bar{\mathbf{U}})^*.
		\end{align*}
		It is the unitary decomposition of $\mathbf{W}$ by the uniqueness. Hence, $\Lambda = \bar{\Lambda}$ and $\mathbf{U} = \mathbf{I}_1^*\bar{\mathbf{U}}$.
		
		While if $G$ is antibalanced, $\mathbf{W} = -\mathbf{I}_1^*\bar{\mathbf{W}}\mathbf{I}_1$, by Theorem \ref{the:antibalance-part}. Then 
		\begin{align*}
			\mathbf{W} = -\mathbf{I}_1^*\bar{\mathbf{W}}\mathbf{I}_1 = -\mathbf{I}_1^*\bar{\mathbf{U}}\bar{\Lambda}\bar{\mathbf{U}}^{*}\mathbf{I}_1 = (\mathbf{I}_1^*\bar{\mathbf{U}})(-\bar{\Lambda})(\mathbf{I}_1^*\bar{\mathbf{U}})^*.
		\end{align*}
		It is the unitary decomposition of $\mathbf{W}$ by the uniqueness. Hence, $\Lambda = -\bar{\Lambda}$ and $\mathbf{U} = \mathbf{I}_1^*\bar{\mathbf{U}}$.
	\end{proof}
	
	\begin{proposition}
		Suppose $G$ is not bipartite, or is aperiodic. Let $\lambda_1\ge \lambda_2 \ge \dots \ge \lambda_n$ denote the eigenvalues of $\mathbf{W}$ with the associated eigenvectors $\mathbf{u}_1, \mathbf{u}_2, \dots, \mathbf{u}_n$, $\bar{\lambda}_1\ge \bar{\lambda}_2 \ge \dots \ge \bar{\lambda}_n$ denote the eigenvalues of $\bar{\mathbf{W}}$ with the associated eigenvectors $\bar{\mathbf{u}}_1, \bar{\mathbf{u}}_2, \dots, \bar{\mathbf{u}}_n$, and $\rho(\cdot)$ denotes the spectral radius. Let $\{V_i\}_{i=1}^{l_p}$ denote the corresponding partition for either balanced or antibalanced graphs. 
		\begin{enumerate}
			\item If $G$ is balanced, $\lambda_1 = \rho(\mathbf{W}) > 0$, and this eigenvalue is simple and the only one of the largest magnitude, where $\abs{\lambda_i} < \lambda_1,\ \forall i\ne 1$.
			\item If $G$ is antibalanced, $\lambda_n = -\rho(\mathbf{W}) < 0$, and this eigenvalue is simple and the only one of the largest magnitude, where $\abs{\lambda_i} < -\lambda_n,\ \forall i\ne n$.
		\end{enumerate}
		Meanwhile, the associated eigenvector, $\mathbf{u}_1$ for balanced graphs and $\mathbf{u}_n$ for antibalanced graphs, is the only one of the following pattern: it has phase $2\pi - \theta_{1i}$ if the underlying node is in node subset $V_i$, where $\theta_{1i}$ is as defined in Theorem \ref{the:transition-spect}.
		\label{pro:transition-spect-rho}
	\end{proposition}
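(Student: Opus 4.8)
The plan is to reduce everything to the Perron--Frobenius theory of the nonnegative symmetric matrix $\bar{\mathbf{W}}$ and then transport the conclusions through the explicit diagonalisation supplied by Theorem~\ref{the:transition-spect}. First I would record the eigenvalue correspondence. In the balanced case Theorem~\ref{the:transition-spect} gives $\lambda_i = \bar{\lambda}_i$ and $\mathbf{u}_i = \mathbf{I}_1^*\bar{\mathbf{u}}_i$, so the two spectra coincide; in the antibalanced case it gives $\Lambda = -\bar{\Lambda}$ with the same eigenvectors $\mathbf{I}_1^*\bar{\mathbf{u}}_i$, so the eigenvalue attached to $\mathbf{I}_1^*\bar{\mathbf{u}}_i$ is $-\bar{\lambda}_i$ and, after reordering, $\lambda_n = -\bar{\lambda}_1$ while $\mathbf{u}_n = \mathbf{I}_1^*\bar{\mathbf{u}}_1$. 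In either case $\rho(\mathbf{W}) = \max_i|\lambda_i| = \max_i|\bar{\lambda}_i|$, so it suffices to understand the extreme eigenpair of $\bar{\mathbf{W}}$.

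Next I would invoke Perron--Frobenius. Since $G$ is connected, so is $\bar{G}$, hence $\bar{\mathbf{W}} \ge 0$ is irreducible; the hypothesis that $G$ is non-bipartite, or aperiodic, makes $\bar{\mathbf{W}}$ primitive. The Perron--Frobenius theorem then yields that $\bar{\lambda}_1 = \rho(\bar{\mathbf{W}}) > 0$ is a simple eigenvalue, that it is strictly dominant, $|\bar{\lambda}_i| < \bar{\lambda}_1$ for all $i \ne 1$, and that the associated eigenvector $\bar{\mathbf{u}}_1$ may be chosen with strictly positive entries (and is, up to scaling, the unique eigenvector with this sign pattern). Combining with the correspondence above gives the two claimed statements directly: for balanced $G$, $\lambda_1 = \bar{\lambda}_1 = \rho(\mathbf{W}) > 0$ is simple with $|\lambda_i| < \lambda_1$; for antibalanced $G$, $\lambda_n = -\bar{\lambda}_1 = -\rho(\mathbf{W}) < 0$ is simple with $|\lambda_i| < -\lambda_n$.

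For the eigenvector pattern I would compute entrywise. Writing the $(j,j)$ entry of $\mathbf{I}_1^*$ as $\exp(-\theta_{1\sigma(j)}\iu)$ and using $(\bar{\mathbf{u}}_1)_j > 0$, the dominant eigenvector $\mathbf{u}_1$ (balanced) or $\mathbf{u}_n$ (antibalanced) equals $\mathbf{I}_1^*\bar{\mathbf{u}}_1$, whose $j$-th entry has phase $-\theta_{1\sigma(j)} \equiv 2\pi - \theta_{1\sigma(j)} \pmod{2\pi}$; this is exactly the asserted pattern, with $V_i$ determining $\theta_{1i}$. Uniqueness of the pattern follows by reversing the transformation: any eigenvector of $\mathbf{W}$ whose $j$-th entry has phase $2\pi-\theta_{1\sigma(j)}$ becomes, after left-multiplication by $\mathbf{I}_1$, a vector all of whose entries share a common phase, i.e.\ (up to a global scalar) a strictly positive real eigenvector of $\bar{\mathbf{W}}$; by Perron--Frobenius the only such eigenvector is $\bar{\mathbf{u}}_1$, forcing the original to be the dominant one.

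The main obstacle is not computational but lies in pinning down the \emph{strict} dominance $|\lambda_i| < \lambda_1$ (resp.\ $|\lambda_i| < -\lambda_n$): this is precisely where the non-bipartite/aperiodic hypothesis is essential, since a bipartite $\bar{G}$ would also carry $-\bar{\lambda}_1$ in its spectrum and the largest magnitude would be attained twice. I would therefore take care to state the primitivity reduction cleanly, to justify that connectivity of $G$ transfers to irreducibility of $\bar{\mathbf{W}}$, and to handle the index reversal in the antibalanced case so that $\lambda_n$, rather than $\lambda_1$, is correctly identified as the dominant eigenpair.
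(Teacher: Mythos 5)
Your proposal is correct and follows essentially the same route as the paper: invoke the Perron--Frobenius theorem for the irreducible, primitive nonnegative matrix $\bar{\mathbf{W}}$, then transport the dominant eigenpair through the correspondence $\Lambda = \pm\bar{\Lambda}$, $\mathbf{U} = \mathbf{I}_1^*\bar{\mathbf{U}}$ of Theorem \ref{the:transition-spect}, including the index reversal $\lambda_n = -\bar{\lambda}_1$ in the antibalanced case. Your explicit uniqueness argument for the phase pattern (multiplying by $\mathbf{I}_1$ to reduce to the uniqueness of the positive Perron vector) is the same reasoning the paper leaves implicit.
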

	\begin{proof}[Proof sketch]
		The results follow from Theorem \ref{the:transition-spect} and Perron-Frobenius theorem. 
	\end{proof}
	
	\begin{lemma}
		If $G$ is strictly unbalanced, then $\exists v_i,v_j\in V$ and $z\in \mathbb{Z}^+$ s.t.~there are two walks of length $z$ between nodes $v_i, v_j$ of different phases.
		\label{lem:unbalanced}
	\end{lemma}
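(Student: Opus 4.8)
The plan is to reduce the statement to a cleaner claim about closed walks, and then prove that claim from the simultaneous failure of balance and antibalance. First recall that reversing a walk negates its total phase modulo $2\pi$, since $\varphi_{ij} = -\varphi_{ji} \pmod{2\pi}$. Hence if $W_1$ and $W_2$ are two walks of the same length $z$ from $v_i$ to $v_j$, then $W_1$ followed by the reverse of $W_2$ is a closed walk at $v_i$ of even length $2z$ whose phase equals $\phi(W_1)-\phi(W_2)$; conversely, any closed walk $u_0,u_1,\dots,u_{2z}$ of even length can be cut at its midpoint $u_z$ into two length-$z$ walks from $u_0$ to $u_z$ (the second half reversed), whose phases differ by exactly the phase of the closed walk. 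Thus ``two equal-length walks of different phases between some pair'' is equivalent to ``a closed walk of even length and nonzero phase,'' so it suffices to produce such a walk whenever $G$ is strictly unbalanced.

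To produce it, I would package the two relevant invariants of a closed walk into a single homomorphism. Fixing a basepoint $v$ (legitimate since $G$ is connected), I send each closed walk $W$ at $v$ to the pair $(\ell(W) \bmod 2,\ \phi(W)) \in \mathbb{Z}/2\mathbb{Z} \times (\mathbb{R}/2\pi\mathbb{Z})$, where $\ell(W)$ is its length; concatenation maps to addition and reversal to inversion, so the image $H$ is a subgroup. In this language, ``there is an even closed walk with nonzero phase'' says exactly that $H$ meets $\{0\}\times(\mathbb{R}/2\pi\mathbb{Z})$ nontrivially, and the target becomes its contrapositive: if $H \cap (\{0\}\times(\mathbb{R}/2\pi\mathbb{Z})) = \{(0,0)\}$, then $G$ is balanced or antibalanced, contradicting Definition \ref{def:balance}.

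The contrapositive then splits according to whether $G$ is bipartite. If $G$ is bipartite, every closed walk has even length, so $H \subseteq \{0\}\times(\mathbb{R}/2\pi\mathbb{Z})$, and the hypothesis forces $H=\{(0,0)\}$; every cycle has phase $0$ and $G$ is balanced. If $G$ is not bipartite, it contains an odd closed walk, so the length-parity projection $H \to \mathbb{Z}/2\mathbb{Z}$ is surjective, while its kernel is precisely $H \cap (\{0\}\times(\mathbb{R}/2\pi\mathbb{Z})) = \{(0,0)\}$; hence this projection is an isomorphism and $H = \{(0,0),(1,c)\}$ for a single phase $c$. Closure gives $(1,c)+(1,c)=(0,2c)\in H$, so $2c\equiv 0$ and $c\in\{0,\pi\}$: the case $c=0$ makes every cycle have phase $0$ (balanced), while for $c=\pi$ every cycle of length $m$ has phase $m\pi \bmod 2\pi$, so adding $\pi$ to each composing edge yields total phase $2m\pi\equiv 0$ (antibalanced). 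Either conclusion contradicts strict unbalance, closing the reduction.

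I expect the main obstacle to be the bookkeeping around the phase-$\pi$ case: a cycle of nonzero phase need not yield an even nonzero-phase closed walk by naive doubling, since $2\pi\equiv 0$, so one cannot simply repeat an odd unbalancing cycle. The subgroup $H$ is exactly the device that disentangles this interaction between length parity and phase and rules the case out uniformly. A secondary, routine point I would invoke is the standard fact that the phases of simple cycles generate the phases of all closed walks, so that ``every cycle has phase $0$'' upgrades to ``every closed walk has phase $0$,'' which is what lets me read the balanced and antibalanced conclusions directly off $H$.
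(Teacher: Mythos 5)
Your proof is correct, but it takes a genuinely different route from the paper's. The paper splits on periodicity: in the periodic (hence bipartite) case it cuts each even cycle at its midpoint, and in the aperiodic case it chains two auxiliary lemmas --- first showing that phase-agreement of equal-length walks forces phase-agreement of all even-length walks (this step leans on an external result guaranteeing even-length walks between every pair of nodes in an aperiodic graph), and then constructing an explicit partition of $V$ from the phases of even-length walks out of a fixed node, verifying edge-phase properties subset by subset so as to invoke the structural theorems (Theorems \ref{the:balance-part} and \ref{the:antibalance-part}). You instead reduce the statement once and for all to the existence of an even closed walk of nonzero phase (concatenate-with-reversal in one direction, cut-at-midpoint in the other), and then analyse the subgroup $H \le \mathbb{Z}/2\mathbb{Z} \times (\mathbb{R}/2\pi\mathbb{Z})$ of parity--phase pairs of closed walks at a basepoint: bipartiteness forces $H$ trivial (balanced), while non-bipartiteness forces $H = \{(0,0),(1,c)\}$ with $2c \equiv 0$, so $c \in \{0,\pi\}$, giving balance or antibalance directly from Definition \ref{def:balance}. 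Your argument is more self-contained (no citation for even-length walks, no structural theorems, no explicit partition) and the group structure handles the parity--phase interaction uniformly, precisely the bookkeeping the paper does by hand in its second auxiliary lemma; what the paper's route buys in exchange is the explicit partition witnessing balance/antibalance, which is reused elsewhere.

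One small repair: your closing remark invokes the generation fact in the wrong direction. What your argument actually needs is not that cycle phases generate closed-walk phases, but the converse bridge --- that every cycle $C$, wherever it sits, has its pair $(\ell(C) \bmod 2, \phi(C))$ realised in $H$. This follows by conjugation: connectivity gives a path $P$ from the basepoint to $C$, and $P$ followed by $C$ followed by the reversal of $P$ is a closed walk at the basepoint with the same phase as $C$ and the same length parity. With that stated, the balanced and antibalanced conclusions do read off $H$ exactly as you claim.
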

	\begin{proof}[Proof sketch]
		When $G$ is periodic, thus bipartite, we can clearly see that if all walks between the same nodes of the same length have the same phase, $G$ is balanced. When $G$ is aperiodic, we can still prove by contradiction, where if all walks between the same nodes of the same lengths have the same phase, then all walks between the same nodes of even lengths have the same phase, and then $G$ is either balanced or antibalanced. 
	\end{proof}
	\begin{theorem}
		$G$ is strictly unbalanced if and only if $\rho(\mathbf{W}) < \rho(\bar{\mathbf{W}})$.
		\label{the:strict-unb-rho}
	\end{theorem}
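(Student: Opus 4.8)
The plan is to first establish the universal bound $\rho(\mathbf{W}) \le \rho(\bar{\mathbf{W}})$, valid for any matrix and its entrywise modulus. Indeed $\abs{(\mathbf{W}^k)_{ij}} \le (\bar{\mathbf{W}}^k)_{ij}$ by the triangle inequality, so a submultiplicative norm that is monotone under entrywise modulus (e.g.\ $\norm{A} = \sum_{ij}\abs{A_{ij}}$) gives $\norm{\mathbf{W}^k} \le \norm{\bar{\mathbf{W}}^k}$, and Gelfand's formula $\rho = \lim_k \norm{\cdot^{\,k}}^{1/k}$ yields the claim. Granting this, the theorem reduces to showing that equality $\rho(\mathbf{W}) = \rho(\bar{\mathbf{W}})$ holds precisely when $G$ is \emph{not} strictly unbalanced. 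One direction is immediate: if $G$ is balanced or antibalanced, Theorem \ref{the:transition-spect} gives $\Lambda = \pm\bar{\Lambda}$, so the two matrices share the same eigenvalue moduli and $\rho(\mathbf{W}) = \rho(\bar{\mathbf{W}})$. Contrapositively, $\rho(\mathbf{W}) < \rho(\bar{\mathbf{W}})$ forces strict unbalance, which is the $(\Leftarrow)$ direction of the statement.

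For the $(\Rightarrow)$ direction I would argue by contradiction: assume $G$ is strictly unbalanced yet $\rho(\mathbf{W}) = \rho(\bar{\mathbf{W}}) =: \bar{\lambda}_1$. Since $G$ is connected, $\bar{\mathbf{W}}$ is irreducible and nonnegative, with $\bar{\lambda}_1 > 0$ its simple Perron eigenvalue and strictly positive left and right Perron vectors. Let $\lambda$ be an eigenvalue of $\mathbf{W}$ with $\abs{\lambda} = \bar{\lambda}_1$ and eigenvector $\mathbf{x}$; because $\mathbf{W}$ is Hermitian, $\lambda$ is real, so $\lambda = \pm\bar{\lambda}_1$. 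Taking entrywise moduli in $\mathbf{W}\mathbf{x} = \lambda\mathbf{x}$ and applying the triangle inequality gives $\bar{\lambda}_1\abs{x_i} = \abs{\sum_j W_{ij}x_j} \le \sum_j r_{ij}\abs{x_j} = (\bar{\mathbf{W}}\abs{\mathbf{x}})_i$, that is, $\bar{\mathbf{W}}\abs{\mathbf{x}} \ge \bar{\lambda}_1\abs{\mathbf{x}}$. Pairing this with the positive left Perron vector of $\bar{\mathbf{W}}$ forces the inequality to be an equality, so $\abs{\mathbf{x}}$ is itself a Perron vector and in particular $\abs{\mathbf{x}} > 0$.

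The equality $\bar{\mathbf{W}}\abs{\mathbf{x}} = \bar{\lambda}_1\abs{\mathbf{x}}$ means the triangle inequality is tight in every coordinate, so for each $i$ all nonzero terms $W_{ij}x_j$ share a common argument. Writing $x_j = \abs{x_j}e^{\iu\psi_j}$ and $\lambda = \bar{\lambda}_1 e^{\iu\beta}$ with $\beta\in\{0,\pi\}$, substituting back into the eigen-equation pins this common argument and yields the gauge relation $\varphi_{ij} \equiv \psi_i - \psi_j + \beta \pmod{2\pi}$ on every edge. Summing around any cycle of length $m$, the $\psi$ terms telescope and the cycle phase equals $m\beta$. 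If $\beta = 0$ every cycle has phase $0$, so $G$ is balanced; if $\beta = \pi$, then adding $\pi$ to each of the $m$ edges turns the cycle phase into $m\pi + m\pi \equiv 0$, so $G$ is antibalanced. Either conclusion contradicts strict unbalance, completing the argument.

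I expect the crux to be the passage from the purely \emph{local} entrywise domination $\abs{W_{ij}} = r_{ij}$ to a \emph{global} strict spectral gap: a single cancellation, such as the two equal-length walks of differing phase supplied by Lemma \ref{lem:unbalanced}, only shows $\abs{(\mathbf{W}^z)_{ij}} < (\bar{\mathbf{W}}^z)_{ij}$ at one power and need not, on its own, separate the two spectral radii. The Perron--Frobenius equality analysis is what makes this jump rigorous, and the essential structural ingredient is Hermiticity, which forces $\lambda$ to be real and hence $\beta\in\{0,\pi\}$ --- exactly the dichotomy between balance and antibalance. A more combinatorial alternative would feed Lemma \ref{lem:unbalanced} into the trace identity $\mathrm{tr}(\mathbf{W}^{2k}) = \sum_{i,j}\abs{(\mathbf{W}^k)_{ij}}^2 \le \sum_{i,j}(\bar{\mathbf{W}}^k)_{ij}^2 = \mathrm{tr}(\bar{\mathbf{W}}^{2k})$ and argue that the cancellation propagates to all large powers, but controlling that propagation is precisely the delicate point the spectral argument sidesteps.
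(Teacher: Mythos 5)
Your proof is correct, and the easy direction (balance or antibalance forces $\rho(\mathbf{W}) = \rho(\bar{\mathbf{W}})$ via Theorem \ref{the:transition-spect}) coincides with the paper's; but your hard direction runs along a genuinely different route. The paper argues combinatorially: it invokes Lemma \ref{lem:unbalanced} to produce two equal-length walks of different phases, claims that for sufficiently large $z_2$ this forces strict entrywise domination $\abs{(\mathbf{W}^{z_2})_{hl}} < (\bar{\mathbf{W}}^{z_2})_{hl}$ for all pairs, and then compares matrix $2$-norms (which equal spectral radii for Hermitian matrices) to conclude $\rho(\mathbf{W})^{z_2} < \rho(\bar{\mathbf{W}})^{z_2}$. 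You instead prove the contrapositive through the equality case of Perron--Frobenius: Hermiticity pins the extremal eigenvalue to $\pm\bar{\lambda}_1$, pairing against the positive left Perron vector upgrades $\bar{\mathbf{W}}\abs{\mathbf{x}} \ge \bar{\lambda}_1\abs{\mathbf{x}}$ to equality so that $\abs{\mathbf{x}} > 0$, tightness of the triangle inequality yields the gauge relation $\varphi_{ij} \equiv \psi_i - \psi_j + \beta \pmod{2\pi}$ with $\beta\in\{0,\pi\}$, and telescoping around cycles gives balance ($\beta = 0$) or antibalance ($\beta = \pi$). This is in effect Wielandt's equality theorem specialised to the Hermitian setting, and it buys you two things: independence from Lemma \ref{lem:unbalanced} and its supporting appendix lemmas (the periodic/aperiodic case analysis), and an explicit switching function $\psi$ realising the balanced structure. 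It also quietly avoids a delicate point you correctly flag: the paper's claim of strict domination in \emph{every} entry at a single power cannot hold verbatim when $G$ is bipartite (parity makes some entries of $\bar{\mathbf{W}}^{z_2}$ vanish), so the paper's norm comparison needs a small repair there, whereas your equality analysis needs no such propagation step. What the paper's route buys in exchange is elementarity --- walk counting and the definition of the $2$-norm, with no appeal to the Perron--Frobenius equality machinery.
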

	\begin{proof}[Proof sketch]
		If $\rho(\mathbf{W}) < \rho(\bar{\mathbf{W}})$, $G$ is strictly unbalanced  by Theorem \ref{the:transition-spect}. If $G$ is strictly unbalanced, we can find two walks between the same nodes of the same length but different phases by Lemma \ref{lem:unbalanced}, which further contracts the spectral radius by the definition of matrix 2-norm. 
	\end{proof}
	
	\section{Random walks}\label{sec:randwalk}
	Characterised by distinct structural properties, the classifications we have introduced also provide a way to understand the dynamics happening on complex-weighted networks. Specifically here, we consider random walks as an important example. 
	
	\subsection{Definition}\label{sec:randwalk-def}
	In this section, we extend random walks to networks with complex weights. A natural way to define this process is to consider a population made of different types of walkers, each having its own phase associated to it. The dynamics is then driven by the jump of the walkers and by the change of their phases, depending on the complex value of the edge that has been traversed. In general, the system is described by the densities on node $v_i$, $x_i^\theta$, for each possible $\theta\in[0, 2\pi)$.  
	For the simplicity of illustration, we first consider the unweighted case where $r_{ij} = 1,\, \forall (v_i,v_j)\in E$. As we will see, to describe the random walk dynamics, it is convenient to rewrite the adjacency matrix as $\mathbf{A} = \int_{0}^{2\pi}e^{\iu \theta}\mathbf{A}^\theta d\theta$, where the integration is applied elementwise, and $\mathbf{A}^\theta = (A^\theta_{ij})$ with $A^\theta_{ij} = \abs{A_{ij}}\delta(\theta - \varphi_{ij})$ encoding the presence of an edge with phase $\theta$ from $v_i$ to $v_j$. Here, $\delta(\cdot)$ denotes the direct delta function. 
	Note also that $\abs{\mathbf{A}} = \int_{0}^{2\pi}\mathbf{A}^\theta d\theta$, as each edge is endowed with one single phase.
	
	To define the random walk process in case of complex weights, one key step is to determine how different walkers interact with edges with different phases. Here, we assume that after traversing an edge, walkers add the phase of the edge to the phase they originally have. That is, a walker of phase $\theta$ after going through an edge of phase $\varphi$ has phase $\theta + \varphi$. For illustrative purposes, we also incorporate the periodicity of phases into the density, where $x_i^{\theta + 2k\pi} = x_i^\theta,\, \forall k\in \mathbb{Z}$. Hence, for walkers to have phase $\theta$ on node $v_j$, 
	\begin{align}
		x^\theta_j(t+1) = \sum_{i}\frac{1}{d_i}\int_{0}^{2\pi}A_{ij}^{\varphi} x_{i}^{\theta - \varphi}(t)d\varphi \coloneqq \mathcal{A}_j(\theta; t+1).
		\label{equ:rw-x_j^theta}
	\end{align}
	The whole dynamics is thus governed by the above set of operators, $\{\mathcal{A}_j\}_{v_j\in V}$.
	
	Now we know the densities of walkers of different phases, and we can think of different quantities to summarise the dynamics. For example, we can ignore the phase of walkers and only consider the number of walkers on each node $v_j$, $n_j$, which can be obtained by directly integrating Eq.~\eqref{equ:rw-x_j^theta} over all possible phases $\theta$, 
	\begin{align*}
		n_j(t+1) &= \int_{0}^{2\pi} x_{j}^{\theta}(t+1)d\theta
		= \int_{0}^{2\pi} \sum_{i}\frac{1}{d_i}\int_{0}^{2\pi}A_{ij}^{\varphi} x_{i}^{\theta - \varphi}(t)d\varphi d\theta
		= \int_{0}^{2\pi} \sum_{i}\frac{1}{d_i}\abs{A_{ij}} x_{i}^{\theta - \varphi_{ij}}(t)d\theta\\
		&= \sum_{i}\frac{\abs{A_{ij}}}{d_i}\int_{0}^{2\pi}x_{i}^{\theta - \varphi_{ij}}(t)d\theta = \sum_{i}\frac{\abs{A_{ij}}}{d_i} n_i(t).
	\end{align*}
	This retrieves the classic random walk where only the existence matters and the transition matrix is $\bar{\mathbf{P}} = \mathbf{D}^{-1}\abs{\mathbf{A}}$. In contrast, we can take into account the phase information by averaging the phases of the random walkers on each node by their densities, i.e., to integrate over all possible phases $\theta$ multiplying by their density as in Eq.~\eqref{equ:rw-x_j^theta}, 
	\begin{align*}
		x_j(t+1) & = \int_{0}^{2\pi} e^{\iu \theta} x_{j}^{\theta}(t+1)d\theta
		= \sum_{i}\frac{1}{d_i}\int_{0}^{2\pi}\left(\int_{0}^{2\pi}e^{\iu \theta}A_{ij}^{\varphi} d\theta\right) x_{i}^{\theta - \varphi}(t)d\varphi\\
		&= \sum_{i}\frac{A_{ij}}{d_i}\int_{0}^{2\pi}e^{\iu (\theta - \varphi)}x_{i}^{\theta - \varphi}(t)d\varphi
		= \sum_{i}\frac{A_{ij}}{d_i} x_i(t).
	\end{align*}
	This gives the transition matrix for complex-weighted networks $\mathbf{P} = \mathbf{D}^{-1}\mathbf{A}$. We note that the random walks we have discussed here can be extended to weighted networks similarly, and also to the continuous-time setting \cite{Masuda2017RW}, by assuming that walkers jump at continuous rate or at a rate proportional to the node degree.  
	
	\subsection{Dynamical properties}\label{sec:randwalk-dynpro}
	We further characterise the properties of random walks on complex-weighted networks via the complex transition matrix $\mathbf{P}$. Note that $\mathbf{P} = \mathbf{D}^{-1}\mathbf{W}$ is not necessarily Hermitian. However, it is similar to a Hermitian matrix $\mathbf{P}_{h} = \mathbf{D}^{-1/2}\mathbf{W}\mathbf{D}^{-1/2}$, where $\mathbf{P} = \mathbf{D}^{-1/2}\mathbf{P}_{h}\mathbf{D}^{1/2}$, and for each eigenpair $(\lambda, \mathbf{D}^{1/2}\mathbf{x})$ of $\mathbf{P}_{h}$, $(\lambda, \mathbf{x})$ is also an eigenpair of $\mathbf{P}$. Hence, the results in section \ref{sec:net-spect} can still be applied, but indirectly through $\mathbf{P}_{h}$. We denote the eigenvalues by $\lambda_1\ge \dots \ge \lambda_n$ with the associated (right) eigenvectors of $\mathbf{P}$ by $\mathbf{u}_1, \dots, \mathbf{u}_n$, thus the eigenvectors of $\mathbf{P}_{h}$ are $\mathbf{D}^{1/2}\mathbf{u}_1, \dots, \mathbf{D}^{1/2}\mathbf{u}_n$. For illustrative purposes, in this section, we only assume $\mathbf{D}^{1/2}\mathbf{u}_1, \dots, \mathbf{D}^{1/2}\mathbf{u}_n$ to be orthonormal. We denote the counterparts ignoring the phase information as $\bar{\mathbf{P}}$ and $\bar{\mathbf{P}}_{h}$, and their eigenvalues as $\bar{\lambda}_1\ge \dots \ge \bar{\lambda}_n$. We refer to section \ref{sec:sm-randwalk} in the Appendix for more features of the random walks.
	
	\paragraph{Balanced networks.} We start with the case when the complex-weighted network is structurally balanced, and will show that a steady state is achievable.  
	\begin{proposition}
		The complex transition matrix $\mathbf{P}$ has eigenvalue $1$ if and only if $G$ is balanced. 
		\label{pro:balance-lambda}
	\end{proposition}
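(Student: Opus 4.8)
The plan is to push everything onto the Hermitian matrix $\mathbf{P}_{h} = \mathbf{D}^{-1/2}\mathbf{W}\mathbf{D}^{-1/2}$, which shares its spectrum with $\mathbf{P}$ by the similarity already noted in the text; hence $1$ is an eigenvalue of $\mathbf{P}$ if and only if it is an eigenvalue of $\mathbf{P}_{h}$. The observation I would exploit is that $\mathbf{P}_{h}$ is itself the Hermitian weight matrix of a rescaled graph $G' = (V, E, \mathbf{P}_{h})$: since $(\mathbf{P}_{h})_{ij} = W_{ij}/\sqrt{d_i d_j}$ only rescales each magnitude by the positive factor $1/\sqrt{d_i d_j}$ and leaves every phase $\varphi_{ij}$ untouched, the graph $G'$ has exactly the same cycle phases, and therefore the same balance type, as $G$. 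Its magnitude-only counterpart is $\overline{\mathbf{P}_{h}} = \mathbf{D}^{-1/2}\bar{\mathbf{W}}\mathbf{D}^{-1/2}$, which is similar (via $\mathbf{D}^{1/2}$) to the row-stochastic matrix $\bar{\mathbf{P}} = \mathbf{D}^{-1}\bar{\mathbf{W}}$; the latter satisfies $\bar{\mathbf{P}}\mathbf{1} = \mathbf{1}$, so $\rho(\overline{\mathbf{P}_{h}}) = \rho(\bar{\mathbf{P}}) = 1$ with $1$ in the spectrum. All the spectral results of section \ref{sec:net-spect} then apply verbatim to $G'$.

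For the implication that balance forces eigenvalue $1$, I would apply Theorem \ref{the:transition-spect} to $G'$: since $G'$ is balanced, the eigenvalues of $\mathbf{P}_{h}$ and of $\overline{\mathbf{P}_{h}}$ coincide ($\Lambda = \bar{\Lambda}$). As $1$ lies in the spectrum of $\overline{\mathbf{P}_{h}}$, it lies in that of $\mathbf{P}_{h}$, hence of $\mathbf{P}$.

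For the converse, suppose $1$ is an eigenvalue of $\mathbf{P}$, so $\rho(\mathbf{P}_{h}) \ge 1$. The general bound $\rho(\mathbf{W}) \le \rho(\bar{\mathbf{W}})$, valid for any matrix against its entrywise magnitude and sharpened in Theorem \ref{the:strict-unb-rho}, applied to $G'$ gives $\rho(\mathbf{P}_{h}) \le \rho(\overline{\mathbf{P}_{h}}) = 1$. Hence $\rho(\mathbf{P}_{h}) = \rho(\overline{\mathbf{P}_{h}})$, and Theorem \ref{the:strict-unb-rho} rules out strict unbalance, so $G'$ (equivalently $G$) is balanced or antibalanced; if balanced we are done.

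The step I expect to be the main obstacle is eliminating the genuinely antibalanced case. Here I would argue that if $G'$ is antibalanced, Theorem \ref{the:transition-spect} gives $\Lambda = -\bar{\Lambda}$ for $\mathbf{P}_{h}$ against $\overline{\mathbf{P}_{h}}$, so the eigenvalue $1$ of $\mathbf{P}_{h}$ forces $-1$ into the spectrum of $\overline{\mathbf{P}_{h}}$, and hence of the irreducible (by connectedness of $G$) row-stochastic matrix $\bar{\mathbf{P}}$. An irreducible stochastic matrix has $-1$ as an eigenvalue only when it is periodic with period two, i.e.\ when $\bar{G}$ is bipartite; thus $G$ has no odd cycle. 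But for a graph all of whose cycles are even, adding $\pi$ to each of the even number of composing edges shifts the phase of every cycle by a multiple of $2\pi$, so balance and antibalance coincide (this also matches the bipartite conclusion of Proposition \ref{pro:struct-ba-bi}). Therefore $G$ is in fact balanced, completing the equivalence.
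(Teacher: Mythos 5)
Your proof is correct and follows essentially the same route as the paper: balance yields eigenvalue $1$ via Theorem \ref{the:transition-spect}, and the converse rules out strict unbalance by Theorem \ref{the:strict-unb-rho} and pure antibalance by a bipartiteness argument. The two places where you are more explicit than the paper --- viewing $\mathbf{P}_h$ as the Hermitian weight matrix of a rescaled graph $G'$ with identical cycle phases so that the spectral theorems of section \ref{sec:net-spect} transfer rigorously, and arguing directly that antibalance plus bipartiteness forces balance (the paper cites Proposition \ref{pro:struct-ba-bi}, which literally states the converse implication) --- are sound refinements of the same argument rather than a different approach.
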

	\begin{proof}
		When $G$ is balanced, by Theorem \ref{the:transition-spect}, $\mathbf{P}_{h}$ shares the same spectrum as $\bar{\mathbf{P}}_{h}$, thus $\lambda_1 = \bar{\lambda}_1 = 1$, and $\mathbf{P}$ has eigenvalue $1$. 
		
		We then consider the case when $\mathbf{P}$ has eigenvalue $1$, i.e., $\lambda_1 = 1$. Suppose $G$ is strictly unbalanced, then by Theorem \ref{the:strict-unb-rho}, $\rho(\mathbf{P}) = \rho(\mathbf{P}_h) < \rho(\bar{\mathbf{P}}_h) = \rho(\bar{\mathbf{P}}) = 1$, which leads to contradiction. Suppose $G$ is antibalanced, then by Theorem \ref{the:transition-spect}, $\bar{\lambda}_n = -\lambda_1 = -1$, thus $\bar{G}$ is bipartite and so is $G$. From Proposition \ref{pro:struct-ba-bi}, $G$ is also balanced. Hence, $G$ is balanced.
	\end{proof}
	We also note that $\mathbf{L}_{rw} = \mathbf{I} - \mathbf{P}$, hence $1$ being an eigenvalue of $\mathbf{P}$ is equivalent to $0$ being an eigenvalue of $\mathbf{L}_{rw}$, and further of $\mathbf{L}$. The latter has been shown as a sufficient and necessary condition for the complex-weighted graph to be balanced \cite{Lange2015MagL}. 
	
	\begin{proposition}
		If $G$ is balanced and is not bipartite, then the steady state is $\mathbf{x}^* = (x_j^*)$ where
		\begin{displaymath}
			x_j^* = \exp(\theta_{1\sigma(j)}\iu)(\mathbf{x}(0)^*\tilde{\mathbf{1}}_1)d_j/(2m)
		\end{displaymath}
		where $\mathbf{x}(0) = (x_i(0))$ is the initial state vector with $\sum_i\abs{x_i(0)} = 1$, $2m = \sum_{j}d_j$, $\mathbf{I}_1$, $\theta_{hl}$ and $\sigma(\cdot)$ are as defined in Theorem \ref{the:transition-spect}, and $\tilde{\mathbf{1}}_1$ is the diagonal vector of $\mathbf{I}_1^*$. 
		\label{pro:balance-steady}
	\end{proposition}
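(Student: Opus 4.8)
The plan is to read the steady state off the long-time behaviour of the linear iteration driving the aggregated complex density. The update derived in Section~\ref{sec:randwalk-def} reads, in its weighted form, $x_j(t+1) = \sum_i \frac{W_{ij}}{d_i}x_i(t) = \sum_i P_{ij}x_i(t)$, i.e. $\mathbf{x}(t+1) = \mathbf{P}^{\top}\mathbf{x}(t)$ and hence $\mathbf{x}(t) = (\mathbf{P}^{\top})^{t}\mathbf{x}(0)$; the steady state is whatever survives as $t\to\infty$. The first ingredient is a strict spectral gap. Working through the Hermitian matrix $\mathbf{P}_h = \mathbf{D}^{-1/2}\mathbf{W}\mathbf{D}^{-1/2}$, which is similar to $\mathbf{P}$ and, via Theorem~\ref{the:transition-spect}, unitarily similar to $\bar{\mathbf{P}}_h$ through the diagonal $\mathbf{I}_1$ and hence cospectral with it, Proposition~\ref{pro:transition-spect-rho} applies and gives that $\lambda_1 = 1$ is simple with $\abs{\lambda_k} < 1$ for every $k \ne 1$. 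It is precisely the hypothesis that $G$ is not bipartite that forbids a second dominant eigenvalue $-1$, so that $(\mathbf{P}^{\top})^{t}$ converges to a rank-one operator.

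Next I would pin down the dominant eigenvector. By Theorem~\ref{the:transition-spect} the eigenvectors of $\mathbf{P}_h$ are $\mathbf{I}_1^{*}\bar{\mathbf{w}}_k$ with $\bar{\mathbf{w}}_k$ those of $\bar{\mathbf{P}}_h$; since $\bar{\mathbf{P}}_h$ is the symmetric normalisation of the non-negative matrix $\bar{\mathbf{W}}$, its Perron eigenvector for $\bar{\lambda}_1 = 1$ is $\bar{\mathbf{w}}_1 = \mathbf{D}^{1/2}\mathbf{1}/\sqrt{2m}$, as one checks directly from $\bar{\mathbf{P}}_h\mathbf{D}^{1/2}\mathbf{1} = \mathbf{D}^{-1/2}\bar{\mathbf{W}}\mathbf{1} = \mathbf{D}^{1/2}\mathbf{1}$ together with $\norm{\mathbf{D}^{1/2}\mathbf{1}}^{2} = \sum_j d_j = 2m$. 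Undoing the similarity, $\mathbf{u}_1 = \mathbf{D}^{-1/2}\mathbf{I}_1^{*}\bar{\mathbf{w}}_1$, and since the diagonal $\mathbf{I}_1$ commutes with $\mathbf{D}^{\pm 1/2}$ this collapses to the clean form $\mathbf{u}_1 = \mathbf{I}_1^{*}\mathbf{1}/\sqrt{2m} = \tilde{\mathbf{1}}_1/\sqrt{2m}$, whose $j$-th entry carries phase $2\pi - \theta_{1\sigma(j)}$, in agreement with the eigenvector pattern of Proposition~\ref{pro:transition-spect-rho}.

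The decisive and, I expect, most delicate step is that $\mathbf{P}^{\top}$ is not Hermitian, so the limit cannot be taken through an orthogonal projector; I would instead use the biorthogonal left/right eigenbasis. The right eigenvectors of $\mathbf{P}^{\top}$ (the left eigenvectors of $\mathbf{P}$) are $\mathbf{r}_k = \mathbf{D}\overline{\mathbf{u}_k}$, since $\mathbf{P}^{\top}\mathbf{D}\overline{\mathbf{u}_k} = \overline{\mathbf{W}\mathbf{u}_k} = \lambda_k \mathbf{D}\overline{\mathbf{u}_k}$ using that each $\lambda_k$ is real, while the matching left eigenvectors are the $\mathbf{u}_k$; orthonormality of $\mathbf{D}^{1/2}\mathbf{u}_k$ translates into the biorthonormality $\mathbf{u}_j^{\top}\mathbf{r}_k = \delta_{jk}$. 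The expansion $(\mathbf{P}^{\top})^{t} = \sum_k \lambda_k^{t}\,\mathbf{r}_k\mathbf{u}_k^{\top}$ then collapses as $t\to\infty$ to the single term $\mathbf{D}\overline{\mathbf{u}_1}\mathbf{u}_1^{\top}$, whence $\mathbf{x}^{*} = (\mathbf{u}_1^{\top}\mathbf{x}(0))\,\mathbf{D}\overline{\mathbf{u}_1}$. Substituting $\overline{u_{1j}} = \exp(\theta_{1\sigma(j)}\iu)/\sqrt{2m}$ gives componentwise $x_j^{*} = \exp(\theta_{1\sigma(j)}\iu)\,(\mathbf{x}(0)^{*}\tilde{\mathbf{1}}_1)\,d_j/(2m)$, which is the claimed formula. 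The one point I would treat carefully is the exact placement of the complex conjugate in the scalar overlap coefficient: the biorthogonal pairing produces a bilinear contraction of $\mathbf{x}(0)$ against $\tilde{\mathbf{1}}_1$, so I would verify that this indeed matches $\mathbf{x}(0)^{*}\tilde{\mathbf{1}}_1$ and, using $\sum_i\abs{x_i(0)} = 1$ and $2m = \sum_j d_j$, that the normalisation is as stated.
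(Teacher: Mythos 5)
Your argument runs parallel to the paper's for most of its length: both establish the spectral gap via Propositions \ref{pro:balance-lambda} and \ref{pro:transition-spect-rho}, identify the dominant eigenvector as $\mathbf{u}_1 = \tilde{\mathbf{1}}_1/\sqrt{2m}$ (you derive this more explicitly than the paper, which deduces $\mathbf{u}_1 = c\tilde{\mathbf{1}}_1$ and fixes $c$ by normalisation), and collapse the propagator to a rank-one limit. The paper avoids your biorthogonal machinery by propagating the conjugated row vector, $\mathbf{x}(t)^* = \mathbf{x}(0)^*\mathbf{P}^t = \mathbf{x}(0)^*\mathbf{D}^{-1/2}\mathbf{P}_h^t\mathbf{D}^{1/2}$, so that only the orthogonal projector $\lim_{t\to\infty}\mathbf{P}_h^t = (\mathbf{D}^{1/2}\mathbf{u}_1)(\mathbf{D}^{1/2}\mathbf{u}_1)^*$ of the Hermitian matrix $\mathbf{P}_h$ is needed; your expansion $(\mathbf{P}^\top)^t = \sum_k\lambda_k^t\,\mathbf{r}_k\mathbf{u}_k^\top$ with $\mathbf{r}_k = \mathbf{D}\overline{\mathbf{u}_k}$ is a correct, if heavier, substitute (it also follows at once by transposing $\mathbf{P}^t = \mathbf{D}^{-1/2}\mathbf{P}_h^t\mathbf{D}^{1/2}$ and using $\mathbf{P}_h^\top = \overline{\mathbf{P}_h}$).

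The genuine problem is precisely the step you flagged and deferred: the verification fails. Your convention $\mathbf{x}(t+1) = \mathbf{P}^\top\mathbf{x}(t)$ produces the bilinear coefficient $\tilde{\mathbf{1}}_1^\top\mathbf{x}(0) = \sum_i e^{-\iu\theta_{1\sigma(i)}}x_i(0)$, while the proposition asserts the sesquilinear one, $\mathbf{x}(0)^*\tilde{\mathbf{1}}_1 = \sum_i e^{-\iu\theta_{1\sigma(i)}}\overline{x_i(0)}$. These coincide only when $\mathbf{x}(0)$ is real; for a general complex initial condition they are different numbers, and neither is the conjugate of the other. The source of the mismatch is which matrix drives the iteration: the paper's proof takes the state to evolve as $\mathbf{x}(t)^* = \mathbf{x}(0)^*\mathbf{P}^t$, i.e.\ $\mathbf{x}(t+1) = \mathbf{P}^*\mathbf{x}(t)$, and $\mathbf{P}^* = \overline{\mathbf{P}^\top}$ is the entrywise conjugate of your iteration matrix (under it a walker acquires phase $\varphi_{ji}$ rather than $\varphi_{ij}$). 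So, as written, your proof establishes a correct steady-state formula for the $\mathbf{P}^\top$ dynamics, but not the formula in the statement; to land on the latter you must commit to the paper's propagation convention at the outset, after which the Hermitian projector argument yields $\mathbf{x}(0)^*\tilde{\mathbf{1}}_1$ directly and no biorthogonal detour is required. (The tension between that convention and the update rule $x_j(t+1) = \sum_i A_{ij}x_i(t)/d_i$ of section \ref{sec:randwalk-def} sits in the paper itself, but a proof of the proposition as stated has to adopt the convention that produces it.)
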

	\begin{proof}
		By Proposition \ref{pro:balance-lambda}, $\lambda_1 = 1$. By Proposition \ref{pro:transition-spect-rho}, $\abs{\lambda_i} < 1,\ \forall i\ne 1$. Hence, 
		\begin{align*}
			\lim_{t\to\infty}\mathbf{P}_{h}^t = \lim_{t\to\infty}\sum_{i=1}^n\lambda_i^t\left(\mathbf{D}^{1/2}\mathbf{u}_i\right)\left(\mathbf{D}^{1/2}\mathbf{u}_i\right)^* = \left(\mathbf{D}^{1/2}\mathbf{u}_1\right)\left(\mathbf{D}^{1/2}\mathbf{u}_1\right)^*,
		\end{align*}
		where the eigenvectors $\mathbf{D}^{1/2}\mathbf{u}_i$ are orthonormal to each other, and $\mathbf{u}_i$ is the eigenvector of $\mathbf{P}$ associated with the same eigenvalue. By $\bar{\mathbf{u}}_1$ being all-one vector, the relationships between $\mathbf{P}$ and $\mathbf{P}_h$, and Theorem \ref{the:transition-spect}, we have $\mathbf{u}_1 = c\tilde{\mathbf{1}}_1$ for some nonzero constant $c\in \mathbb{R}$. WOLG, we assume $c > 0$. Then since $\mathbf{D}^{1/2}\mathbf{u}_i$ has $2$-norm $1$, $c = 1/\sqrt{2m}$. Hence, 
		\begin{align*}
			\mathbf{x}^* 
			&= \lim_{t\to\infty}\mathbf{x}(0)^*\mathbf{P}^t
			= \lim_{t\to\infty}\mathbf{x}(0)^*\mathbf{D}^{-1/2}\mathbf{P}_{h}^t\mathbf{D}^{1/2}\\
			&= \mathbf{x}(0)^*\mathbf{D}^{-1/2}\left(\mathbf{D}^{1/2}\mathbf{u}_1\right)\left(\mathbf{D}^{1/2}\mathbf{u}_1\right)^*\mathbf{D}^{1/2}\\
			&= \mathbf{x}(0)^*(c\tilde{\mathbf{1}}_1)(c\tilde{\mathbf{1}}_1)^*\mathbf{D} = \mathbf{x}(0)^*\tilde{\mathbf{1}}_1/(2m)\tilde{\mathbf{1}}_1^*\mathbf{D}. 
		\end{align*}
		Hence, $x^*_j = \exp(\theta_{1\sigma(j)}\iu)(\mathbf{x}(0)^*\tilde{\mathbf{1}}_1)d_j/(2m)$.
	\end{proof}
	Hence, from Proposition \ref{pro:balance-steady}, consensus can be obtained asymptotically within each part corresponding to the balanced structure. The steady state now depends on the initial condition, which deviates from the random walks defined on classic networks with positive-valued weights. The dependence can be partially removed if the initialisation agrees with the balanced structure s.t.~$\abs{\mathbf{x}(0)^T\tilde{\mathbf{1}}_1} = 1$, while the phase of the steady state still depends on $\mathbf{x}(0)$.  
	
	\paragraph{Antibalanced networks.} We then continue to the case when the complex-weighted network is antibalanced, and will show that a steady state cannot be achieved generally, but one for odd times while another for even times.
	\begin{proposition}
		The complex transition matrix $\mathbf{P}$ has eigenvalue $-1$ if and only if $G$ is antibalanced. 
		\label{pro:antibalance-lambda}
	\end{proposition}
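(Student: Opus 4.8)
The plan is to mirror the proof of Proposition \ref{pro:balance-lambda}, exchanging the roles of the largest and smallest eigenvalues via the sign flip supplied by Theorem \ref{the:transition-spect}. Throughout I would work through the Hermitian matrix $\mathbf{P}_{h} = \mathbf{D}^{-1/2}\mathbf{W}\mathbf{D}^{-1/2}$, whose entrywise-magnitude counterpart is exactly $\bar{\mathbf{P}}_{h} = \mathbf{D}^{-1/2}\bar{\mathbf{W}}\mathbf{D}^{-1/2}$, since $\abs{(\mathbf{P}_{h})_{ij}} = r_{ij}/\sqrt{d_i d_j} = (\bar{\mathbf{P}}_{h})_{ij}$; conjugation by the positive diagonal $\mathbf{D}^{-1/2}$ preserves every edge phase, so $\mathbf{P}_{h}$ and $\mathbf{W}$ define the same balance structure and Theorems \ref{the:transition-spect} and \ref{the:strict-unb-rho} transfer verbatim. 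Note also that $\bar{\mathbf{P}}_{h}$ is similar to the classic row-stochastic matrix $\bar{\mathbf{P}}$, so $\bar{\lambda}_1 = 1$ and $\rho(\bar{\mathbf{P}}_{h}) = \rho(\bar{\mathbf{P}}) = 1$.

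For the forward direction, suppose $G$ is antibalanced. Then Theorem \ref{the:transition-spect} gives $\Lambda = -\bar{\Lambda}$ for $\mathbf{P}_{h}$, so the smallest eigenvalue is $\lambda_n = -\bar{\lambda}_1 = -1$. Since $\mathbf{P}$ is similar to $\mathbf{P}_{h}$, it shares this eigenvalue, so $\mathbf{P}$ has eigenvalue $-1$. For the reverse direction, suppose $\mathbf{P}$ (equivalently $\mathbf{P}_{h}$) has eigenvalue $-1$. Because $\rho(\mathbf{P}) = \rho(\mathbf{P}_{h}) \le \rho(\bar{\mathbf{P}}_{h}) = 1$, this forces $\lambda_n = -1$. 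I would then exclude the two alternatives to antibalance. If $G$ were strictly unbalanced, Theorem \ref{the:strict-unb-rho} applied through $\mathbf{P}_{h}$ yields $\rho(\mathbf{P}_{h}) < \rho(\bar{\mathbf{P}}_{h}) = 1$, contradicting $\abs{\lambda_n} = 1$. If $G$ were balanced, Theorem \ref{the:transition-spect} gives $\Lambda = \bar{\Lambda}$, hence $\bar{\lambda}_n = \lambda_n = -1$; as $\bar{\lambda}_n = -1$ characterises bipartiteness of $\bar{G}$ and therefore of $G$, the graph is bipartite.

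The step I expect to be the crux is closing this balanced sub-case, namely arguing that a balanced graph which is additionally bipartite must itself be antibalanced. This follows from the parity argument underlying Proposition \ref{pro:struct-ba-bi}: adding $\pi$ to each of the $n_c$ edges of any cycle shifts the cycle phase by $\pi n_c$, and when $G$ is bipartite every cycle has even length, so $\pi n_c \equiv 0 \pmod{2\pi}$ and the phase of every cycle after the shift remains $0$, i.e.\ $G$ is antibalanced. Thus in each surviving case $G$ is antibalanced, completing the equivalence. The only real care needed is the bookkeeping in the previous paragraph confirming that $\bar{\mathbf{P}}_{h}$ is the legitimate magnitude-only matrix for $\mathbf{P}_{h}$, which is what licenses the direct transfer of the spectral theorems to the random-walk setting.
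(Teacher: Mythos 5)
Your proof is correct, but it takes a genuinely different route from the paper's. The paper proves this proposition in two lines by a duality reduction: $G$ is antibalanced if and only if the graph $G_n = (V,E,-\mathbf{W})$, obtained by adding phase $\pi$ to every edge, is balanced; since $G_n$ has transition matrix $\mathbf{P}_n = -\mathbf{P}$ (the degree matrix is unchanged), Proposition \ref{pro:balance-lambda} applied to $G_n$ immediately gives that $G$ is antibalanced $\Leftrightarrow$ $-\mathbf{P}$ has eigenvalue $1$ $\Leftrightarrow$ $\mathbf{P}$ has eigenvalue $-1$. You instead re-run the entire case analysis of Proposition \ref{pro:balance-lambda}'s proof with the roles of largest and smallest eigenvalues exchanged: the forward direction via Theorem \ref{the:transition-spect}, the exclusion of strict unbalance via Theorem \ref{the:strict-unb-rho}, and the balanced sub-case closed by bipartiteness plus the parity argument (balanced and bipartite $\Rightarrow$ antibalanced, since adding $\pi$ to each edge of an even cycle shifts its phase by a multiple of $2\pi$). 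Each step checks out, including your observation that $\bar\lambda_n = -1$ characterises bipartiteness of $\bar G$ and your careful bookkeeping that $\mathbf{P}_h$ carries the same edge phases as $\mathbf{W}$ so the spectral theorems transfer — a point the paper itself only states informally. What the paper's reduction buys is economy: the hard sub-case you identify as the crux is exactly the mirror image of the antibalanced sub-case already handled inside Proposition \ref{pro:balance-lambda}, so the duality lets it be reused rather than re-proved. What your version buys is self-containedness and a cleaner statement of the parity implication, which the paper's proof of Proposition \ref{pro:balance-lambda} invokes somewhat loosely through Proposition \ref{pro:struct-ba-bi} (whose literal statement is the converse of what is needed there).
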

	\begin{proof}
		A graph $G = (V, E, \mathbf{W})$ is antibalanced if and only if the graph constructed by adding phase $\pi$ to each edge $G_n = (V, E, -\mathbf{W})$ is balanced. By Proposition \ref{pro:balance-lambda}, $G_n$ is balanced if and only if its complex transition matrix $\mathbf{P}_n = - \mathbf{P}$ has eigenvalue $1$, which is equivalent to that $\mathbf{P}$ has eigenvalue $-1$. 
	\end{proof}
	
	\begin{proposition}
		If $G$ is antibalanced and is not bipartite, the random walks have different steady states for odd or even times, denoted by $\mathbf{x}^{*o} = (x_j^{*o})$ and $\mathbf{x}^{*e} = (x_j^{*e})$, respectively, with
		\begin{align*}
			x_j^{*o} &= -\exp(\theta_{1\sigma(j)}\iu)(\mathbf{x}(0)^*\tilde{\mathbf{1}}_1)d_j/(2m),\\
			x_j^{*e} &= \exp(\theta_{1\sigma(j)}\iu)(\mathbf{x}(0)^*\tilde{\mathbf{1}}_1)d_j/(2m),
		\end{align*}
		where $\mathbf{x}(0)$ is the initial state, $2m = \sum_{j}d_j$, $\mathbf{I}_1$, $\theta_{hl}$ and $\sigma(\cdot)$ are as defined in Theorem \ref{the:transition-spect}, and $\tilde{\mathbf{1}}_1$ is the diagonal vector of $\mathbf{I}_1^*$. 
		\label{pro:antibalance-steady}
	\end{proposition}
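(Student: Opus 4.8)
The plan is to reduce to the balanced case already handled in Proposition \ref{pro:balance-steady}, via the negation correspondence exploited in Proposition \ref{pro:antibalance-lambda}. Since $G$ is antibalanced, the graph $G_n = (V, E, -\mathbf{W})$ obtained by adding phase $\pi$ to every edge is balanced, and because $\abs{-W_{ij}} = \abs{W_{ij}}$ the degree matrix $\mathbf{D}$, hence the total weight $2m = \sum_j d_j$, is unchanged. Its transition matrix is therefore $\mathbf{P}_n = \mathbf{D}^{-1}(-\mathbf{W}) = -\mathbf{P}$, so that $\mathbf{P}^t = (-1)^t\mathbf{P}_n^t$ for every $t$.

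First I would record that $G_n$ is not bipartite, which is immediate since it shares the underlying (unsigned) graph of $G$. Thus Proposition \ref{pro:balance-steady} applies to $G_n$ and yields a genuine steady state $\mathbf{x}^* = \lim_{t\to\infty}\mathbf{x}(0)^*\mathbf{P}_n^t$ with $x_j^* = \exp(\theta_{1\sigma(j)}\iu)(\mathbf{x}(0)^*\tilde{\mathbf{1}}_1)d_j/(2m)$. The one point requiring care is that the partition-dependent quantities $\theta_{hl}$, $\mathbf{I}_1$ and $\tilde{\mathbf{1}}_1$ produced for the balanced graph $G_n$ coincide with those attached to the antibalanced graph $G$ in Theorem \ref{the:transition-spect}: there, in the antibalanced case, $\theta_{hl}$ is defined as the phase of a path from $V_h$ to $V_l$ \emph{after} adding $\pi$ to each composing edge, which is precisely the path phase in $G_n$. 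Hence the two sets of labels agree, and the $\mathbf{x}^*$ above is literally the even-time limit.

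Finally I would split the limit along parities. From $\mathbf{x}(0)^*\mathbf{P}^t = (-1)^t\mathbf{x}(0)^*\mathbf{P}_n^t$, letting $t\to\infty$ through even integers gives $\mathbf{x}^{*e} = \mathbf{x}^*$, and through odd integers gives $\mathbf{x}^{*o} = -\mathbf{x}^*$; substituting the formula for $x_j^*$ then produces the two displayed expressions. As an alternative, one can argue spectrally: by Proposition \ref{pro:antibalance-lambda} the dominant eigenvalue of the Hermitian $\mathbf{P}_{h} = \mathbf{D}^{-1/2}\mathbf{W}\mathbf{D}^{-1/2}$ is $\lambda_n = -1$, which by Proposition \ref{pro:transition-spect-rho} (non-bipartite antibalanced case) is simple with $\abs{\lambda_i} < 1$ for $i\ne n$; expanding $\mathbf{P}_{h}^t = \sum_i \lambda_i^t(\mathbf{D}^{1/2}\mathbf{u}_i)(\mathbf{D}^{1/2}\mathbf{u}_i)^*$ and letting $t\to\infty$, only the $i=n$ term survives and equals $(-1)^t(\mathbf{D}^{1/2}\mathbf{u}_n)(\mathbf{D}^{1/2}\mathbf{u}_n)^*$, after which identifying $\mathbf{u}_n = \tilde{\mathbf{1}}_1/\sqrt{2m}$ exactly as in the balanced proof and inserting into $\mathbf{x}(0)^*\mathbf{P}^t = \mathbf{x}(0)^*\mathbf{D}^{-1/2}\mathbf{P}_{h}^t\mathbf{D}^{1/2}$ recovers the same formulas.

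The main (and essentially only) genuine obstacle is the label identification of the second paragraph: one must check that the super-node phases $\theta_{1\sigma(j)}$ furnished by the structural theorem for the balanced graph $G_n$ are the same objects named in the statement for the antibalanced $G$. Once this is settled, the remainder is linearity, the already-proved balanced result, and the parity bookkeeping of the oscillating factor $(-1)^t$, none of which presents difficulty.
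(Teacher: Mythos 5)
Your primary argument is correct, but it is a genuinely different route from the paper's. The paper proves this proposition by redoing the spectral computation from scratch: it invokes Proposition \ref{pro:antibalance-lambda} to get $\lambda_n=-1$, Proposition \ref{pro:transition-spect-rho} to get $\abs{\lambda_i}<1$ for $i\ne n$, splits $\lim_{t\to\infty}\mathbf{P}_h^{2t-1}$ and $\lim_{t\to\infty}\mathbf{P}_h^{2t}$ into $\mp\left(\mathbf{D}^{1/2}\mathbf{u}_n\right)\left(\mathbf{D}^{1/2}\mathbf{u}_n\right)^*$, identifies $\mathbf{u}_n = \tilde{\mathbf{1}}_1/\sqrt{2m}$, and conjugates back to $\mathbf{P}$ --- i.e., exactly the ``alternative'' spectral argument you sketch in your last paragraph. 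Your main route instead reduces to the already-proved balanced case: since $G$ is antibalanced, $G_n=(V,E,-\mathbf{W})$ is balanced and non-bipartite with the same degree matrix, so $\mathbf{P}=-\mathbf{P}_n$ and $\mathbf{x}(0)^*\mathbf{P}^t=(-1)^t\mathbf{x}(0)^*\mathbf{P}_n^t$, and Proposition \ref{pro:balance-steady} applied to $G_n$ plus the parity of $(-1)^t$ gives both formulas at once. You also correctly handle the one delicate point in this reduction, namely that the partition and phases $\theta_{1\sigma(j)}$, $\tilde{\mathbf{1}}_1$ produced by the structural theorem for the balanced graph $G_n$ are literally the objects Theorem \ref{the:transition-spect} attaches to the antibalanced $G$ (the antibalanced-case phases are defined as path phases after adding $\pi$ per edge, i.e., path phases in $G_n$). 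Your reduction is shorter, avoids repeating the limit computation and the identification of $\mathbf{u}_n$, and makes the $\mathbf{W}\mapsto-\mathbf{W}$ symmetry behind antibalance explicit; the paper's direct computation buys a self-contained proof whose structure exactly parallels the balanced case and which exhibits the eigenvector $\mathbf{u}_n=c\tilde{\mathbf{1}}_1$ explicitly, which is then reused in the surrounding discussion.
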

	\begin{proof}
		By Proposition \ref{pro:antibalance-lambda}, $\lambda_n = -1$. By Proposition \ref{pro:transition-spect-rho}, $\abs{\lambda_i} < 1,\ \forall i\ne n$. Hence, for odd times 
		\begin{align*}
			\lim_{t\to\infty}\mathbf{P}_{h}^{2t-1} 
			= \lim_{t\to\infty}\sum_{i=1}^n\lambda_i^{2t-1}\left(\mathbf{D}^{1/2}\mathbf{u}_i\right)\left(\mathbf{D}^{1/2}\mathbf{u}_i\right)^* 
			= -\left(\mathbf{D}^{1/2}\mathbf{u}_n\right)\left(\mathbf{D}^{1/2}\mathbf{u}_n\right)^*,
		\end{align*}
		and similarly for even times,
		\begin{align*}
			\lim_{t\to\infty}\mathbf{P}_{h}^{2t} =  \left(\mathbf{D}^{1/2}\mathbf{u}_n\right)\left(\mathbf{D}^{1/2}\mathbf{u}_n\right)^*,
		\end{align*}
		where the eigenvectors $\mathbf{D}^{1/2}\mathbf{u}_i$ are orthonormal to each other, and $\mathbf{u}_i$ is the eigenvector of $\mathbf{P}$ associated with the same eigenvalue. By $\bar{\mathbf{u}}_1$ being all-one vector, the relationships between $\mathbf{P}$ and $\mathbf{P}_h$, and Theorem \ref{the:transition-spect}, $\mathbf{u}_n$ has the specific structure where $\mathbf{u}_n = c\tilde{\mathbf{1}}_1$ for some nonzero constant $c\in \mathbb{R}$. WOLG, we assume $c > 0$. Then since $\mathbf{D}^{1/2}\mathbf{u}_i$ has $2$-norm $1$, $c = 1/\sqrt{2m}$. Hence, for odd times
		\begin{align*}
			\mathbf{x}^{*o} 
			&= \lim_{t\to\infty}\mathbf{x}(0)^*\mathbf{P}^{2t-1}
			= \lim_{t\to\infty}\mathbf{x}(0)^*\mathbf{D}^{-1/2}\mathbf{P}_{h}^{2t-1}\mathbf{D}^{1/2}\\
			&= -\mathbf{x}(0)^*\mathbf{D}^{-1/2}\left(\mathbf{D}^{1/2}\mathbf{u}_n\right)\left(\mathbf{D}^{1/2}\mathbf{u}_n\right)^*\mathbf{D}^{1/2}\\
			&= -\mathbf{x}(0)^*(c\tilde{\mathbf{1}}_1)(c\tilde{\mathbf{1}}_1)^*\mathbf{D} = - \mathbf{x}(0)^*\tilde{\mathbf{1}}_1/(2m)\tilde{\mathbf{1}}_1^*\mathbf{D}. 
		\end{align*}
		Similarly, for even times, 
		\begin{align*}
			\mathbf{x}^{*e} =  \mathbf{x}(0)^*\tilde{\mathbf{1}}_1/(2m)\tilde{\mathbf{1}}_1^*\mathbf{D}.
		\end{align*}
	\end{proof}
	Hence, from Proposition \ref{pro:antibalance-steady}, consensus can still be obtained asymptotically within each part of the antibalanced partition, if we consider odd times and even times separately. Here, the ``steady state" depends on
	not only the initialisation but also odd and even times. 
	
	\paragraph{Strictly unbalanced networks.} Finally, we consider all the remaining complex-weighted networks, the strictly unbalanced ones. Interestingly, a steady state is actually achievable in this case, which relates to global consensus.
	\begin{proposition}
		If $G$ is strictly unbalanced, then the steady state is $\mathbf{0}$, where $\mathbf{0}$ is the vector of zeros.
	\end{proposition}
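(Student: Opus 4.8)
The plan is to follow the same template as Propositions \ref{pro:balance-steady} and \ref{pro:antibalance-steady}, but now exploiting that strict unbalance forces \emph{every} eigenvalue of $\mathbf{P}$ to lie strictly inside the unit disc. First I would argue that the spectral gap supplied by Theorem \ref{the:strict-unb-rho} transfers from $\mathbf{W}$ to the symmetrically normalised matrix $\mathbf{P}_h = \mathbf{D}^{-1/2}\mathbf{W}\mathbf{D}^{-1/2}$. The point is that $\mathbf{P}_h$ is itself the Hermitian weight matrix of a reweighted graph on the same vertex set, with edge weight $W_{ij}/\sqrt{d_i d_j}$; since the rescaling factors are real and positive, they leave the phase of every cycle unchanged, so the reweighted graph is strictly unbalanced exactly when $G$ is. Applying Theorem \ref{the:strict-unb-rho} to this graph, together with the fact that $\bar{\mathbf{P}} = \mathbf{D}^{-1}\bar{\mathbf{W}}$ is row-stochastic and hence $\rho(\bar{\mathbf{P}}) = \rho(\bar{\mathbf{P}}_h) = 1$, yields $\rho(\mathbf{P}) = \rho(\mathbf{P}_h) < 1$. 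This is exactly the inequality already invoked in the proof of Proposition \ref{pro:balance-lambda}, so it can simply be cited.

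Having established $\abs{\lambda_i} < 1$ for every $i$, I would pass to the spectral decomposition of the Hermitian matrix $\mathbf{P}_h$. Because $\mathbf{D}^{1/2}\mathbf{u}_1, \dots, \mathbf{D}^{1/2}\mathbf{u}_n$ are orthonormal, we may write
\begin{align*}
    \mathbf{P}_h^t = \sum_{i=1}^n \lambda_i^t \left(\mathbf{D}^{1/2}\mathbf{u}_i\right)\left(\mathbf{D}^{1/2}\mathbf{u}_i\right)^*,
\end{align*}
and since each $\abs{\lambda_i} < 1$ we have $\lambda_i^t \to 0$ as $t\to\infty$, so $\lim_{t\to\infty}\mathbf{P}_h^t = \mathbf{0}$.

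Finally I would transfer the limit back to $\mathbf{P}$ through the similarity $\mathbf{P} = \mathbf{D}^{-1/2}\mathbf{P}_h\mathbf{D}^{1/2}$, exactly as in the two preceding propositions:
\begin{align*}
    \mathbf{x}^* = \lim_{t\to\infty}\mathbf{x}(0)^*\mathbf{P}^t = \lim_{t\to\infty}\mathbf{x}(0)^*\mathbf{D}^{-1/2}\mathbf{P}_h^t\mathbf{D}^{1/2} = \mathbf{x}(0)^*\mathbf{D}^{-1/2}\,\mathbf{0}\,\mathbf{D}^{1/2} = \mathbf{0}.
\end{align*}
I do not expect any serious obstacle here: the argument is essentially a specialisation of the balanced and antibalanced cases, in which the leading-eigenvalue term, previously the source of the nontrivial steady state, now vanishes as well. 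The only point that deserves explicit care is the transfer of Theorem \ref{the:strict-unb-rho} from $\mathbf{W}$ to $\mathbf{P}_h$, namely the observation that symmetric degree normalisation preserves the phase of every cycle and hence the structural type of the graph; everything else is routine.
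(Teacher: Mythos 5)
Your proof is correct and follows essentially the same route as the paper's: establish $\rho(\mathbf{P}_h) < \rho(\bar{\mathbf{P}}_h) = 1$ via Theorem \ref{the:strict-unb-rho}, expand $\mathbf{P}_h^t$ in the orthonormal eigenbasis so that all terms decay and $\mathbf{P}_h^t \to \mathbf{O}$, and transfer back to $\mathbf{P}$ through the similarity $\mathbf{P} = \mathbf{D}^{-1/2}\mathbf{P}_h\mathbf{D}^{1/2}$. The only difference is that you explicitly justify why strict unbalance of $G$ carries over to the reweighted graph underlying $\mathbf{P}_h$ (positive real rescaling leaves all cycle phases unchanged), a step the paper uses implicitly; this is a welcome clarification rather than a deviation.
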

	\begin{proof}
		When $G$ is strictly unbalanced, by Theorem \ref{the:strict-unb-rho}, $\rho(\mathbf{P}_{h}) < \rho(\bar{\mathbf{P}}_{h}) = 1$. Hence,
		\begin{align*}
			\lim_{t\to\infty}\mathbf{P}_{h}^t = \lim_{t\to\infty}\sum_{t=1}^n\lambda_i^t\left(\mathbf{D}^{1/2}\mathbf{u}_i\right)\left(\mathbf{D}^{1/2}\mathbf{u}_i\right)^* = \mathbf{O},
		\end{align*}
		where $\mathbf{O}$ is the matrix of zeros. Hence, 
		\begin{align*}
			\mathbf{x}^* = \lim_{t\to\infty}\mathbf{x}(0)^*\mathbf{P}^t = \lim_{t\to\infty}\mathbf{x}(0)^*\mathbf{D}^{-1/2}\mathbf{P}_{h}^t\mathbf{D}^{1/2} = \mathbf{x}(0)^*\mathbf{D}^{-1/2}\mathbf{O}\mathbf{D}^{1/2} = \mathbf{0}.
		\end{align*}
	\end{proof}
	We first note that if the potential phases can only be $0$ or $\pi$, i.e., the complex-weighed network is reduced to be a signed network, the results derived in this section are consistent with the ones in \cite{tian2022sign}. Then inspired by the Cheeger inequalities extended for complex weights \cite{Lange2015MagL}, we propose the following measures to further characterise strictly unbalanced networks: (i) $d_b(G) = 1 - \lambda_1$ for the dissimilarity to balance, and (ii) $d_b(G) = 1 + \lambda_n$ for the dissimilarity to antibalance. We leave the detailed exploration of these measures to future work. 
	
	\subsection{Example: Special choice of phases}\label{sec:randwalk-eg}
	When considering all possible phases in $[0, 2\pi)$, we build a map between the edges and the unitary group of dimension $1$, $U(1)$, or the circle group, consisting of all complex numbers of modulus $1$. Here, we consider a special choice of its subgroups, cyclic groups $S_k^1 \coloneqq \{\xi^j|j = 0, 1, \dots, k-1\}$ where $\xi = e^{2\pi\iu/k}$ is the primitive $k$-th root of unity. This is a reasonable choice in various applications, such as in the magnetic Laplacian \cite{fanuel2017magnetic}. 
	
	In this case, we effectively restrict the choice of phases to be in the set $\{z\varphi_0\}_{z=0}^{k-1}$, where $\varphi_0 = 2\pi/k$. We can then write the adjacency matrix as $\mathbf{A} = \sum_{z=0}^{k-1}e^{z\varphi_0\iu}\mathbf{A}^{z\varphi_0}$, where $A_{ij}^{z\varphi_0} = 1$ if $z\varphi_0 = \varphi_{ij}$ and $0$ otherwise. Hence, Eq.~\eqref{equ:rw-x_j^theta} can be effectively simplified as 
	\begin{align}
		x_j^\theta(t+1)  = \sum_{i}\frac{1}{d_i}\sum_{z=0}^{k-1}A_{ij}^{z\varphi_0}x_i^{\theta-z\varphi_0},
		\label{equ:rw-x_j^theta-sysm}
	\end{align}
	and the whole dynamics is governed by the following $nk\times nk$ matrix that can be interpreted as the adjacency matrix of a larger graph where each node appears $k$ times, 
	\begin{align*}
		\mathbf{A}^{(k)} = 
		\begin{pmatrix}
			\mathbf{A}^{0}  & \mathbf{A}^{\varphi_0} & \cdots & \mathbf{A}^{(k-1)\varphi_0}\\
			\mathbf{A}^{(k-1)\varphi_0} & \mathbf{A}^{0} & \cdots & \mathbf{A}^{(k-2)\varphi_0} \\
			\vdots & \vdots & \ddots & \vdots \\
			\mathbf{A}^{\varphi_0} & \mathbf{A}^{2\varphi_0} & \cdots & \mathbf{A}^{0}
		\end{pmatrix}.
	\end{align*}
	We note that $\sum_j\abs{A^{(k)}_{ij}} = d_i$, thus the system characterised by Eq.~\eqref{equ:rw-x_j^theta-sysm} has the coupling matrix $\mathbf{P}^{(k)} = \mathbf{D}^{(k)-1}\mathbf{A}^{(k)}$, where $\mathbf{D}^{(k)}$ is the diagonal matrix with $\mathbf{d} = (d_i)$ on the diagonal but appearing $k$ times. When $k=2$, we recover the matrix $\mathbf{A}^{(2)}$ in signed networks \cite{tian2022sign}.
	
	The partitions corresponding to the balanced structure are relatively predictable in this case: $\{V_i\}_{i=1}^{k}$ where any edge within each node subset has phase $0$, any edge from $V_i$ to $V_{i+1}$ has phase $2\pi/k$ ($i=1,\dots, k$, with the convention that $V_{k+1} = V_1$), and other edges should be consistent with the requirement that all cycles have phase $0$; see Figs.~\ref{fig:illustr-1} and \ref{fig:illustr-2} for full description with all potential edges\footnote{\footnotesize We notice that if we restrict to phases being multiples of $2\pi/k$, the antibalanced structure when $k$ is odd is not valid in the current $S_k^1$ but in $S_{2k}^1$}. The case for the antibalanced structures is similar. The steady states can be expressed in a more explicit format in this special choice of phases; see section \ref{sec:sm-randwalk} in the Appendix for more details.
	\begin{figure}[htbp]
		\centering
		\includegraphics[height=.35\textheight]{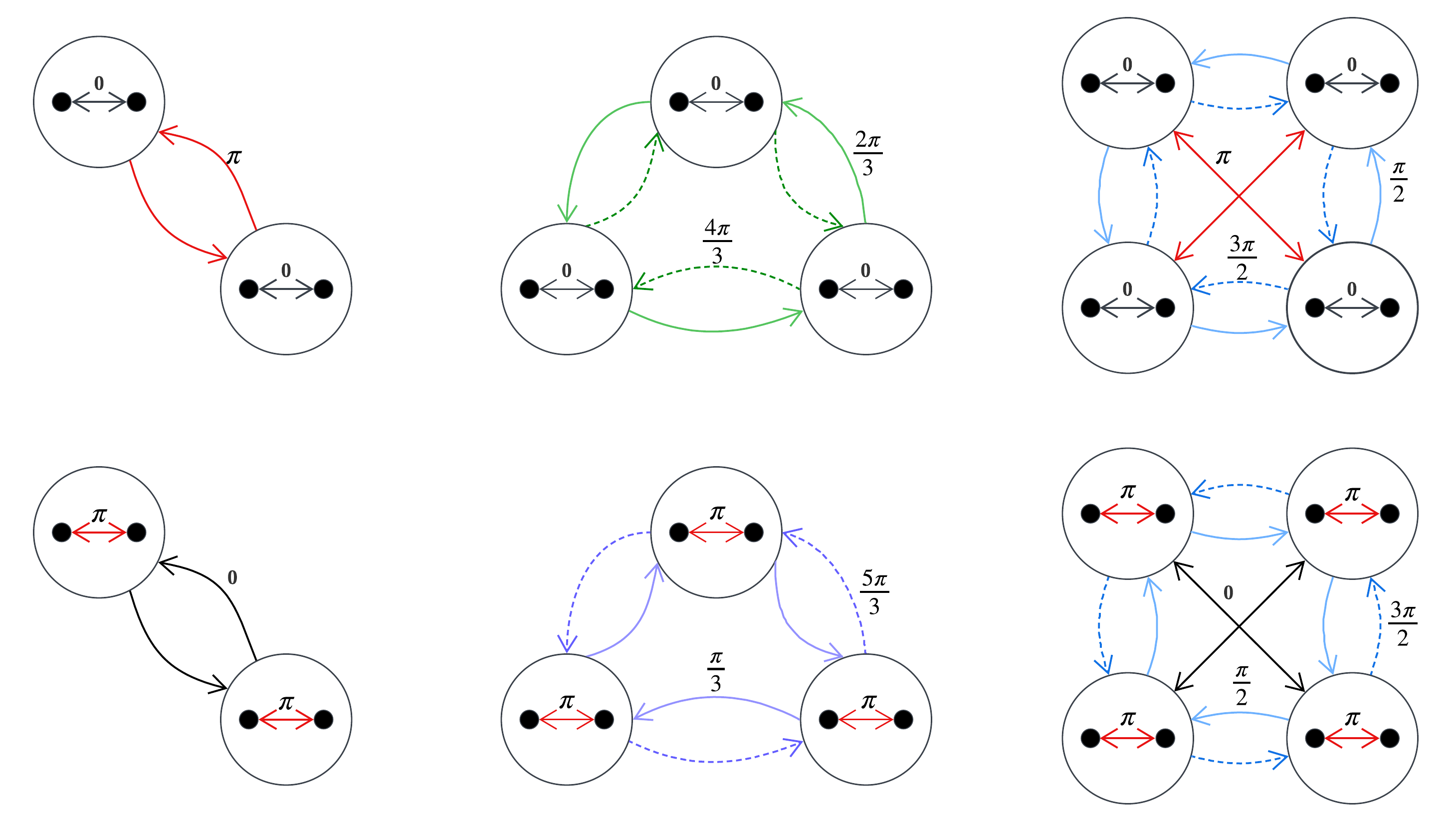}
		\caption{Schematic graphs being structurally balanced (top) and antibalanced (bottom) induced by $k=2$ (left), $3$ (middle) and $4$ (right), where the label indicates the phase of edge(s).}
		\label{fig:illustr-1}
	\end{figure}
	\begin{figure}[htbp]
		\centering
		\includegraphics[height=.35\textheight]{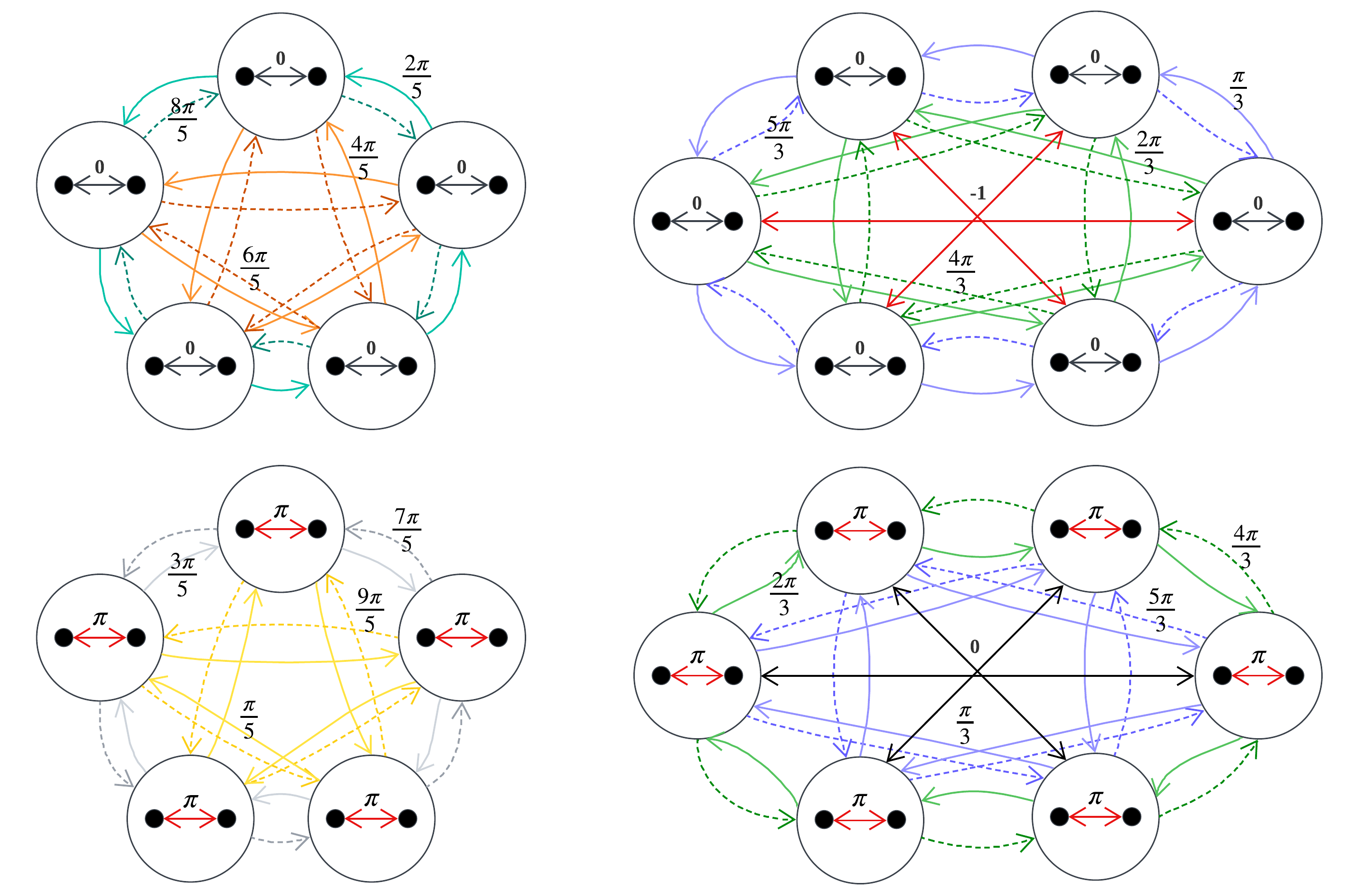}
		\caption{Schematic graphs being structurally balanced (top) and antibalanced (bottom) induced by $k=5$ (left) and $6$ (right), where the label indicates the phase of edge(s).}
		\label{fig:illustr-2}
	\end{figure}
	
	\section{Application: Spectral clustering}\label{sec:app-spect}
	When considering spectral clustering on graphs, we refer to the notion of \textit{cut} on graphs. For classic graphs where the weights are positive, cuts can be defined via the sum of weights of edges between different parts of a partition. However, in complex-weighted networks, there are two dimensions of information encoded in each edge, both the magnitude and the phase, and we can think of, e.g., the absolute difference between two figures after optimally aligning them by rotations and the optimal rotation angle. Hence, it is necessary to extend graph cuts to incorporate both dimensions in an appropriate manner. 
	
	\subsection{General cut}
	As in the figure clustering setting, one may first aim for communities of figures that share high similarity values, and further extract communities of figures that are aligned similarly within each community. 
	Accordingly, we consider two levels of communities in the context of complex weights, thus two types of cuts as follows. In the first level, we only consider the magnitude, and define the following \textit{absolute cut}, 
	\begin{align*}
		cut(X, X^c) = \sum_{v_i\in X, v_j\in X^c} \abs{W_{ij}} = \sum_{v_i\in X, v_j\in X^c}r_{ij}, 
	\end{align*}
	where $X\subset V$ and $X^c = V\backslash X$. Note that inside each community at the current level, edges can have very different phases, hence in the next level, we incorporate the phase information to further group the nodes inside. Specifically, inspired by dynamical equivalence, the target community structure should have the following features: (i) edges inside each subcommunity have phase $0$, (ii) edges between each pair of subcommunities have the same phases (as others between the same pair), and (iii) if the subcommunities form any cycles by considering them as supernodes, the phase of such cycles is $0$. Features (i) and (ii) are also expected from structural equivalence, while feature (iii) is specific to dynamical equivalence or structural balance, which guarantees the paths to some nodes from the same nodes have the same phase. Hence, we can assign each subcommunity $X_a$ an phase $\theta_{X_a}\in[0,2\pi)$ where $((\theta_{X_a} - \theta_{X_b}) \mod 2\pi)$ is the expected phase of the edges from subcommunity $X_a$ to $X_b$, and define the following \textit{complex cut} for each pair of subcommunities $X_a, X_b$, 
	\begin{align*}
		ccut(X_a, X_b) = \sum_{v_i\in X_a, v_j\in X_b} (1 - \cos(\varphi_{ij} - (\theta_{X_a} - \theta_{X_b})))r_{ij}, 
	\end{align*} 
	where the choice of cosine function is motivated by the absolute difference between $\exp(\iu \theta_{X_a})$ and $\exp(\iu (\varphi_{ij}+\theta_{X_b}))$. Overall, the \textit{general cut} is defined as the sum of the two parts, 
	\begin{align*}
		gcut(X^{(h)}, X^{(h)c}) = cut(X^{(h)}, X^{(h)c}) + \sum_{a=1}^{l_h}\sum_{b=1}^{l_h} ccut(X^{(h)}_a, X^{(h)}_b), 
	\end{align*}
	where $X^{(h)} = \bigcup_{a=1}^{l_h}X^{(h)}_{a} \subseteq V$. As special examples, if all phases are $0$, i.e., it is a classic graph, the complex cut is $0$ and the general cut will be reduced to the classic cut; while if the potential phases can only be $0$ or $\pi$, i.e., it is a signed graph, the complex cut will be reduced to the signed cut \cite{Kunegis_signspect_2010}. Specifically in the latter,  edges between subcommunities can only have phase $\pi$, and then from feature (iii) of the target community structure discussed above, there can be at most two subcommunities in each community. Hence,
	\begin{align*}
		gcut(X^{(h)}, X^{(h)c}) 
		&= cut(X^{(h)}, X^{(h)c}) + ccut(X^{(h)}_1, X^{(h)}_2)\\
		&= cut(X^{(h)}, X^{(h)c}) + cut^-(X^{(h)}_1, X^{(h)}_1) + cut^-(X^{(h)}_2,X^{(h)}_2) + 2cut^+(X^{(h)}_1,X^{(h)}_2), 
	\end{align*}
	where $cut^-(X, Y) = \sum_{v_i\in X, v_j\in Y: W_{ij} < 0}r_{ij}$, and $cut^+(X, Y) = \sum_{v_i\in X, v_j\in Y: W_{ij} > 0}r_{ij}$. This recovers the signed bipartiteness ratio as defined in \cite{Atay_signedCheeger_2020} (subject to appropriate normalisation) and is closely related to the signed Cheeger constant. Further, if the connectivity of the graph is almost uniformly distributed so that the whole graph is considered as $X^{(1)}$, then the general cut will also be reduced to the signed cut.

	Hence, the \textit{general ratio cut} is defined as, 
	\begin{align*}
		grcut(\{X^{(1)}_{a}\}_{a=1}^{l_1}, \dots, \{X^{(k)}_{a}\}_{a=1}^{l_k}) = \sum_{h=1}^{k}\frac{gcut(X^{(h)}, X^{(h)c})}{\abs{X^{(h)}}}, 
	\end{align*}
	where $\abs{X^{(h)}}$ is the size of the set. Hence, the following minimisation problem can solve the clustering problem in complex-weighted networks, 
	\begin{align}
		\min_{\{X^{(1)}_{a}\}_{a=1}^{l_1},\dots, \{X^{(k)}_{a}\}_{a=1}^{l_k}}\min_{\Theta} grcut(\{X^{(1)}_{a}\}_{a=1}^{l_1}, \dots, \{X^{(k)}_{a}\}_{a=1}^{l_k}),
		\label{equ:min-ratiocut}
	\end{align}
	where $\Theta$ contains all the phases that need to be assigned to different subcommunities. 
	
	\paragraph{Complex graph Laplacian.} Now, we show that the problem can be reformulated in terms of the complex Laplacian. We start from its general characteristics of being positive semi-definite, as in the case of classic graph Laplacian.  
	\begin{proposition}
		The complex Laplacian $\mathbf{L}$ is positive semi-definite.
	\end{proposition}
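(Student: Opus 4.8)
The plan is to show directly that the Hermitian form $\mathbf{x}^*\mathbf{L}\mathbf{x}$ is nonnegative for every $\mathbf{x}\in\mathbb{C}^n$. Since $\mathbf{D}$ is a real diagonal matrix and $\mathbf{W}$ is Hermitian, $\mathbf{L}=\mathbf{D}-\mathbf{W}$ is itself Hermitian, so positive semi-definiteness is equivalent to $\mathbf{x}^*\mathbf{L}\mathbf{x}\ge 0$ for all $\mathbf{x}$. First I would expand the form as $\mathbf{x}^*\mathbf{L}\mathbf{x}=\sum_i d_i\abs{x_i}^2-\sum_{i,j}\bar{x}_iW_{ij}x_j$ and substitute $d_i=\sum_j r_{ij}$ and $W_{ij}=r_{ij}e^{\iu\varphi_{ij}}$.

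The key step is to recognise the analogue of the classic identity $\mathbf{x}^*\mathbf{L}\mathbf{x}=\sum_{(i,j)}w_{ij}\abs{x_i-x_j}^2$, now twisted by the edge phases. Concretely, I would verify the identity
\[
	\mathbf{x}^*\mathbf{L}\mathbf{x}=\tfrac12\sum_{i,j}r_{ij}\abs{x_i-e^{\iu\varphi_{ij}}x_j}^2.
\]
Expanding a single term on the right gives $r_{ij}(\abs{x_i}^2+\abs{x_j}^2)-\bar{x}_iW_{ij}x_j-x_i\overline{W_{ij}}\,\bar{x}_j$; summing over $i,j$ and using $r_{ij}=r_{ji}$ collapses the two squared-magnitude pieces into $2\sum_i d_i\abs{x_i}^2=2\,\mathbf{x}^*\mathbf{D}\mathbf{x}$.

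The main thing to check carefully is that the two cross terms combine to exactly $2\,\mathbf{x}^*\mathbf{W}\mathbf{x}$. Here I would use that $\mathbf{W}$ is Hermitian, so $\mathbf{x}^*\mathbf{W}\mathbf{x}$ is real and $\sum_{i,j}x_i\overline{W_{ij}}\,\bar{x}_j=\overline{\mathbf{x}^*\mathbf{W}\mathbf{x}}=\mathbf{x}^*\mathbf{W}\mathbf{x}$; this is the only place the Hermitian assumption, together with the correct bookkeeping of the phase factor $e^{\iu\varphi_{ij}}$ rather than its conjugate, genuinely enters, and so it is the step I would write out in full. Once the identity is established the conclusion is immediate: every summand $r_{ij}\abs{x_i-e^{\iu\varphi_{ij}}x_j}^2$ is nonnegative because $r_{ij}\ge 0$, whence $\mathbf{x}^*\mathbf{L}\mathbf{x}\ge 0$ and $\mathbf{L}$ is positive semi-definite.
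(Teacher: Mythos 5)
Your proof is correct and takes essentially the same route as the paper: both establish the identity $\mathbf{x}^*\mathbf{L}\mathbf{x}=\tfrac12\sum_{i,j}r_{ij}\abs{x_i-e^{\iu\varphi_{ij}}x_j}^2$ and conclude nonnegativity termwise, the paper organising the algebra as a per-edge decomposition $\mathbf{L}=\sum\mathbf{L}^{\{i,j\}}$ while you expand the full quadratic form over ordered pairs and use Hermiticity to merge the cross terms. The difference is purely bookkeeping.
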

	\begin{proof}
		We write the complex Laplacian matrix as a sum over the edges of $G$
		\begin{align*}
			\mathbf{L} = \sum_{(i,j), (j,i)\in E}\mathbf{L}^{\{i,j\}},
		\end{align*}
		where $\mathbf{L}^{\{i,j\}}\in\mathbb{C}^{n\times n}$ contains the four following nonzero entries
		\begin{align*}
			(\mathbf{L}^{\{i,j\}})_{ii} &= (\mathbf{L}^{\{i,j\}})_{jj} = \abs{W_{ij}} = r_{ij}\\
			(\mathbf{L}^{\{i,j\}})_{ij} &= (\mathbf{L}^{\{i,j\}*})_{ji} = -W_{ij} = -r_{ij}\exp(\iu \varphi_{ij}).
		\end{align*}
		For all $\mathbf{x}\in \mathbb{C}^n$, 
		\begin{align*}
			\mathbf{x}^*\mathbf{L}^{\{i,j\}}\mathbf{x} 
			&= x_i^*r_{ij}x_i + x_j^*r_{ij}x_j - x_i^*r_{ij}\exp(\iu \varphi_{ij})x_j - x_j^*r_{ij}\exp(-\iu \varphi_{ij})x_i\\
			&= r_{ij}(x_i^*x_i + x_j^*x_j - \exp(\iu \varphi_{ij})x_i^*x_j - \exp(-\iu \varphi_{ij})x_j^*x_i)\\
			&= r_{ij}(x_i - \exp(\iu \varphi_{ij})x_j)^*(x_i - \exp(\iu \varphi_{ij})x_j)\\
			&= r_{ij}\abs{x_i - \exp(\iu \varphi_{ij})x_j}^2 \ge 0.
		\end{align*}
		Hence, 
		\begin{align}
			\mathbf{x}^*\mathbf{L}\mathbf{x} = \sum_{(i,j), (j,i)\in E}r_{ij}\abs{x_i - \exp(\iu\varphi_{ij})x_j}^2 \ge 0.
			\label{equ:laplacian-bilinear}
		\end{align}
	\end{proof}
	With the bilinear form of the complex Laplacian, we can show that the objective of the minimisation problem \eqref{equ:min-ratiocut} can be retrieved with specific choice of vectors corresponding to the partition, as illustrated in Proposition \ref{pro:rcut-obj-L}. 
	\begin{proposition}
		For communities $\{X^{(1)}_{a}\}_{a=1}^{l_1}, \dots, \{X^{(k)}_{a}\}_{a=1}^{l_k}$ in complex-weighted networks,
		\begin{align*}
			grcut(\{X^{(1)}_{a}\}_{a=1}^{l_1}, \dots, \{X^{(k)}_{a}\}_{a=1}^{l_k}) = \Tr(\mathbf{X}^*\mathbf{L}\mathbf{X})
		\end{align*}
		where $\Tr(\cdot)$ returns the matrix trace, $\mathbf{X} \in \mathbb{C}^{n\times k}$ is the matrix containing the complex indicator vectors $\{\mathbf{x}^{(h)}\}$ as columns with 
		\begin{align}
			x^{(h)}_i = 
			\begin{cases}
				\exp(\iu\theta_{X^{(h)}_a})/\sqrt{\abs{X^{(h)}}},\quad &\text{if } v_i\in X^{(h)}_a, a\in\{1,2,\dots,l_h\},\\
				0, \quad &\text{otherwise}.
			\end{cases}
			\label{equ:spectral-vec}
		\end{align}
		\label{pro:rcut-obj-L}
	\end{proposition}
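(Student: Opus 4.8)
The plan is to decouple the trace into a sum of Hermitian forms and match each of them, edge by edge, against the corresponding term of $grcut$. Because the columns of $\mathbf{X}$ are the indicator vectors $\mathbf{x}^{(h)}$, the $h$-th diagonal entry of $\mathbf{X}^*\mathbf{L}\mathbf{X}$ equals $(\mathbf{x}^{(h)})^*\mathbf{L}\mathbf{x}^{(h)}$, so that $\Tr(\mathbf{X}^*\mathbf{L}\mathbf{X}) = \sum_{h=1}^{k}(\mathbf{x}^{(h)})^*\mathbf{L}\mathbf{x}^{(h)}$. It therefore suffices to prove the single-column identity $(\mathbf{x}^{(h)})^*\mathbf{L}\mathbf{x}^{(h)} = gcut(X^{(h)}, X^{(h)c})/\abs{X^{(h)}}$ for each $h$; summing over $h$ then yields the claim.

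For a fixed $h$ I would invoke the bilinear form \eqref{equ:laplacian-bilinear}, expressing $(\mathbf{x}^{(h)})^*\mathbf{L}\mathbf{x}^{(h)} = \sum r_{ij}\abs{x^{(h)}_i - e^{\iu\varphi_{ij}}x^{(h)}_j}^2$ as a sum over the unordered edges, and split these edges into three classes according to the number of endpoints lying in $X^{(h)}$. Edges with no endpoint in $X^{(h)}$ contribute $0$, since both relevant entries of $\mathbf{x}^{(h)}$ vanish. For an edge with exactly one endpoint in $X^{(h)}$, say $v_i\in X^{(h)}$ and $v_j\notin X^{(h)}$, we have $x^{(h)}_j = 0$, so the term is $r_{ij}\abs{x^{(h)}_i}^2 = r_{ij}/\abs{X^{(h)}}$; summed over the boundary edges this gives exactly $cut(X^{(h)}, X^{(h)c})/\abs{X^{(h)}}$.

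The interior edges, with both endpoints in $X^{(h)}$, say $v_i\in X^{(h)}_a$ and $v_j\in X^{(h)}_b$, carry the phase information. Substituting \eqref{equ:spectral-vec} and using $\abs{e^{\iu\alpha}-e^{\iu\beta}}^2 = 2(1-\cos(\alpha-\beta))$ with $\alpha = \theta_{X^{(h)}_a}$ and $\beta = \varphi_{ij}+\theta_{X^{(h)}_b}$, each interior edge contributes $\frac{2}{\abs{X^{(h)}}}r_{ij}(1-\cos(\varphi_{ij}-(\theta_{X^{(h)}_a}-\theta_{X^{(h)}_b})))$. I would then compare this with $\sum_{a=1}^{l_h}\sum_{b=1}^{l_h}ccut(X^{(h)}_a, X^{(h)}_b)$, whose summand runs over ordered endpoint pairs. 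The key observation is that this summand is invariant under swapping $i$ and $j$: since $r_{ji}=r_{ij}$ and $\varphi_{ji}=2\pi-\varphi_{ij}$, the argument $\varphi_{ij}-(\theta_{X_a}-\theta_{X_b})$ is sent to its negative and the cosine is left unchanged. Hence each interior edge is counted twice across the double sum over $(a,b)$, reproducing precisely the factor $2$ coming from the bilinear form, so the interior contribution equals $\sum_{a,b}ccut(X^{(h)}_a, X^{(h)}_b)/\abs{X^{(h)}}$. Adding the boundary and interior parts yields $gcut(X^{(h)}, X^{(h)c})/\abs{X^{(h)}}$, as required.

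I expect the only genuine subtlety to be the combinatorial bookkeeping: the Laplacian bilinear form sums over unordered edges, whereas $cut$ and $ccut$ are written over ordered endpoint pairs, so one must check the matching multiplicities (boundary edges once, interior edges twice) together with the $i\leftrightarrow j$ symmetry of the $ccut$ summand. No spectral or limiting argument is needed; given \eqref{equ:laplacian-bilinear}, the identity is purely algebraic.
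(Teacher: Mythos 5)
Your proposal is correct and follows essentially the same route as the paper's proof: decomposing $\Tr(\mathbf{X}^*\mathbf{L}\mathbf{X})$ into the Hermitian forms $(\mathbf{x}^{(h)})^*\mathbf{L}\mathbf{x}^{(h)}$, invoking the bilinear form \eqref{equ:laplacian-bilinear}, splitting edges by their position relative to $X^{(h)}$ and its subcommunities, and applying $\abs{e^{\iu\alpha}-e^{\iu\beta}}^2 = 2(1-\cos(\alpha-\beta))$. Your explicit handling of the unordered-edge versus ordered-pair multiplicities (boundary edges once, interior edges twice, using the $i\leftrightarrow j$ symmetry of the $ccut$ summand) is exactly the bookkeeping the paper performs via its $\tfrac{1}{2}$ factor on within-subcommunity sums and its $a<b$ restriction on between-subcommunity sums.
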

	\begin{proof}[Proof sketch]
		The result follows from the definition of matrix trace, with the specific form of vectors $\mathbf{x}^{(h)}$. See section \ref{sec:sm-app-sc} in the Appendix for details. 
	\end{proof}
	Hence, the minimisation problem \eqref{equ:min-ratiocut} can be rewritten as 
	\begin{align}
		\min_{\{X^{(1)}_{a}\}_{a=1}^{n_1},\dots, \{X^{(k)}_{a}\}_{a=1}^{n_k}}\min_{\Theta} \Tr &(\mathbf{X}^*\mathbf{L}\mathbf{X})\\
		s.t.\quad &\mathbf{X}^*\mathbf{X} = \mathbf{I}, \nonumber\\
		&\mathbf{X} \text{ as defined in Eq}. \eqref{equ:spectral-vec}. \nonumber 
		\label{equ:ratiocut-L} 
	\end{align}
	Similar to the classic case, we now relax the problem so that $\mathbf{X}$ can take any complex values, then the problem becomes,
	\begin{align*}
		\min_{\mathbf{X}\in\mathbb{C}^{n\times k}}\Tr &(\mathbf{X}^*\mathbf{L}\mathbf{X})\\
		s.t.\quad &\mathbf{X}^*\mathbf{X} = \mathbf{I},
	\end{align*}
	which gives us the standard format of a trace minimisation problem. By the Rayleigh-Ritz theorem (e.g., see section 5.2.2 (5) of \cite{lutkepohl1997matrices}), the solution is to choose $\mathbf{X}$ as the matrix whose columns are the first $k$ eigenvectors (corresponding to the $k$ smallest eigenvalues). We note that the analysis can also be applied to the normalised cut \cite{vonLuxburg2007spectra}. 
	
	\paragraph{Spectral clustering algorithm.} We propose the following spectral clustering algorithm on complex-weighted networks: (i) we first obtain the first $k$ eigenvectors of the complex Laplacian to construct an initial $\mathbf{X}$, and then (ii-a) clustering nodes into level-one communities by the magnitude while (ii-b) further clustering nodes into level-two communities together with the phase; see Algorithm \ref{alg:sc_complex} for details. It follows similarly to the classic version, apart from the fact that after obtaining $\mathbf{X}$, further clustering methods are applied twice for communities in two levels. Note that the numbers of communities in the two levels, although given as parameters in the algorithm, can be inferred from the data. For example, $k$ can be estimated from the number of eigenvalues that are almost zero, while $l_h$ can be inferred from the distribution of complex values within the $h$-th level-one community, $h=1,\dots, k$. Further, with the bilinear form of the complex Laplacian \eqref{equ:laplacian-bilinear}, we can see that in the ideal case where the level-one communities are disconnected and each of them is structurally balanced, the spectral clustering algorithm can recover the target community structure. 
	\begin{algorithm}[H]
		\caption{Spectral clustering algorithm on complex-weighted networks.}
		\label{alg:sc_complex}
		\begin{algorithmic}[1] 
			\State{Input: complex-weighted network $G$ with weight matrix $\mathbf{W}$, number $k$ of level-one clusters, and number $l_i$ of level-two clusters within the $i$-th level-one cluster, $i=1,\dots, k$.} 
			\State{Construct the graph Laplacian $\mathbf{L}$.} 
			\State{Compute the first $k$ eigenvectors $\mathbf{y}_1, \mathbf{y}_2, \dots, \mathbf{y}_k$ of $\mathbf{L}$.} 
			\State{Let $\mathbf{Y}\in\mathbb{R}^{n\times k}$ be the matrix containing $\mathbf{y}_1, \mathbf{y}_2, \dots, \mathbf{y}_k$ as columns, and $\bar{\mathbf{Y}} = \abs{\mathbf{Y}}$.
				For $i = 1,\dots, n$, let $\bar{\mathbf{x}}_i\in\mathbb{R}^k$ be the vector corresponding to the $i$-th row of $\bar{\mathbf{Y}}$. Cluster the points $(\bar{\mathbf{x}}_i)_{i=1,\dots, n}$ in $\mathbb{R}^k$ with the k-means algorithm into $k$ level-one clusters $C^{(1)},\dots, C^{(k)}$.}
			\For{$i=1,\dots, k$}
			\State{For each $v_j\in C_i$, let $\mathbf{x}_j\in\mathbb{C}^{k}$ be the vector corresponding to $j$-th row of $\mathbf{Y}$. Clustering the points $(\hat{\mathbf{x}}_i)_{v_i\in C_i}$ in $\mathbb{C}^k$, with a variant of k-means algorithm that can consider complex values, into $l_i$ level-two clusters $C^{(i)}_1, \dots, C^{(i)}_{l_i}$.}
			\EndFor
			\State{Output: Level-one clusters $C^{(1)},\dots, C^{(k)}$, and level-two clusters $\{C^{(i)}_a\}_{a=1}^{l_i}$ with $\cup_{a=1}^{l_i}C^{(i)}_a = C^{(i)}$, $i=1,\dots, k$.} 
		\end{algorithmic}
	\end{algorithm}
	
	\subsection{Experiments}
	In this section, we examine the performance of the spectral clustering method on synthetic networks, and leave the exploration of real networks to later.
	Specifically, we consider a type of stochastic block model (SBM) where each edge can have different phases (rather than only $0$ as in the classic case). The CSBM is constructed by the following components: (i) a planted SBM, $SBM(p_{in}, p_{out})$, where the probabilities of an edge to occur inside each community and between the two communities are $p_{in}$ and $p_{out}$, respectively; (ii) within the $i$-th community, (ii-a) nodes are further grouped into $l_i$ sub-communities, where we consider (almost) equally division in our current setting, and (ii-b) an initial balanced configuration, where edges inside each sub-community have phase $0$ while those from sub-community $a$ to $b$ have phase $((b-a)2\pi/l_i\mod 2\pi)$, $i=1,\dots, k$; (iii) a mixing probability $\eta\in[0,1]$ to change the phase of edges within each community to others that occur in the same community uniformly at random. We denote this planted CSBM by $CSBM(p_{in}, p_{out}, \eta, \{l_i\}_{i=1}^{k})$. In the experiments, we consider networks whose level-one communities are of size $n_p = 60$, while varying the probability $p_{in}$ from $0.4$ to $0.8$, $p_{out}$ from $0$ to $0.3$, and also $\eta$ from $0$ to $0.3$; see Fig.~\ref{fig:csbm-g} for example networks. For each set of parameters, we generate $n_s=20$ graphs, run the proposed spectral clustering algorithm as in Algorithm \ref{alg:sc_complex}, and report the average (mean) normalised mutual information (NMI) scores and the standard deviation (std) \cite{strehl2002nmi}. Specifically, to build upon state-of-the-art development of spectral clustering, we run spectral clustering algorithm in \texttt{scikit-learn} on the graphs ignoring the phase information to detect level-one communities \cite{scikit-learn}. 
	For level-two communities, instead of another k-means algorithm, we represent each point as $(Re(\mathbf{x}_i), Im(\mathbf{x}_i))$, where $Re(\cdot), Im(\cdot)$ return the real and imaginary part, respectively, of a complex number and are applied element-wise, and apply the k-means to clustering these points.
	Note that the proposed algorithm is the only one that is designed for detecting communities in networks with complex weights, to the best of our knowledge. 
	\begin{figure}[!ht]
		\centering
		\begin{tabular}{ccc}
			\includegraphics[width=.3\textwidth]{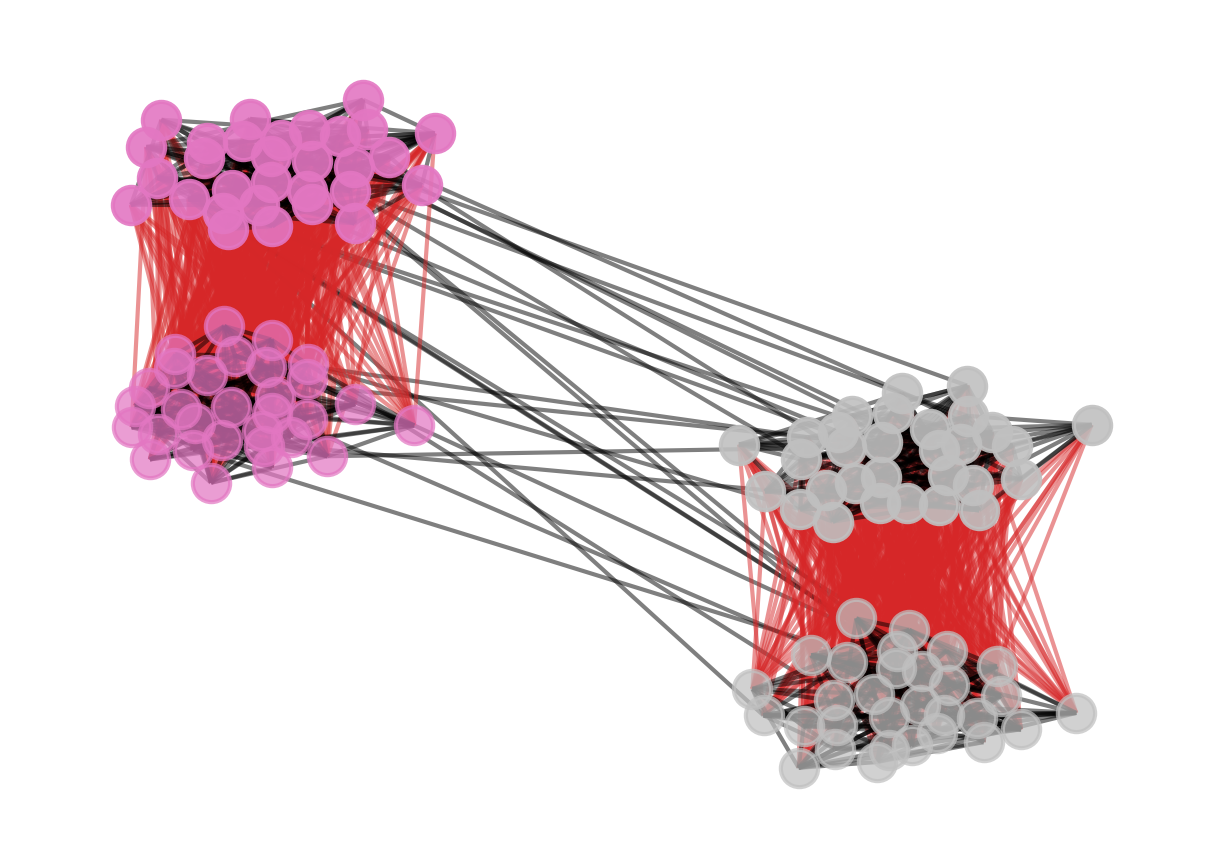} & \includegraphics[width=.3\textwidth]{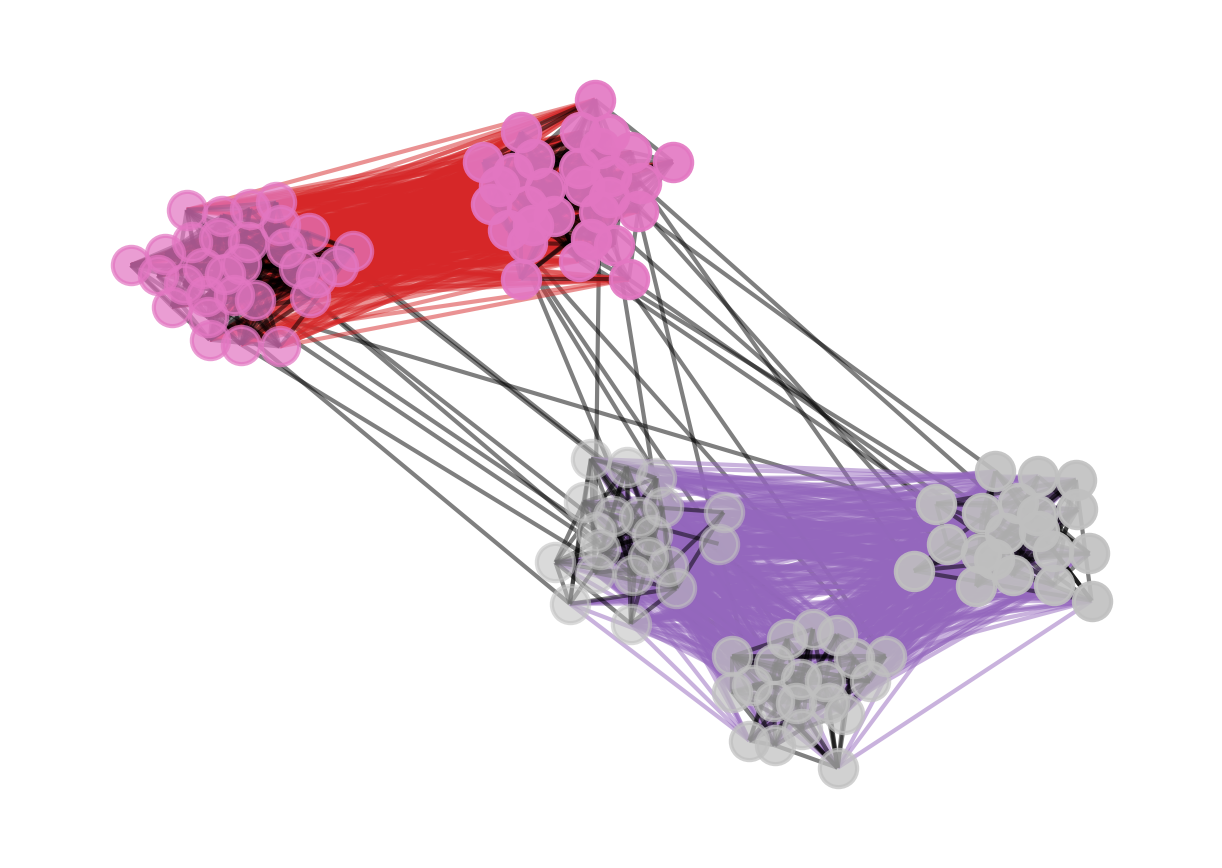} & \includegraphics[width=.3\textwidth]{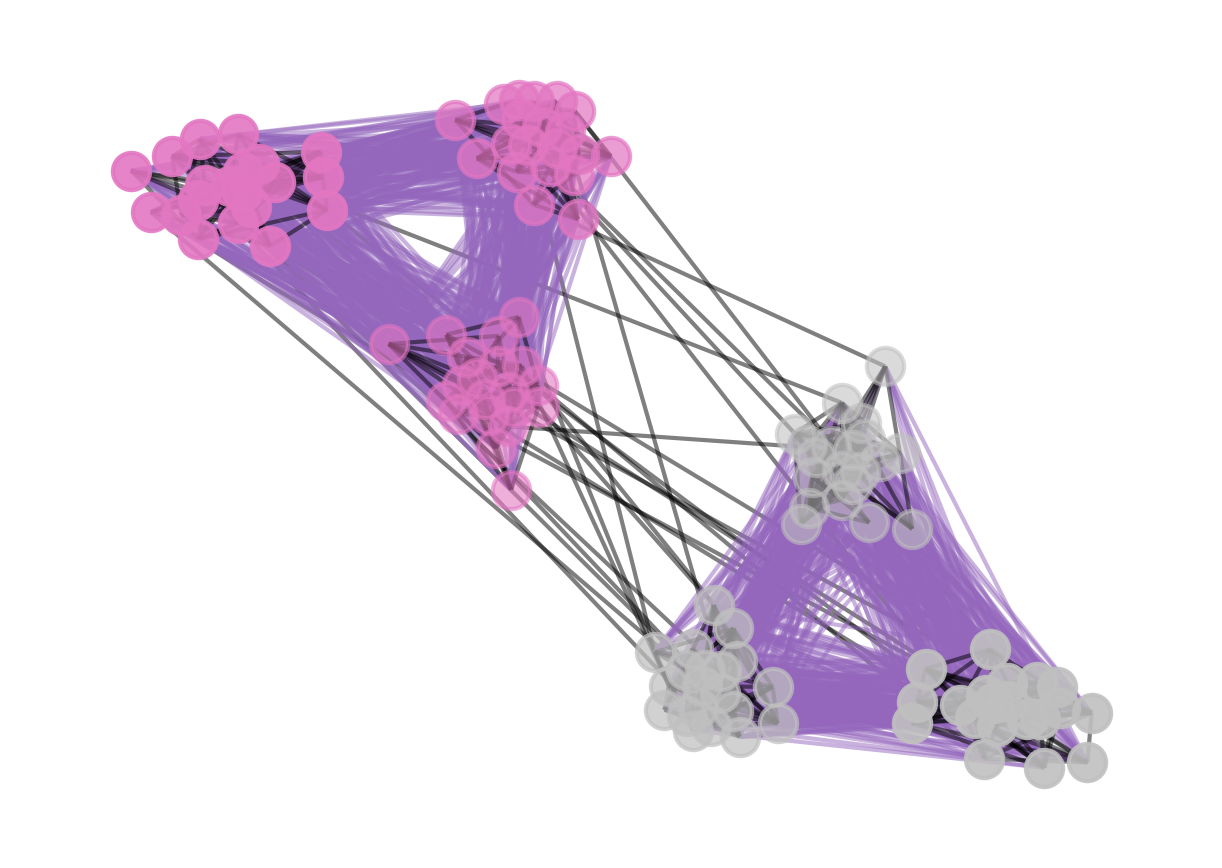}
		\end{tabular}
		\caption{Example of networks generated from $CSBM$ with $n=120$, $p_{in}=0.4$, $p_{out}=0.005$ and $\eta=0$, and two equally sized blocks with $l_1, l_2$ subcommunities, where $l_1=2, l_2=2$ (left), $l_1=2, l_2=3$ (middle) and $l_1=3, l_2=3$ (right), with edges of phase $0$ in black, edges of phase $\pi$ in red and edges of phase $2\pi/3$ in one direction (anticlockwise) and $4\pi/3$ in the opposite direction (clockwise) in purple.}
		\label{fig:csbm-g}
	\end{figure}
	
	\paragraph{Level-two communities.} For the first set of experiments, we generate networks with only one level-one community, while vary the number of level-two communities, $l_1$. Specifically, we consider two cases here: $l_1 = 2$, where the edge weights only have phases in $\{0,\pi\}$ and the graph is reduced to be a signed graph, and $l_1 = 3$, where the edge weights can take complex values. Our experimental results demonstrate that the proposed spectral clustering method can successfully recover the level-two communities, with an NMI above $0.8$, in almost all cases as we vary the edge probability $p_{in}$ and mixing probability $\eta$; see Fig.~\ref{fig:csbm-res-etas}. We also observe a drop in performance as we increase $\eta$ in the case of $l_1=2$, as one may expected, but this is not evident in the case of $l_1=3$, which implies that the method is relatively more robust in the latter with complex weights. 
	\begin{figure}[!ht]
		\centering
		\begin{tabular}{cc}
			\includegraphics[width=.3\textwidth]{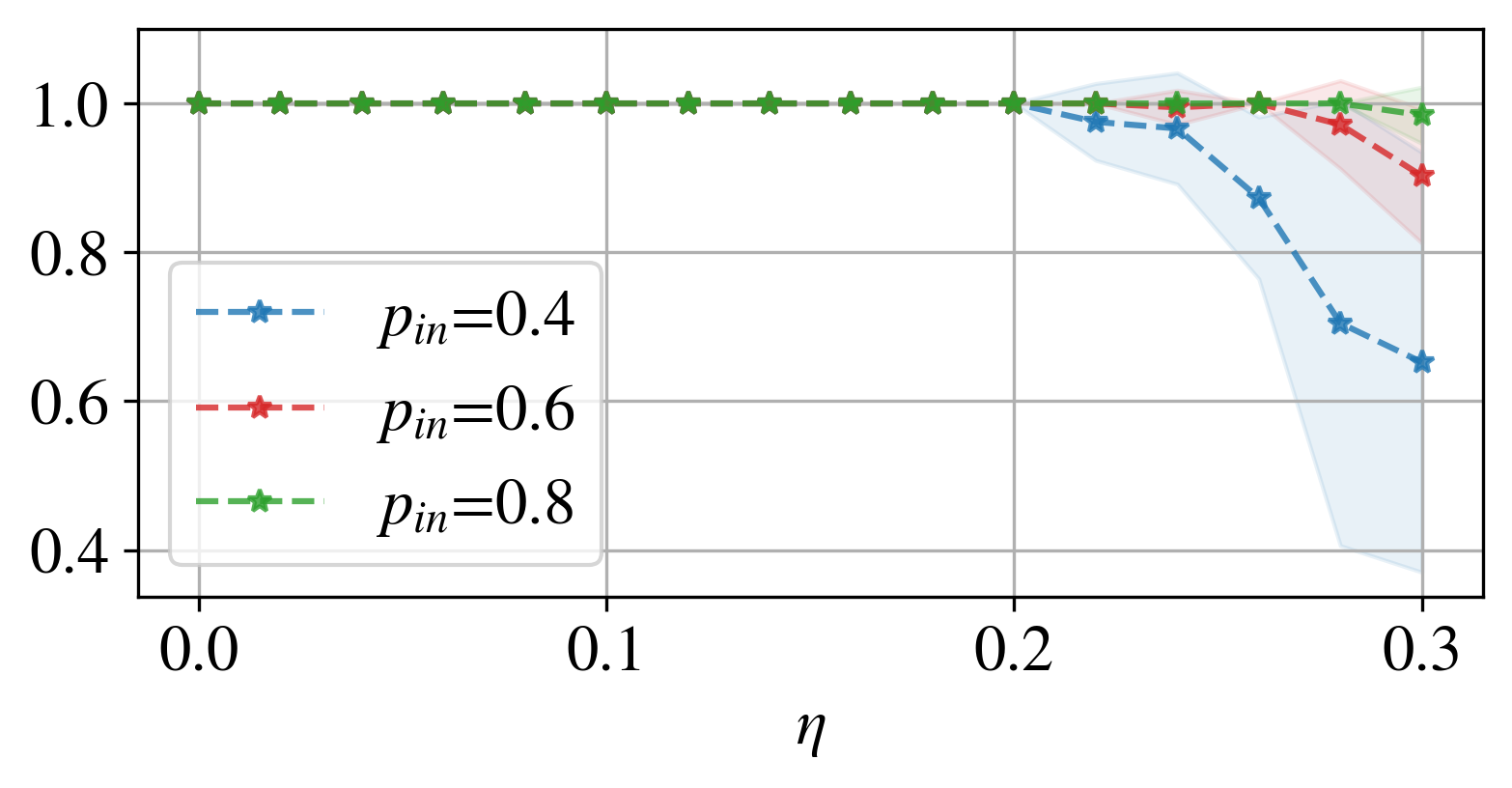} & \includegraphics[width=.3\textwidth]{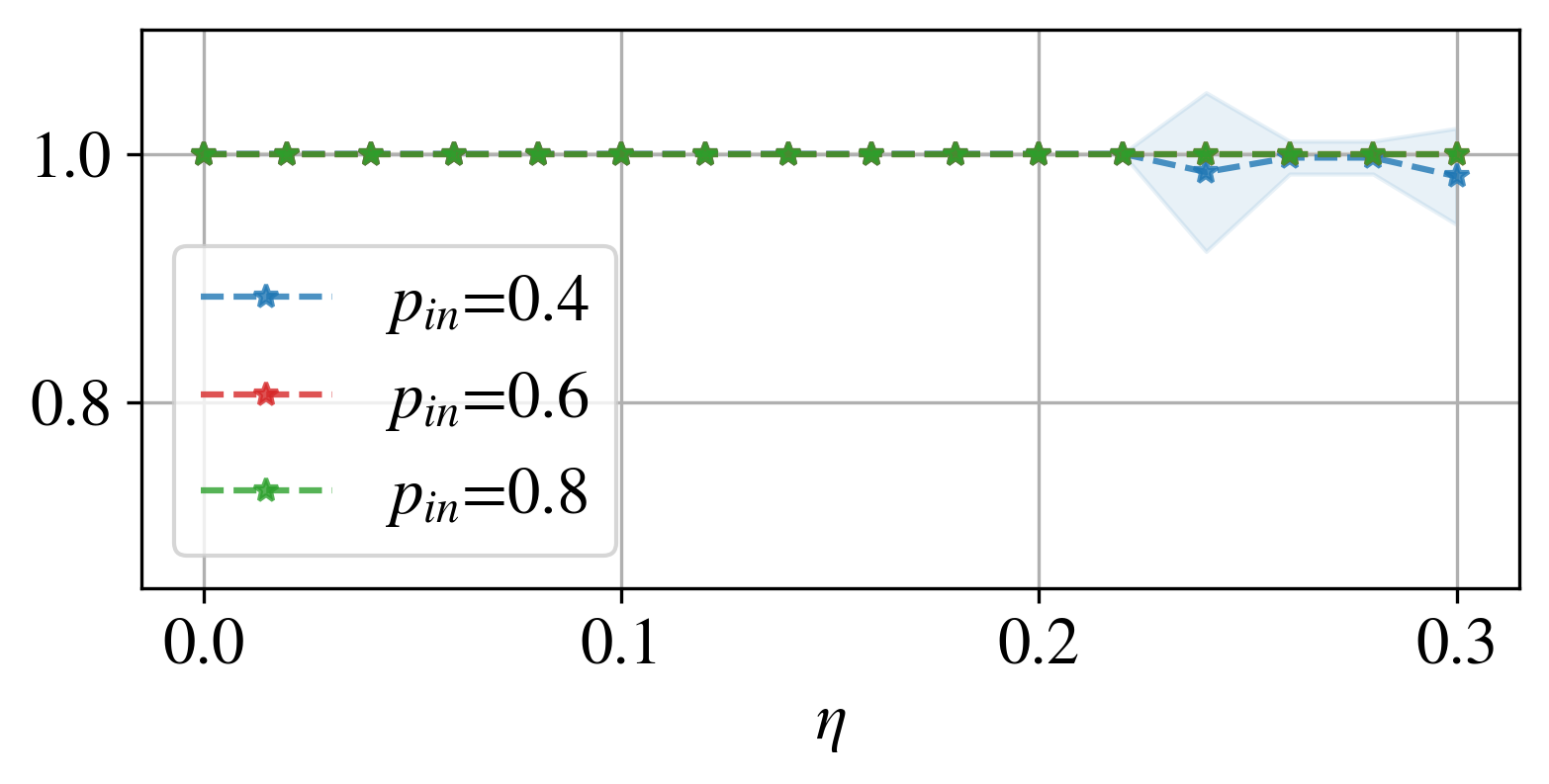}
		\end{tabular}
		\caption{NMIs (mean: dashed line; std: shade) from spectral clustering method on graphs generated from $CSBM$ with one level-one community and $l_1$ level-two communities, where $l_1=2$ (left), $l_1=3$ (right), varying $p_{in}$ (different colors) and $\eta$ (x-axis), and the results are obtained from $20$ samples for each combination of parameters.}
		\label{fig:csbm-res-etas}
	\end{figure}
	
	\paragraph{Two-level communities.} For the following, we generate networks with both level-one and level-two communities. Specifically, we consider two level-one communities, and different combinations of level-two communities: (i) $l_1 = 2$, $l_2 = 2$, which corresponds to signed graphs, (ii) $l_1 = 2$, $l_2 = 3$, and (iii) $l_1 = 3$, $l_2 = 3$; see Fig.~\ref{fig:csbm-g} for examples. As in the case of level-two communities only, the proposed algorithm can also successfully retrieve the two-level community structure, with an NMI greater than $0.8$, as we vary the mixing probability $\eta$ and the edge probabilities $p_{in}, p_{out}$, with $p_{in}$ being sufficiently larger than $p_{out}$, in networks with all three different types of communities; see Fig.~\ref{fig:csbm-res-pouts}. As one may expect from the previous experiments, the performance of the algorithm drops when we increase $p_{out}$ and $\eta$, and this effect is relatively more evident in cases (i) and (ii) where there are level-one communities that are effectively signed. 
	\begin{figure}[!ht]
		\centering
		\begin{tabular}{ccc}
			\includegraphics[width=.3\textwidth]{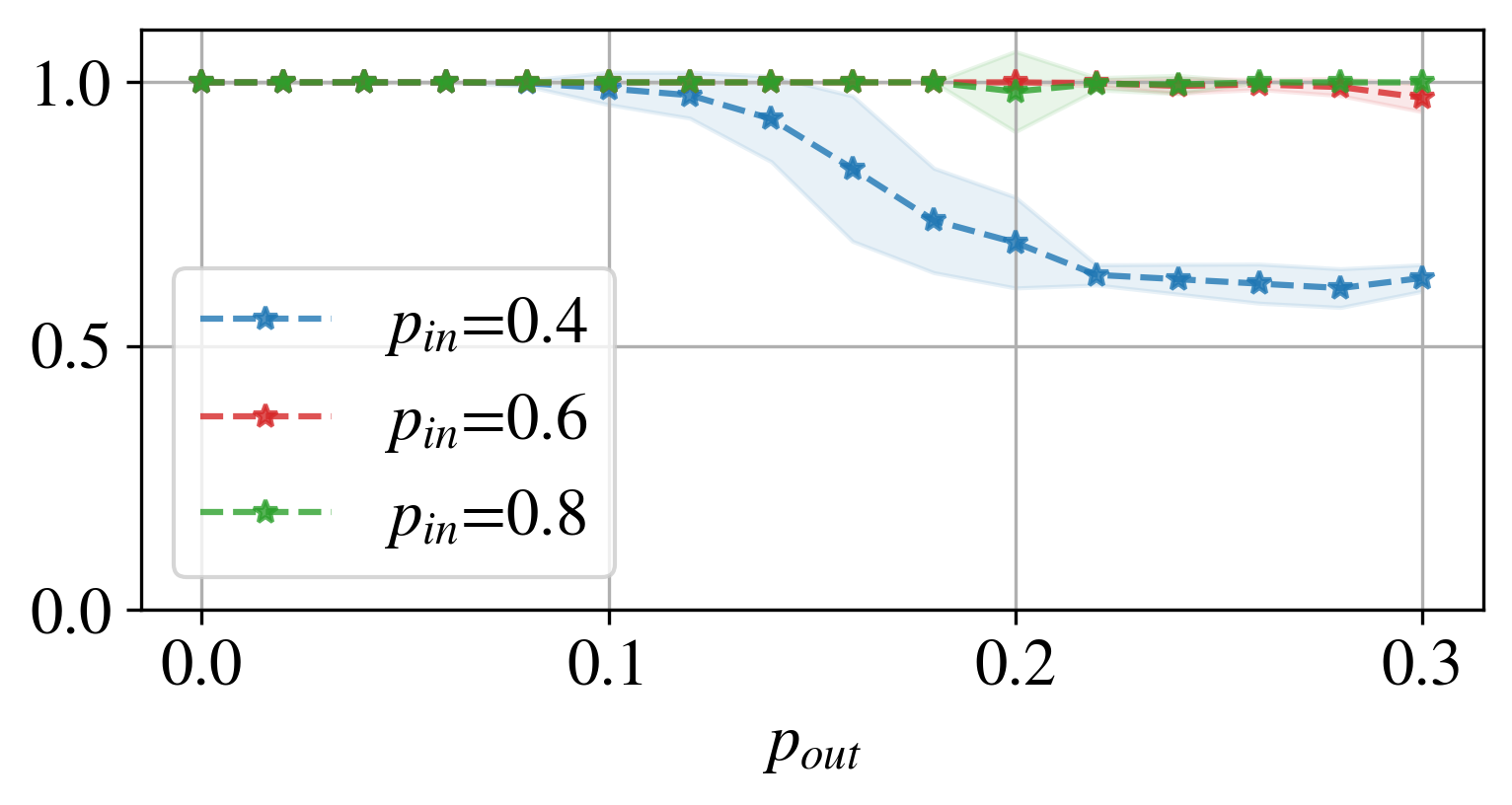} & \includegraphics[width=.3\textwidth]{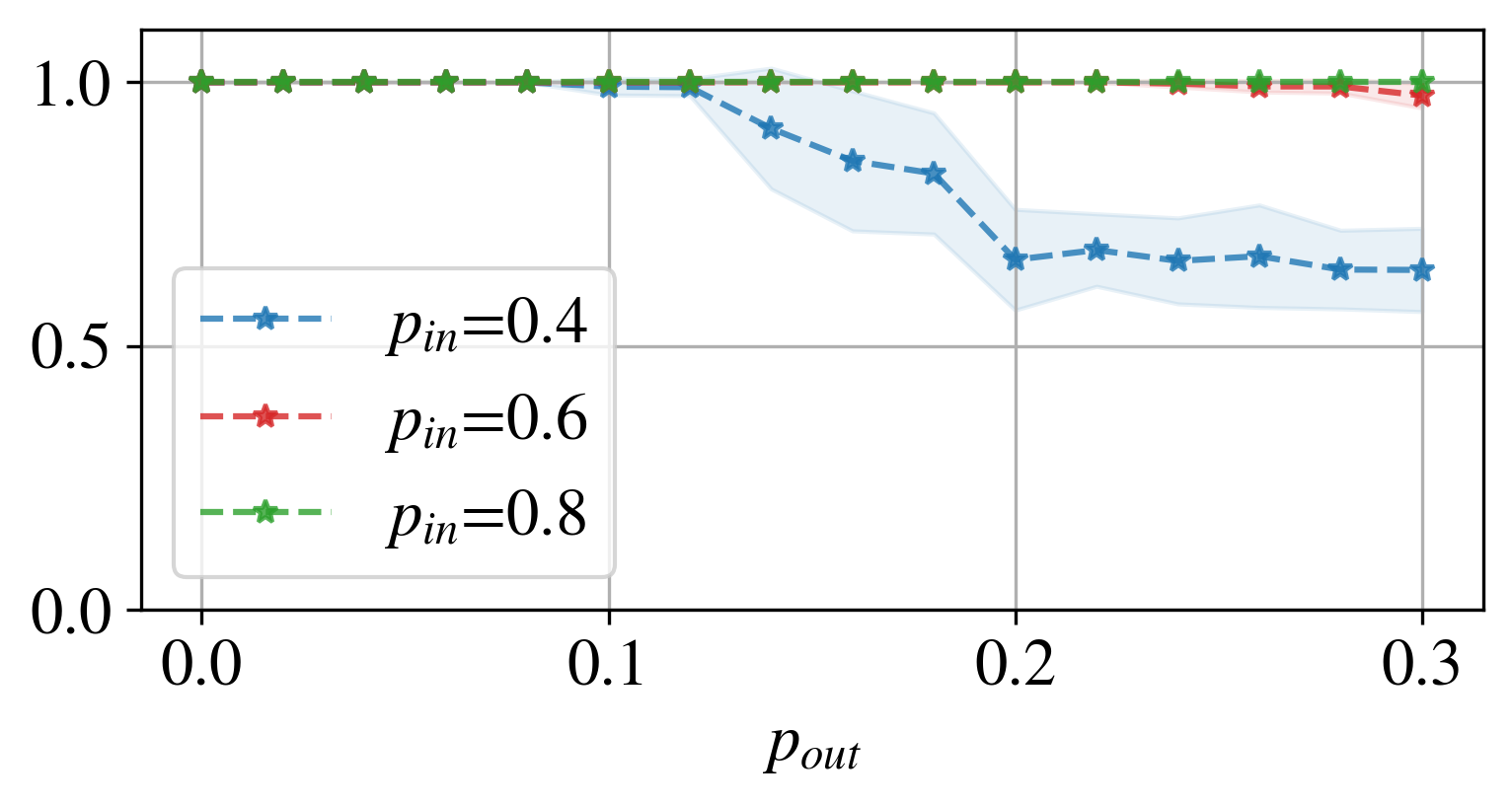} & \includegraphics[width=.3\textwidth]{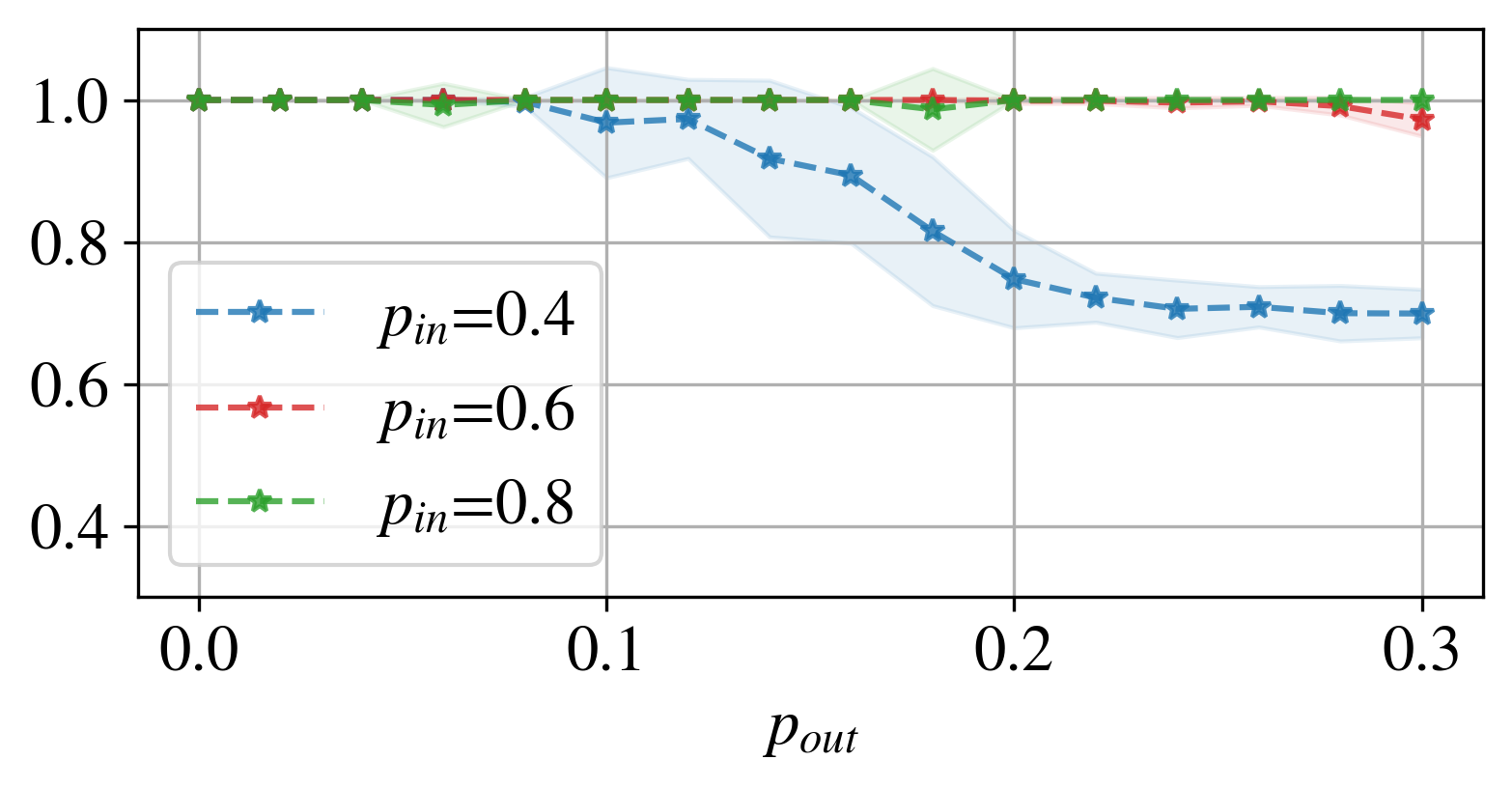}\\
			\includegraphics[width=.3\textwidth]{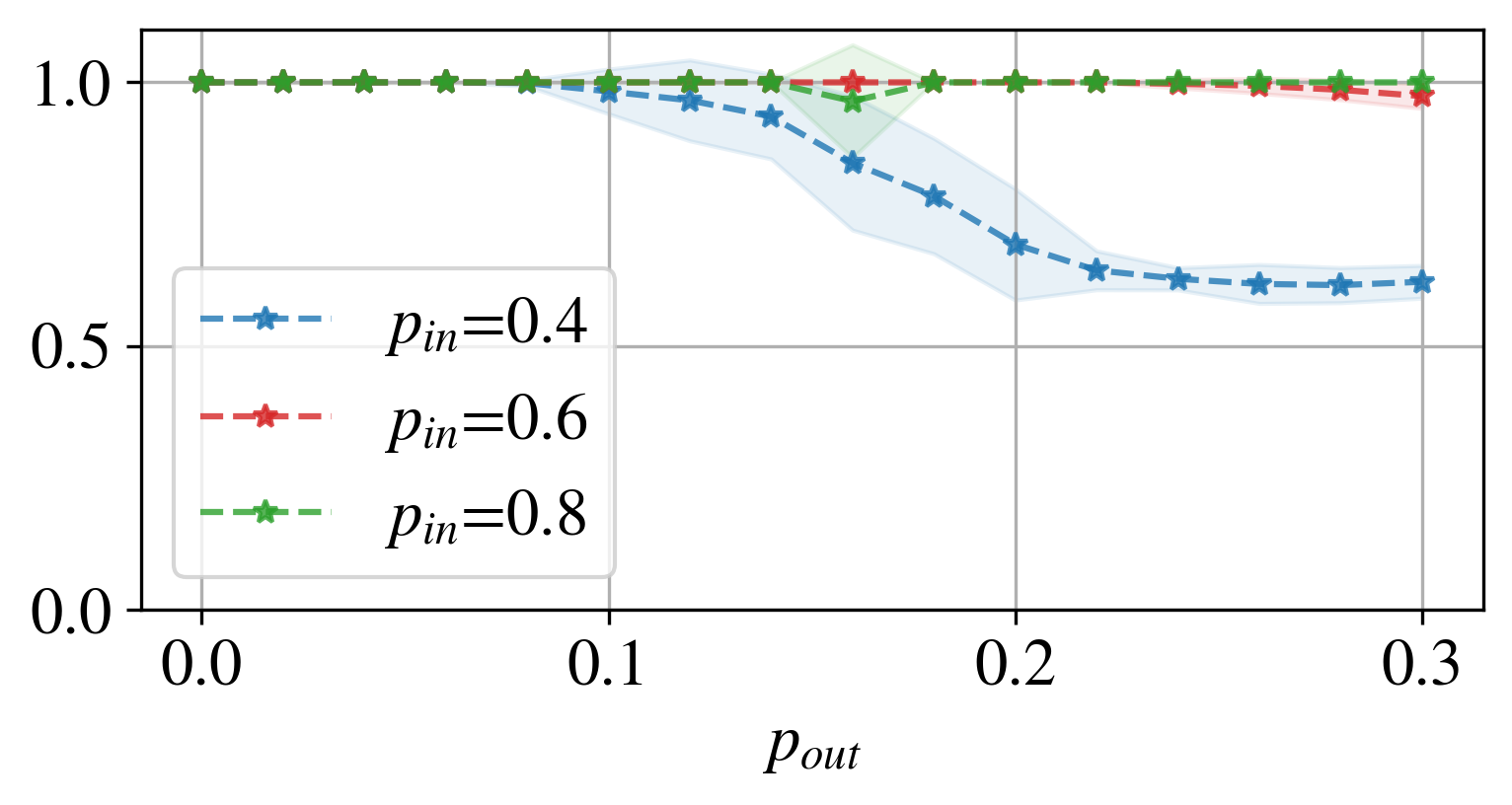} & \includegraphics[width=.3\textwidth]{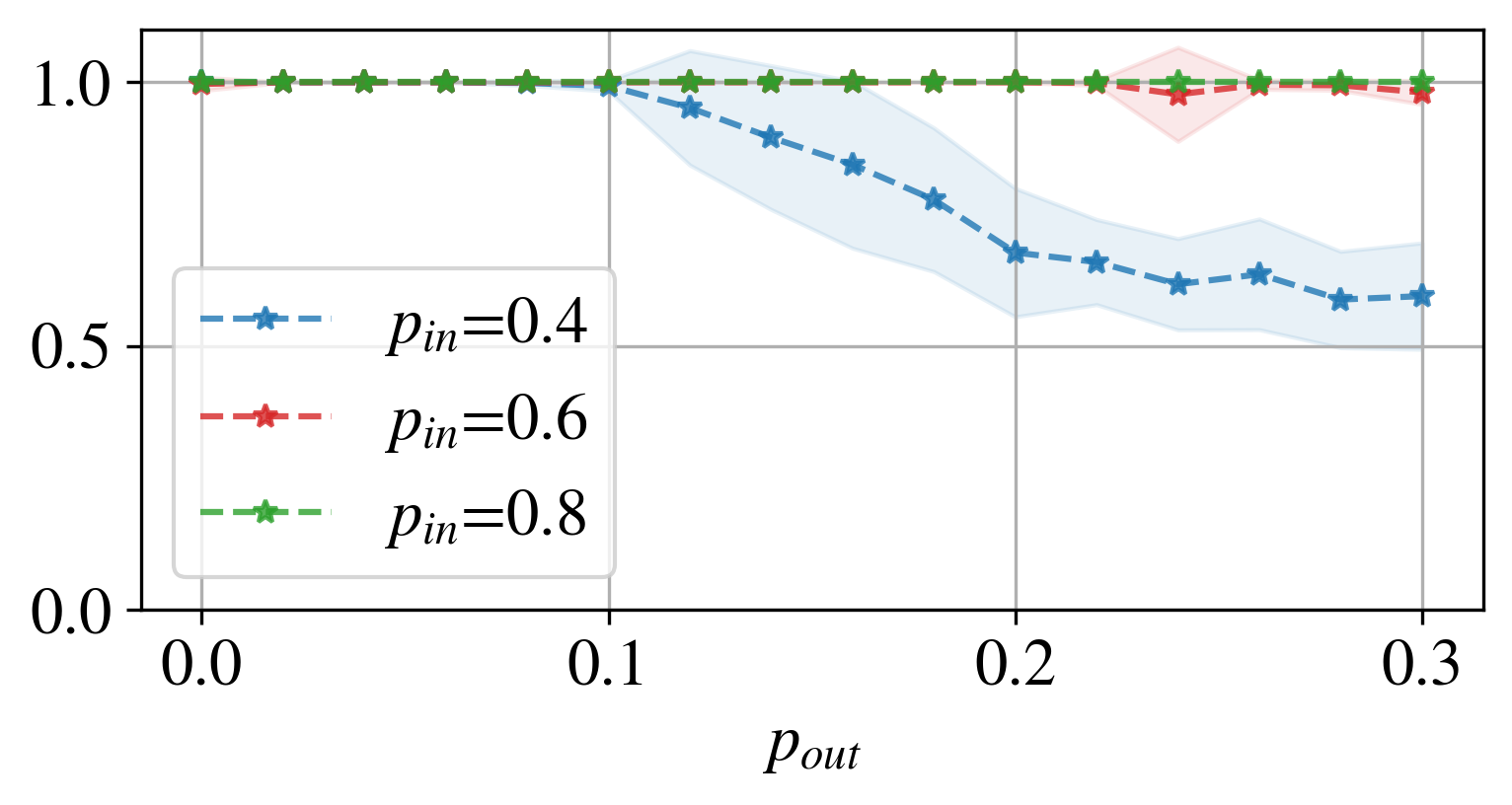} & \includegraphics[width=.3\textwidth]{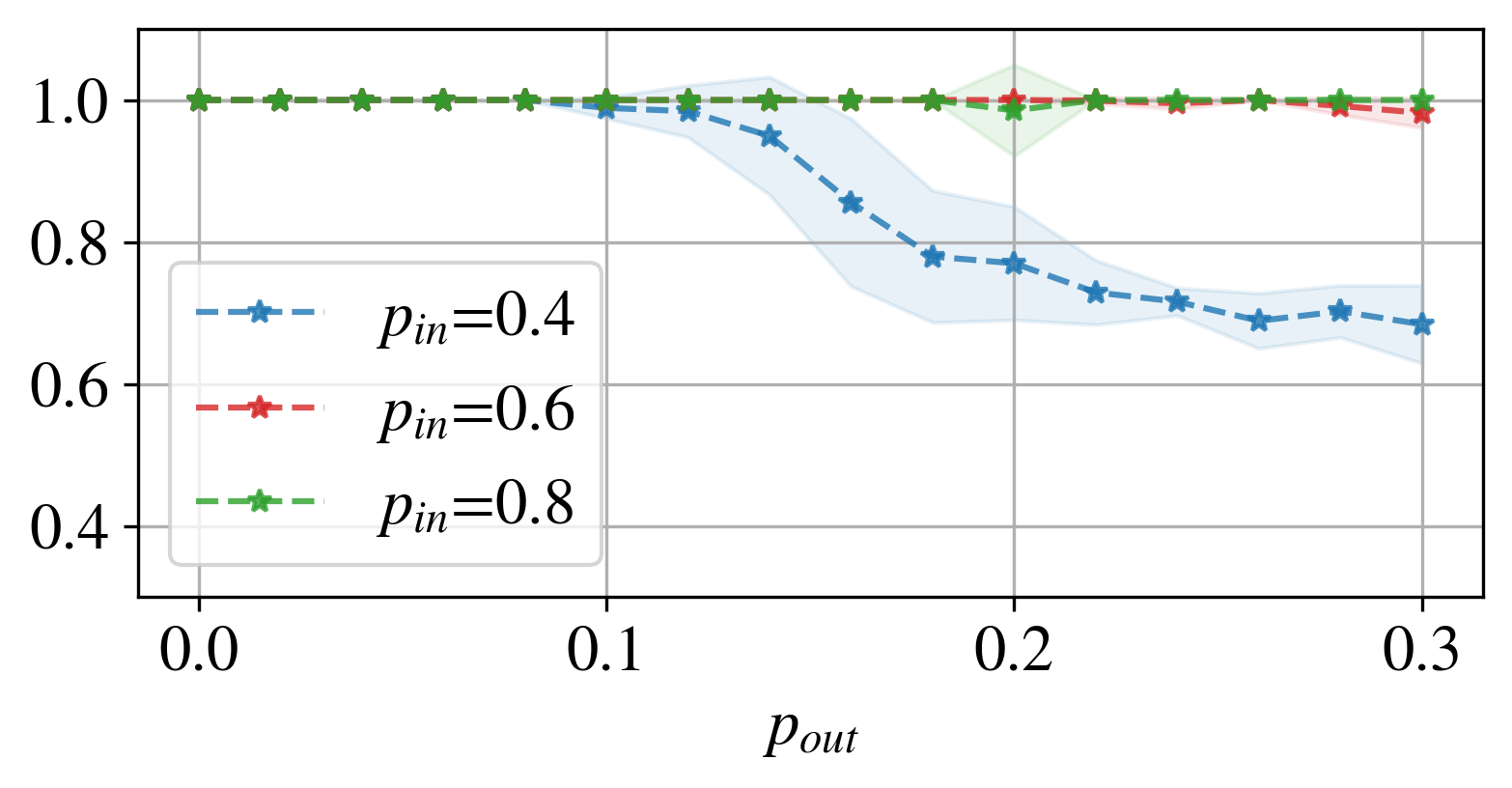}\\
			\includegraphics[width=.3\textwidth]{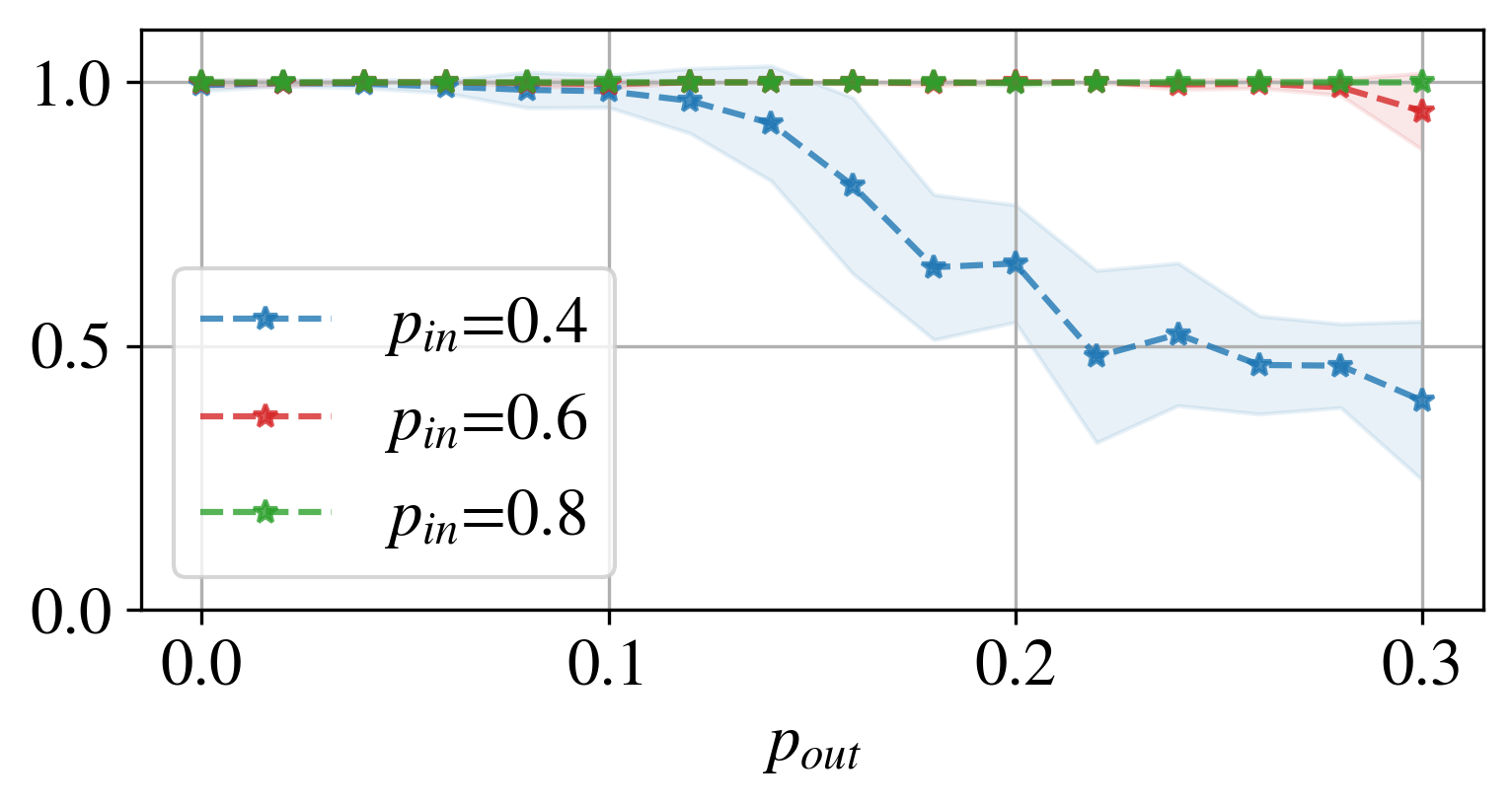} & \includegraphics[width=.3\textwidth]{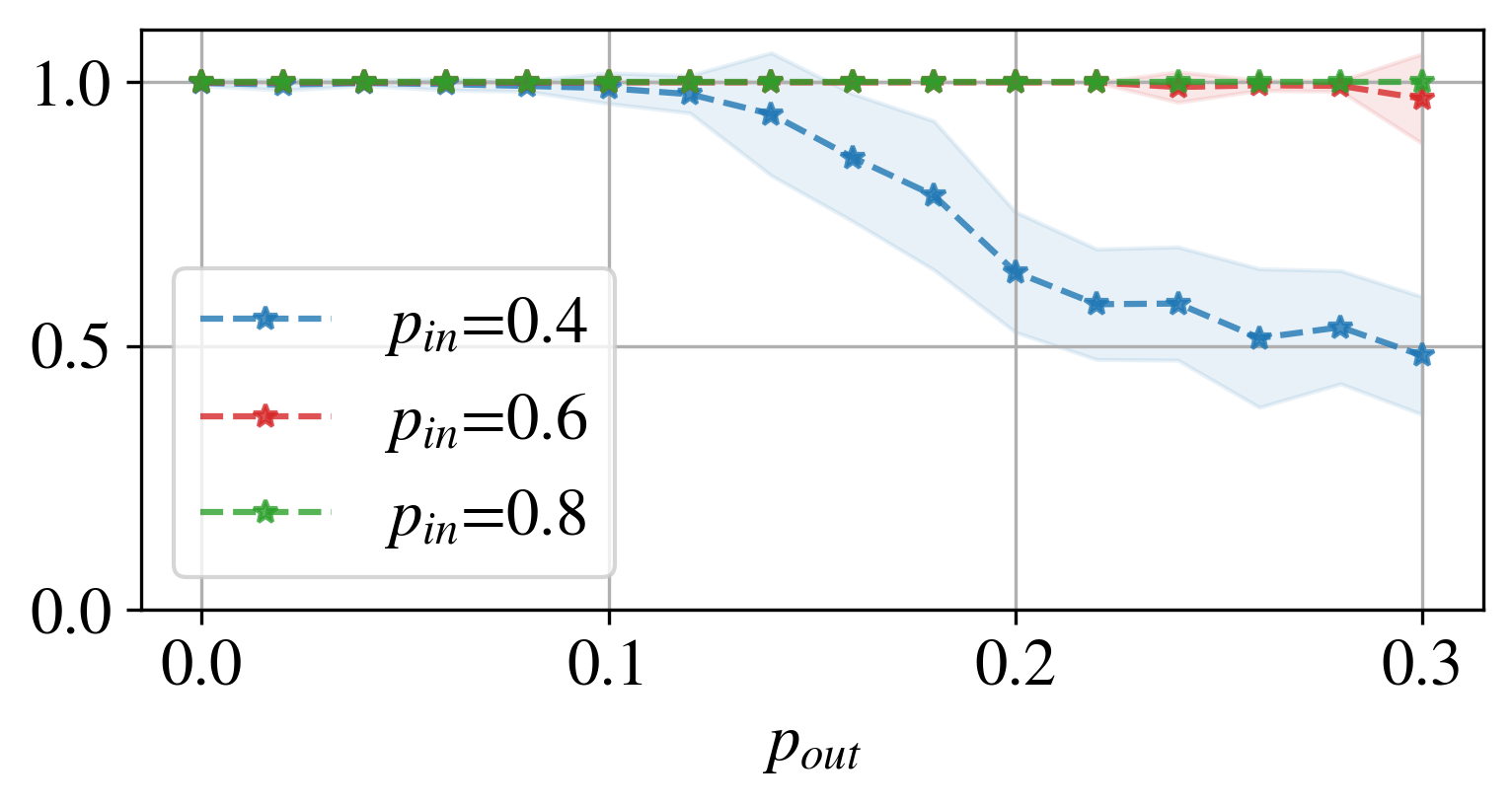} & \includegraphics[width=.3\textwidth]{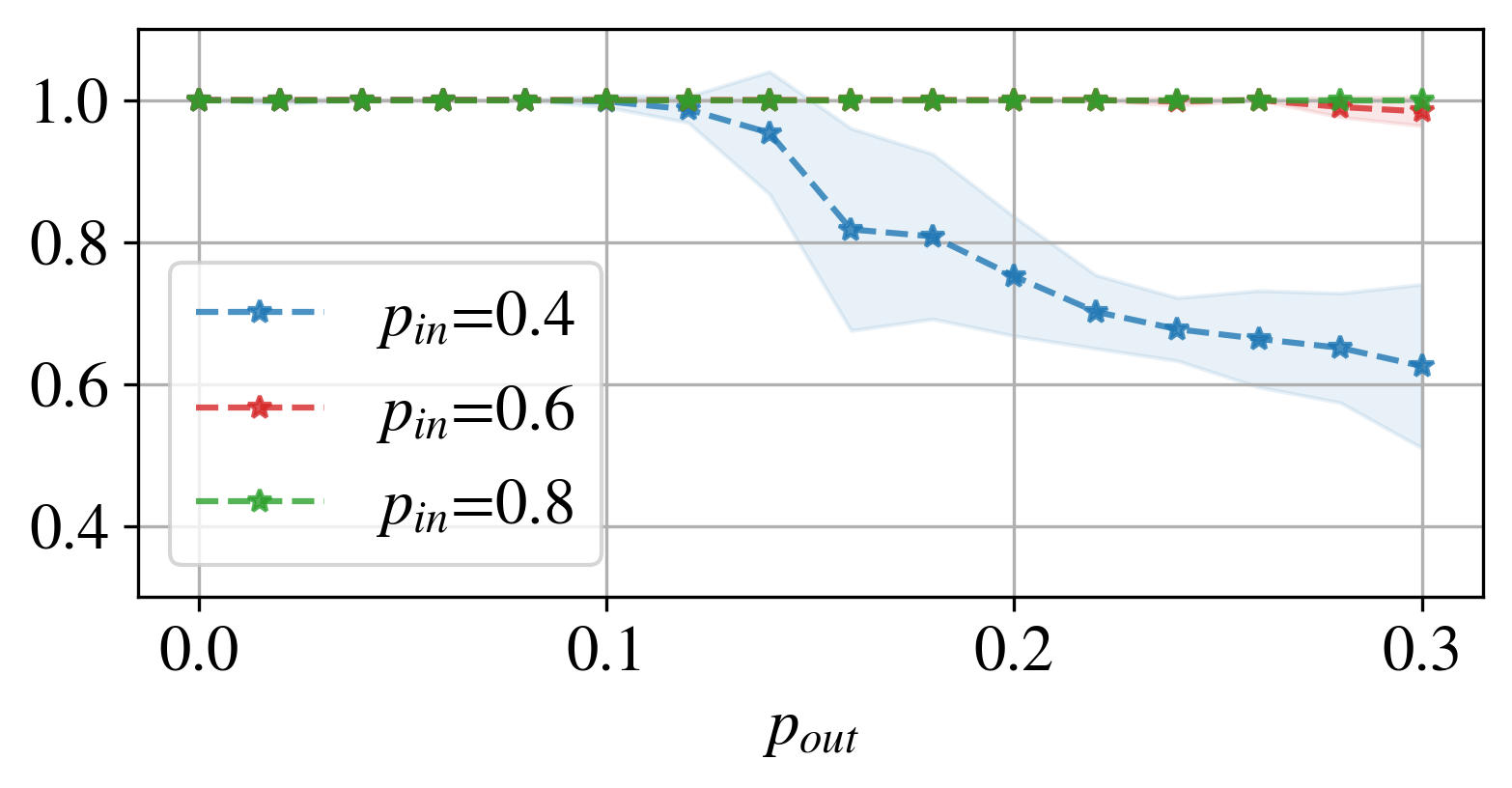}\\
			\includegraphics[width=.3\textwidth]{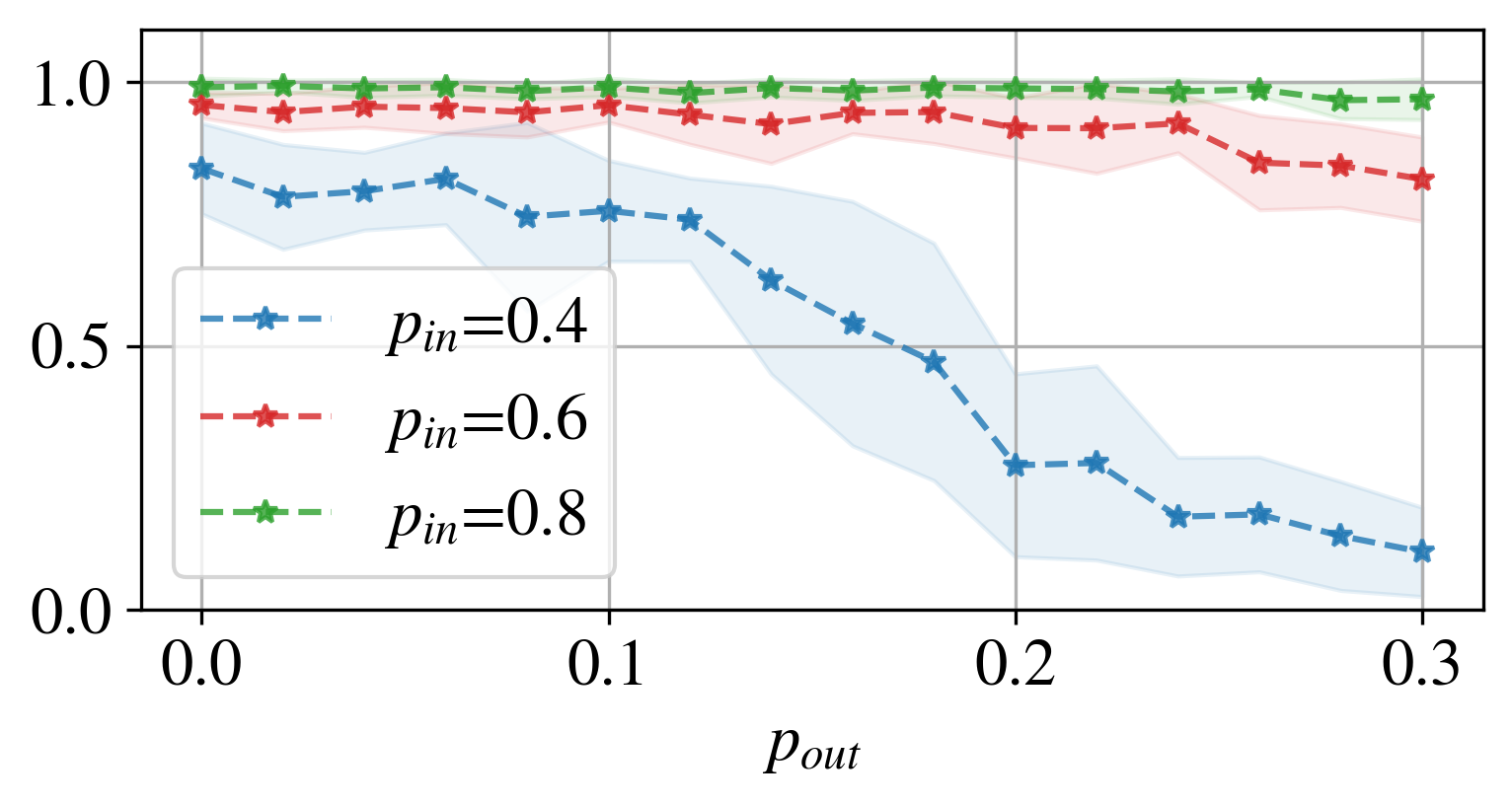} & \includegraphics[width=.3\textwidth]{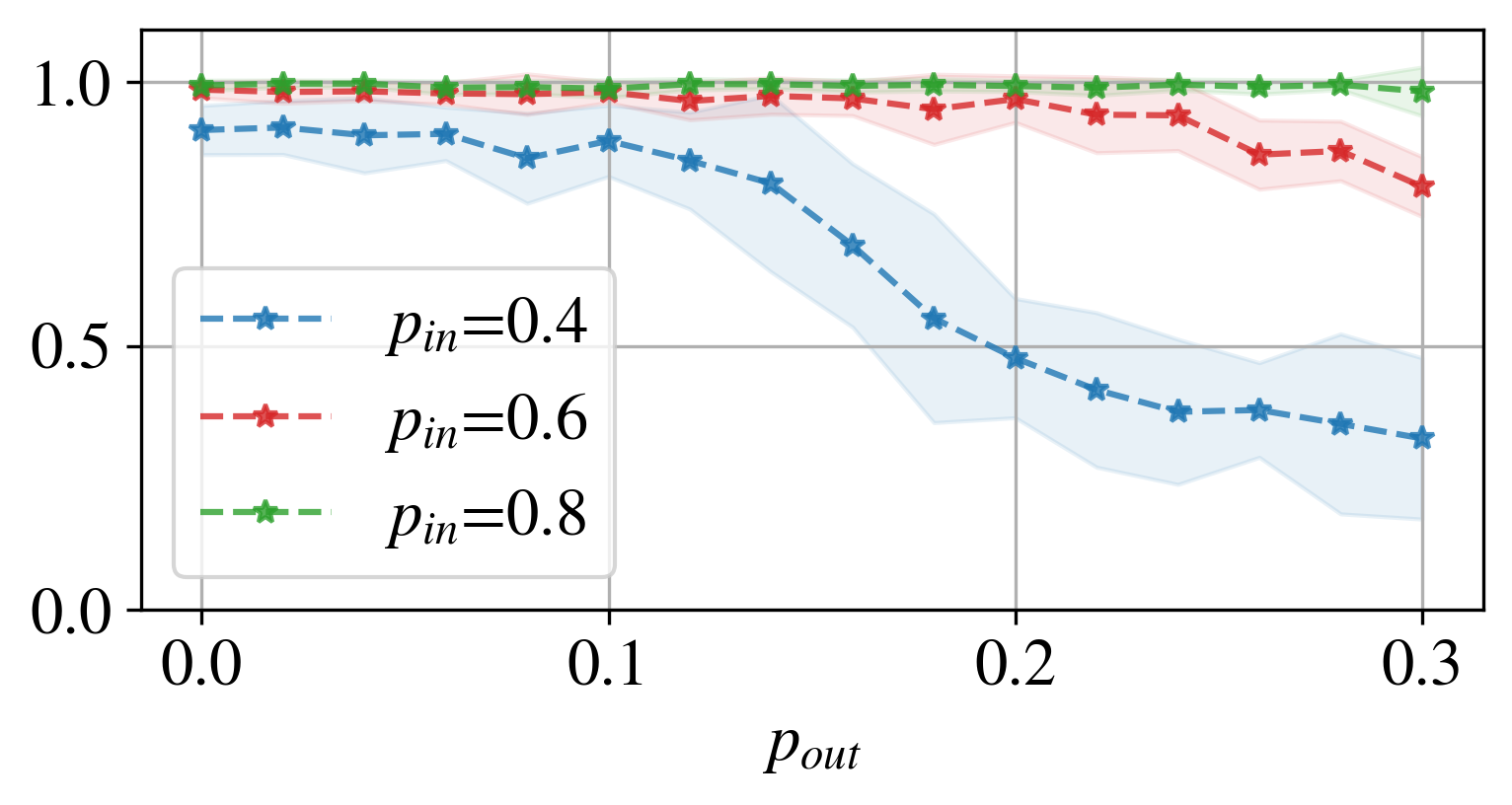} & \includegraphics[width=.3\textwidth]{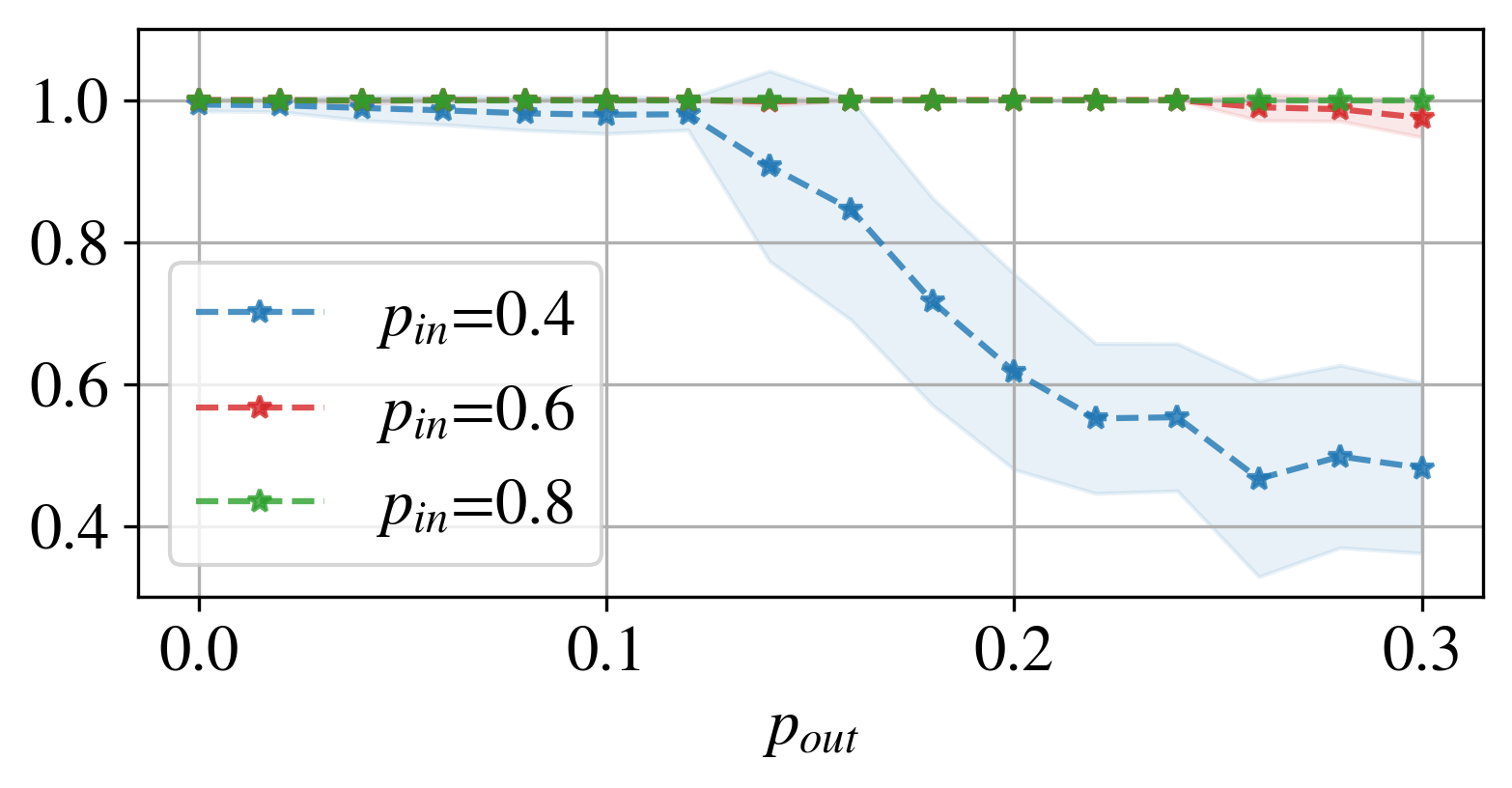}
		\end{tabular}
		\caption{NMIs (mean: dashed line; std: shade) from spectral clustering method on graphs generated from $CSBM$ with $\{l_1, l_2\}$ where $l_1=2, l_2=2$ (left), $l_1=2, l_2=3$ (middle) and $l_1=3, l_2=3$ (right), and varying $p_{in}$ (different colors), $p_{out}$ (x-axis), $\eta$ (different rows: $\eta=0$ in row 1, $\eta=0.1$ in row 2, $\eta=0.2$ in row 3 and $\eta=0.3$ in row 4), and the results are obtained from $20$ samples for each combination of parameters.}
		\label{fig:csbm-res-pouts}
	\end{figure}
	
	\section{Application: Magnetic Laplacian}\label{sec:app-magL}
	The magnetic Laplacian characterises each directed network as a Hermitian matrix where the imaginary part carries the directional information. 
	It is motivated by the dynamics of a free quantum mechanical particle on a graph under the influence of magnetic fluxes passing through the cycles in the network. 
	Specifically, for a given directed network $H = (V,E_H)$, where $V = \{v_1, \dots, v_n\}$ is the node set and $E_H$ is the edge set with $w_{ij}>0$ encoding the weight of each directed edge $(v_i,v_j)$ ($w_{ij} = 0$ if there is no edge), the magnetic Laplacian $\mathbf{L}^\theta = (L_{ij}^\theta)$ is
	\begin{align*}
		L_{ij}^\theta = 
		\begin{cases}
			\sum_{h}w_{s}(i,h),\quad &\text{if } i=j,\\
			- w_{s}(i,j)T_{i\to j}^\theta,\quad &\text{if } \{(v_i,v_j), (v_j,v_i)\} \cap E\ne \emptyset,\\
			0,\quad &\text{otherwise},
		\end{cases}
	\end{align*}
	where $w_{s}(i,j) = (w_{ij} + w_{ji})/2$ is the symmetrised weight, and $T_{i\to j}^\theta = \exp{\iu \theta a(i,j)}$. $\theta\in [0, 2\pi)$ is an extra parameter introduced and
	\begin{align*}
		a(i,j) = 
		\begin{cases}
			1,\quad &\text{if } (v_i,v_j)\in E, (v_j,v_i)\notin E,\\
			-1,\quad &\text{if } (v_i,v_j)\notin E, (v_j,v_i)\in E,\\
			0,\quad &\text{if } (v_i,v_j)\in E, (v_j,v_i)\in E.
		\end{cases}
	\end{align*}
	
	Within the context of complex-weighted networks, we can see that the magnetic Laplacian is effectively the graph Laplacian of a complex-weighted network, denoted $G^\theta$ hereafter, where each edge can only have phases in $\{0, \theta, 2\pi-\theta\}$. We will show in this section how the results we have developed can further explain the features of the magnetic Laplacian.
	
	\subsection{Eigenvalues: lengths of cycles}
	In this section, we focus on the eigenvalues of the (normalised) magnetic Laplacian, and demonstrate that they are closely related to the \textit{effective lengths} of cycles ignoring the direction (which we refer to as cycles hereafter in this section) in the original graph $H$; see Propositions \ref{pro:mag-0} and \ref{pro:mag-2}. We define the effective length of a cycle by the absolute difference between the number of edges towards one direction and those towards the opposite direction in the cycle; see Fig.~\ref{fig:mag-dcyc-g} for examples. We also group the case where there is no cycles in $H$ into the one where all cycles have effective length $0$. 
	\begin{figure}[!ht]
		\centering
		\begin{tabular}{ccc}
			\includegraphics[width=.2\textwidth]{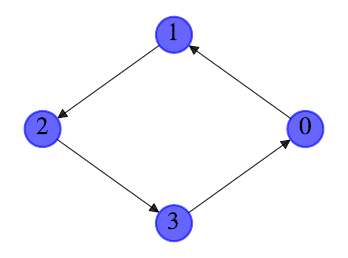} & \includegraphics[width=.2\textwidth]{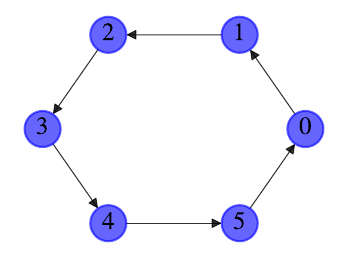} & \includegraphics[width=.2\textwidth]{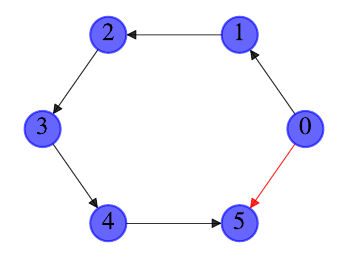}
		\end{tabular}
		\caption{Example of cycles, where the first two have effective length equal to the actual length, but the last one has a smaller effective length $4$ instead of $6$ because of the appearance of an edge contradicting the direction of other edges in the cycle (highlighted in red).}
		\label{fig:mag-dcyc-g}
	\end{figure}
	
	\begin{proposition}
		Magnetic Laplacian and normalised magnetic Laplacian have eigenvalue $0$ if and only if $\theta \in\Theta_0$ where
		\begin{align*}
			\Theta_0 = \begin{cases}
				\{2\pi/cd: cd \in C_d\},\quad &\text{if } C_l\ne\emptyset,\\
				[0, 2\pi),\quad &\text{otherwise},
			\end{cases}
		\end{align*}
		$C_l$ contains all nonzero effective lengths of the cycles in $H$, and $C_d$ contains all common divisors of elements in $C_l$.
		\label{pro:mag-0}
	\end{proposition}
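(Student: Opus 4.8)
The plan is to reduce the statement to the balance criterion already available for complex-weighted graphs, and then translate balance of $G^\theta$ into an arithmetic condition on $\theta$. Recall from the opening of this section that the magnetic Laplacian $\mathbf{L}^\theta$ is exactly the complex graph Laplacian of the complex-weighted graph $G^\theta$, whose edge $\{v_i,v_j\}$ carries phase $\theta\, a(i,j)\in\{0,\theta,2\pi-\theta\}$. As noted after Proposition \ref{pro:balance-lambda}, $0$ is an eigenvalue of the complex Laplacian if and only if the underlying graph is balanced \cite{Lange2015MagL}; applied to $G^\theta$ this yields that $\mathbf{L}^\theta$ has eigenvalue $0$ iff $G^\theta$ is balanced. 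For the normalised version, $\mathbf{D}^{-1/2}\mathbf{L}^\theta\mathbf{D}^{-1/2}=\mathbf{I}-\mathbf{P}_h$ has eigenvalue $0$ iff $\mathbf{P}_h$, hence the similar matrix $\mathbf{P}$, has eigenvalue $1$, which by Proposition \ref{pro:balance-lambda} is again equivalent to $G^\theta$ being balanced. Thus both operators acquire eigenvalue $0$ under exactly the same condition, and it suffices to characterise the angles $\theta$ for which $G^\theta$ is balanced.

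Next I would compute the phase of an arbitrary cycle of $G^\theta$. Traversing a cycle $C$ and summing the edge phases $\theta\, a(i,j)$, the total phase equals $\theta\,\Sigma_C$, where $\Sigma_C\coloneqq\sum a(i,j)$ is the signed edge count along $C$; by the definition of effective length, $|\Sigma_C|$ is precisely the effective length of $C$. By Definition \ref{def:balance}, $G^\theta$ is balanced iff every cycle has phase $0 \pmod{2\pi}$, and since reversing orientation only flips the sign of $\Sigma_C$, balance is equivalent to $\theta\,\ell\equiv 0\pmod{2\pi}$ for every effective length $\ell$ occurring in $H$. Cycles of effective length $0$ impose no constraint, so only the nonzero effective lengths collected in $C_l$ matter. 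When $C_l=\emptyset$ (in particular when $H$ has no cycle) every cycle phase vanishes for all $\theta$, giving $\Theta_0=[0,2\pi)$ and settling that case immediately. This whole step is routine once one has the bilinear form \eqref{equ:laplacian-bilinear}, which alternatively lets one read off that a null vector $\mathbf{x}$ must satisfy $x_i=\exp(\iu\theta a(i,j))x_j$ along every edge, the same gauge-consistency condition around cycles.

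It then remains to solve $\theta\,\ell\equiv0\pmod{2\pi}$ simultaneously for all $\ell\in C_l$. Writing $\theta=2\pi s$ with $s\in[0,1)$, the condition reads $s\ell\in\mathbb{Z}$ for every $\ell\in C_l$. Setting $g=\gcd(C_l)$, a Bézout argument shows this holds iff $sg\in\mathbb{Z}$, so the admissible angles are exactly $\theta\in\{2\pi j/g:0\le j<g\}$, i.e.\ the integer multiples of $2\pi/g$ inside $[0,2\pi)$. The remaining work is to present this cyclic solution set in the closed form of the statement, indexed by the common divisors $C_d$ of $C_l$; this is where I expect to spend the most care, since one must keep track of which multiples $2\pi/cd$ (and their iterates) are genuinely distinct angles and verify no admissible $\theta$ is omitted or spuriously included.

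The main obstacle I anticipate is exactly this last matching, together with justifying that $C_l$ is the correct index set. Concretely, I would verify that the signed counts $\{\Sigma_C\}$ ranging over \emph{all} cycles form the subgroup $g\mathbb{Z}\subseteq\mathbb{Z}$ — using that simple cycles span the cycle space and that $a$ induces a homomorphism on it — so that imposing the congruence on the effective lengths in $C_l$ is equivalent to imposing it on every cycle, with no extra angles sneaking in or being lost. Granting that, the reduction to balance (Proposition \ref{pro:balance-lambda} and \cite{Lange2015MagL}) and the cycle-phase computation are mechanical, and the only genuinely delicate point is the number-theoretic translation of ``$\theta\,\ell\equiv0$ for all $\ell\in C_l$'' into the common-divisor description of $\Theta_0$.
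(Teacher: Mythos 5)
Your reduction is exactly the paper's: the published proof is the chain ``$\mathbf{L}^\theta$ has eigenvalue $0$ $\Leftrightarrow$ $\mathbf{L}^\theta_{rw}$ has eigenvalue $0$ $\Leftrightarrow$ $\mathbf{P}^\theta=\mathbf{I}-\mathbf{L}^\theta_{rw}$ has eigenvalue $1$ $\Leftrightarrow$ $G^\theta$ is balanced (Proposition \ref{pro:balance-lambda})'', followed by the single sentence that the condition on $\theta$ ``can be obtained by Definition \ref{def:balance}.'' Your cycle-phase computation (phase $\theta\,\Sigma_C$ with $\abs{\Sigma_C}$ the effective length) and the B\'ezout step are correct and make explicit what the paper leaves implicit; your worry about simple cycles versus all cycles is unnecessary, since $C_l$ is defined via all cycles of $H$, so the congruence on $C_l$ is verbatim the balance condition.

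However, the step you defer --- recasting your solution set $\{2\pi j/g : 0\le j< g\}$, $g=\gcd(C_l)$, in the closed form $\{2\pi/cd : cd\in C_d\}$ --- cannot be completed, and this is a genuine discrepancy rather than bookkeeping. With integer divisors, $\{2\pi/cd : cd\mid g\}=\{2\pi j/g : j\mid g\}$, which is a \emph{proper} subset of your (correct) solution set as soon as $g\ge 3$. Concretely, let $H$ be a directed triangle, so $C_l=\{3\}$, $C_d=\{1,3\}$, and $\Theta_0=\{2\pi,2\pi/3\}$, i.e.\ $\{0,2\pi/3\}$ modulo $2\pi$; but $\theta=4\pi/3$ gives every cycle phase $3\cdot 4\pi/3=4\pi\equiv 0\pmod{2\pi}$, so $G^\theta$ is balanced and $\mathbf{L}^\theta$ has eigenvalue $0$, yet $4\pi/3\notin\Theta_0$. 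Thus the ``only if'' direction of the proposition fails over all of $[0,2\pi)$: your derivation is right, and it is the stated closed form that is off. The statement becomes exactly correct once $\theta$ is restricted a priori to the form $2\pi/r$ with $r$ a positive integer --- the restriction the paper adopts immediately after the proposition and in all its experiments --- since $\{2\pi/r : r\in\mathbb{Z}^+\}\cap\{2\pi j/g : 0\le j<g\}=\{2\pi/r : r\mid g\}=\Theta_0$. So your write-up should either prove the corrected statement (eigenvalue $0$ iff $\theta\in(2\pi/g)\mathbb{Z}\cap[0,2\pi)$) and record this restriction, or reinterpret ``common divisor'' to allow rational values $g/j$; expecting the matching to go through as stated is the one step of your plan that would fail.
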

	\begin{proof}
		The magnetic Laplacian $\mathbf{L}^\theta$ has eigenvalue $0$ $\Leftrightarrow$ the normalised magnetic Laplacian $\mathbf{L}^\theta_{rw} = \mathbf{D}^{\theta -1}\mathbf{L}^\theta$, where $\mathbf{D}^\theta$ is the complex degree matrix of $G^\theta$, has eigenvalue $0$ $\Leftrightarrow$ the complex transition matrix $\mathbf{P}^\theta = \mathbf{I} - \mathbf{L}^\theta_{rw}$ has eigenvalue $1$ $\Leftrightarrow$ $G^\theta$ is balanced by Proposition \ref{pro:balance-lambda}. The condition can be obtained by Definition \ref{def:balance} of structural balance. 
	\end{proof}
	
	\begin{proposition}
		Normalised magnetic Laplacian has eigenvalue $2$ if and only if $\theta \in\Theta_2$ where 
		\begin{align*}
			\Theta_2 = \begin{cases}
				\{(2\pi/cd + \pi\mod 2\pi): cd \in C_d\},\quad &\text{if } C_l\ne\emptyset,\\
				[0, 2\pi),\quad &\text{otherwise}.
			\end{cases}
		\end{align*}
		\label{pro:mag-2}
	\end{proposition}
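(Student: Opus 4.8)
The plan is to mirror the proof of Proposition \ref{pro:mag-0} exactly, but replace the role of balance by that of antibalance, exploiting the equivalence between "eigenvalue $2$ of the normalised Laplacian" and "eigenvalue $-1$ of the complex transition matrix" established in Proposition \ref{pro:antibalance-lambda}. First I would observe that the normalised magnetic Laplacian $\mathbf{L}^\theta_{rw} = \mathbf{I} - \mathbf{P}^\theta$ has eigenvalue $2$ if and only if the complex transition matrix $\mathbf{P}^\theta$ has eigenvalue $-1$. By Proposition \ref{pro:antibalance-lambda}, this holds if and only if the associated complex-weighted graph $G^\theta$ is antibalanced. So the entire proposition reduces to computing, as a function of $\theta$, exactly when $G^\theta$ is antibalanced, using Definition \ref{def:balance}.

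Next I would translate the antibalance condition into an arithmetic condition on $\theta$. By Definition \ref{def:balance}, $G^\theta$ is antibalanced precisely when every cycle, after adding $\pi$ to each composing edge, has total phase $0$. A cycle in $G^\theta$ of effective length $\ell$ (the signed count of forward minus backward edges, as defined before the proposition) contributes phase $\ell\theta$ from the magnetic phases themselves; adding $\pi$ to each of its $n_c$ composing edges adds $\pi n_c$. However, one should be careful: the phase contribution that matters for balance is $\ell\theta$, since edges traversed against their orientation contribute $-\theta$ and the $a(i,j)=0$ edges contribute nothing, so the net phase is exactly $\ell\theta$ where $\ell$ is the effective length. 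The condition from antibalance is therefore that $\ell\theta + \pi n_c \equiv 0 \pmod{2\pi}$ for every cycle simultaneously. The comparison with the balance case in Proposition \ref{pro:mag-0}, where the requirement was $\ell\theta \equiv 0 \pmod{2\pi}$ giving $\theta \in \{2\pi/cd\}$, suggests the answer should be a shift of that set by $\pi$, which matches the claimed $\Theta_2 = \{(2\pi/cd + \pi \bmod 2\pi)\}$.

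I anticipate that the main obstacle is the bookkeeping of the $\pi n_c$ term and reconciling it with the clean shift-by-$\pi$ formula, since $n_c$ (the actual edge count) and $\ell$ (the effective length) differ and $n_c$ need not have fixed parity across cycles. The cleanest route, which I would adopt, is to bypass per-cycle counting by invoking the relationship already used in the proof sketch of Theorem \ref{the:antibalance-part}: $G^\theta$ is antibalanced if and only if the graph $G^\theta_n$ obtained by adding $\pi$ to every edge is balanced. Adding $\pi$ to each magnetic edge replaces each phase $\theta a(i,j)$ by $\theta a(i,j) + \pi$; equivalently $G^\theta_n$ is the complex-weighted graph with weight matrix $-\mathbf{W}^\theta$. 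Balance of $G^\theta_n$ via Proposition \ref{pro:mag-0}'s computation then requires each cycle to have phase $\ell\theta + \pi n_c \equiv 0$, and one checks that this system is solvable iff $\theta - \pi$ (equivalently $\theta + \pi$) lies in the balance set $\Theta_0$, i.e.\ $\theta \equiv 2\pi/cd + \pi \pmod{2\pi}$ for some common divisor $cd \in C_d$. When $C_l = \emptyset$, every cycle has effective length $0$ and the antibalance condition becomes vacuous on the $\theta$-dependent part, yielding $\Theta_2 = [0,2\pi)$, consistent with the stated formula. Finally I would note the degenerate case $C_l=\emptyset$ matches the second branch directly, completing the characterisation.
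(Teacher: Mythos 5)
Your reduction is precisely the paper's: $\mathbf{L}^\theta_{rw}$ has eigenvalue $2$ iff $\mathbf{P}^\theta = \mathbf{I} - \mathbf{L}^\theta_{rw}$ has eigenvalue $-1$, iff $G^\theta$ is antibalanced by Proposition \ref{pro:antibalance-lambda}; the paper's proof then simply asserts that the condition on $\theta$ follows from Definition \ref{def:balance}. Up to that point your proposal and the paper coincide.

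The gap lies in the completion you attempt beyond that point, and it sits exactly at the step you yourself flagged as the main obstacle and then deferred with ``one checks''. Antibalance of $G^\theta$ demands $\ell\theta + n_c\pi \equiv 0 \pmod{2\pi}$ for every cycle, where $\ell$ is the effective length and $n_c$ the number of edges, and to identify the solution set with $\{\theta : \theta + \pi \in \Theta_0\}$ you need $n_c \equiv \ell \pmod{2}$ for every cycle. The negation trick does not bypass this: $G^\theta_n$ indeed has weight matrix $-\mathbf{W}^\theta$, but Proposition \ref{pro:mag-0} can be applied to $G^\theta_n$ only if $-\mathbf{W}^\theta$ is itself a magnetic weight matrix $\mathbf{W}^{\theta'}$, which holds (with $\theta' = \theta + \pi$) exactly when $H$ has no bidirectional edges --- a bidirectional edge carries phase $0$ in every $G^{\theta'}$ but phase $\pi$ in $G^\theta_n$. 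For cycles made of one-way edges the parity fact is immediate, since $n_c = f + b$ and $\ell = \abs{f - b}$ have the same parity; but it fails when a cycle contains an odd number of bidirectional ($a(i,j)=0$) edges. Concretely, take $H$ with edges $1\to 2$, $2\to 3$ and both $(1,3)$, $(3,1)$: the triangle has $\ell = 2$ and $n_c = 3$, so $G^\theta$ is antibalanced iff $2\theta + 3\pi \equiv 0 \pmod{2\pi}$, i.e.\ $\theta \in \{\pi/2, 3\pi/2\}$, whereas the shift-by-$\pi$ formula predicts $\{0, \pi\}$. So your argument (and, strictly speaking, the stated formula itself) is valid only under an implicit assumption such as ``$H$ contains no reciprocal edge pairs'', which does cover all of the paper's synthetic examples; to close the proof you should either prove the parity fact under that explicit hypothesis or state the restriction. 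In fairness, the paper's own one-line proof glosses over the identical point.
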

	\begin{proof}
		The normalised magnetic Laplacian $\mathbf{L}^\theta_{rw}$ has eigenvalue $2$ $\Leftrightarrow$ the complex transition matrix $\mathbf{P}^\theta$ has eigenvalue $-1$ $\Leftrightarrow$ $G^\theta$ is antibalanced by Proposition \ref{pro:antibalance-lambda}. The condition can be obtained by Definition \ref{def:balance} of structural antibalance. 
	\end{proof} 
	Hence, only if $\theta = 2\pi/r$ with $r$ being a positive integer, can the smallest eigenvalue be $0$ when there are cycles of nonzero effective lengths, and we will consider $\theta$ in this specific form hereafter. Further, the change of the smallest/largest eigenvalues of the normalised magnetic Laplacian while sweeping over all possible values of $r$ can exhibit different behaviour, if the underlying graphs are different w.r.t.~the effective length of cycles. Hence, it also provides an approach to characterise a graph, and accordingly, quantify the difference between graphs. In the following, we numerically explore this feature in different types of graphs, where we sweep over integer values of $r=2\pi/\theta$ between $1$ and $100$.  
	
	\paragraph{Directed cycles.} One straightforward example to examine the feature is when the directed graph is simply a directed cycle of size $n$, where each divisor of $n$ is also a common divisor of all effective lengths. Our experimental results verify that when $r$ is a divisor of $n$, the smallest eigenvalue of the normalised magnetic Laplacian is $0$, and when $(2\pi/(2\pi/r-\pi))$ where $r>2$ is a divisor of $n$, the largest eigenvalue of the normalised magnetic Laplacian is $2$; see Fig.~\ref{fig:mag-dcyc-lams}. We also include more examples in section \ref{sec:sm-magnetic} in the Appendix. Furthermore, we found that the change points are the same in the curve describing the smallest eigenvalues and the one for the largest eigenvalues. However, for directed cycles of odd lengths, they change in the same direction, while for directed cycles of even lengths, they change in the opposite direction. This is expected, since $G^\theta$ is bipartite in the latter, and from Proposition \ref{pro:struct-ba-bi}, if $G^\theta$ is balanced, $G^\theta$ is also antibalanced.  
	\begin{figure}[!ht]
		\centering
		\begin{tabular}{cc}
			\includegraphics[width=.4\textwidth]{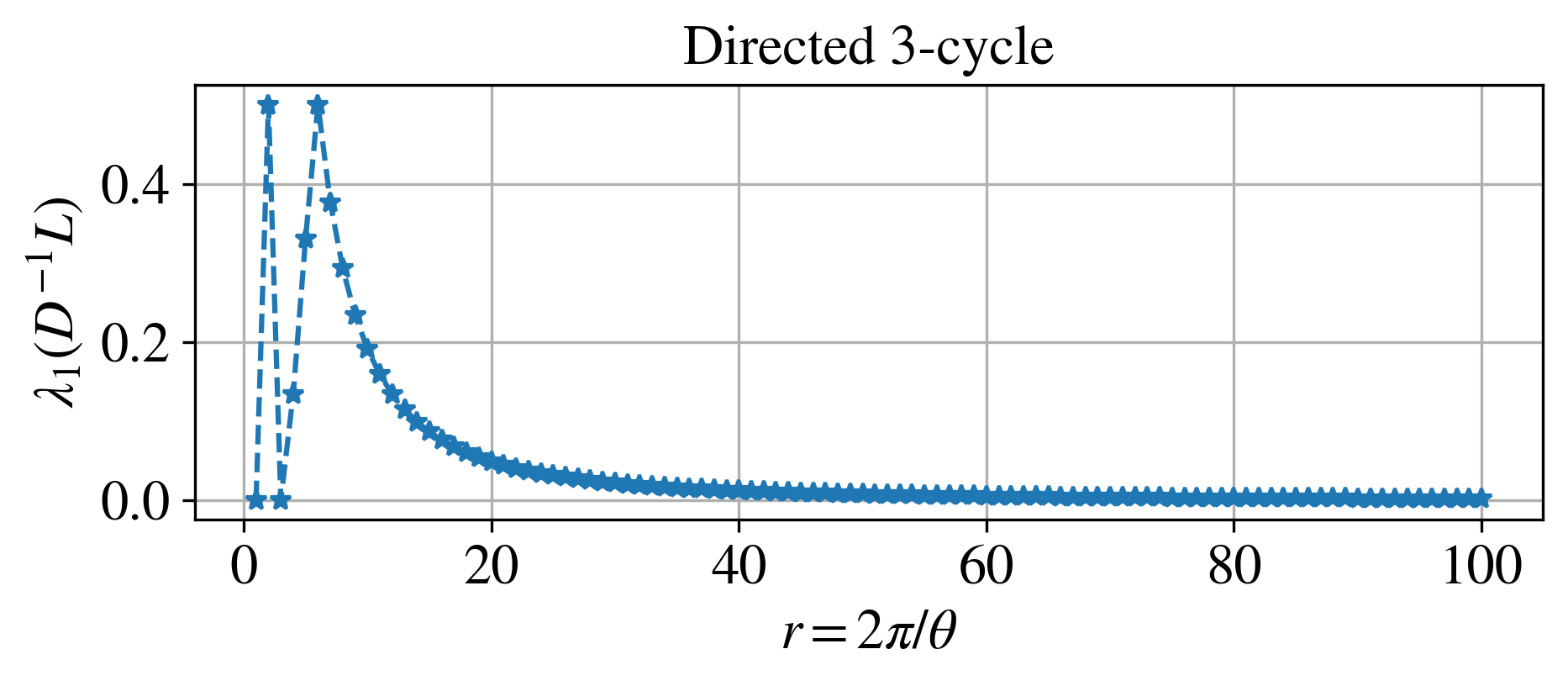} & \includegraphics[width=.4\textwidth]{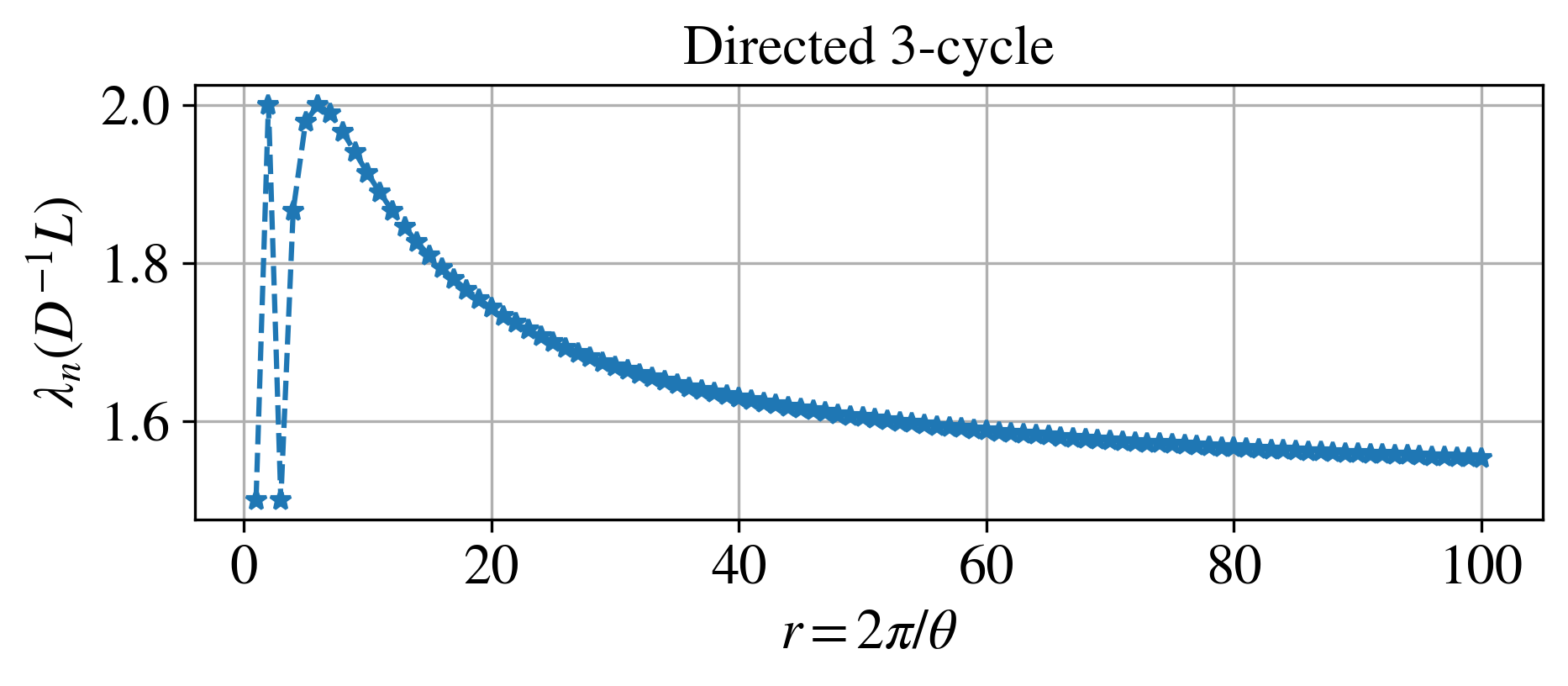} \\
			\includegraphics[width=.4\textwidth]{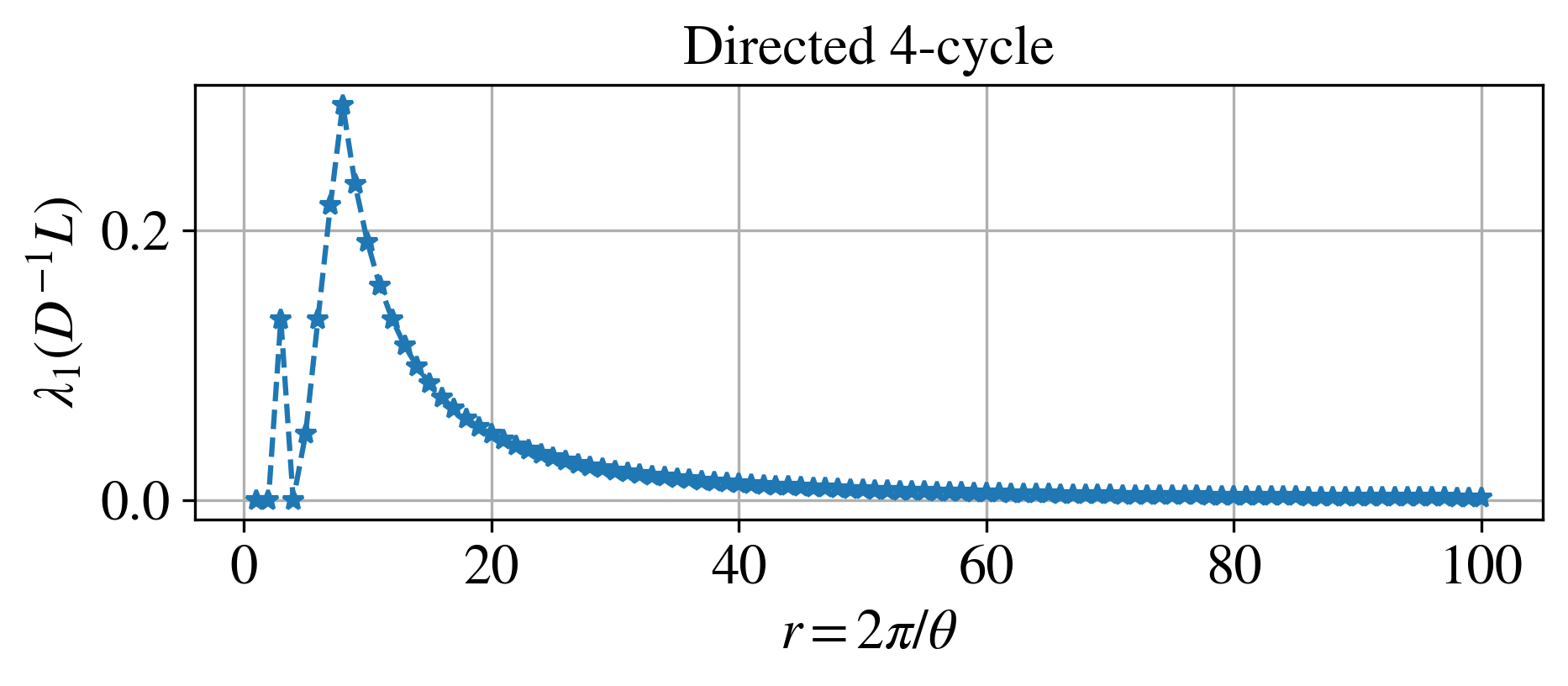} & \includegraphics[width=.4\textwidth]{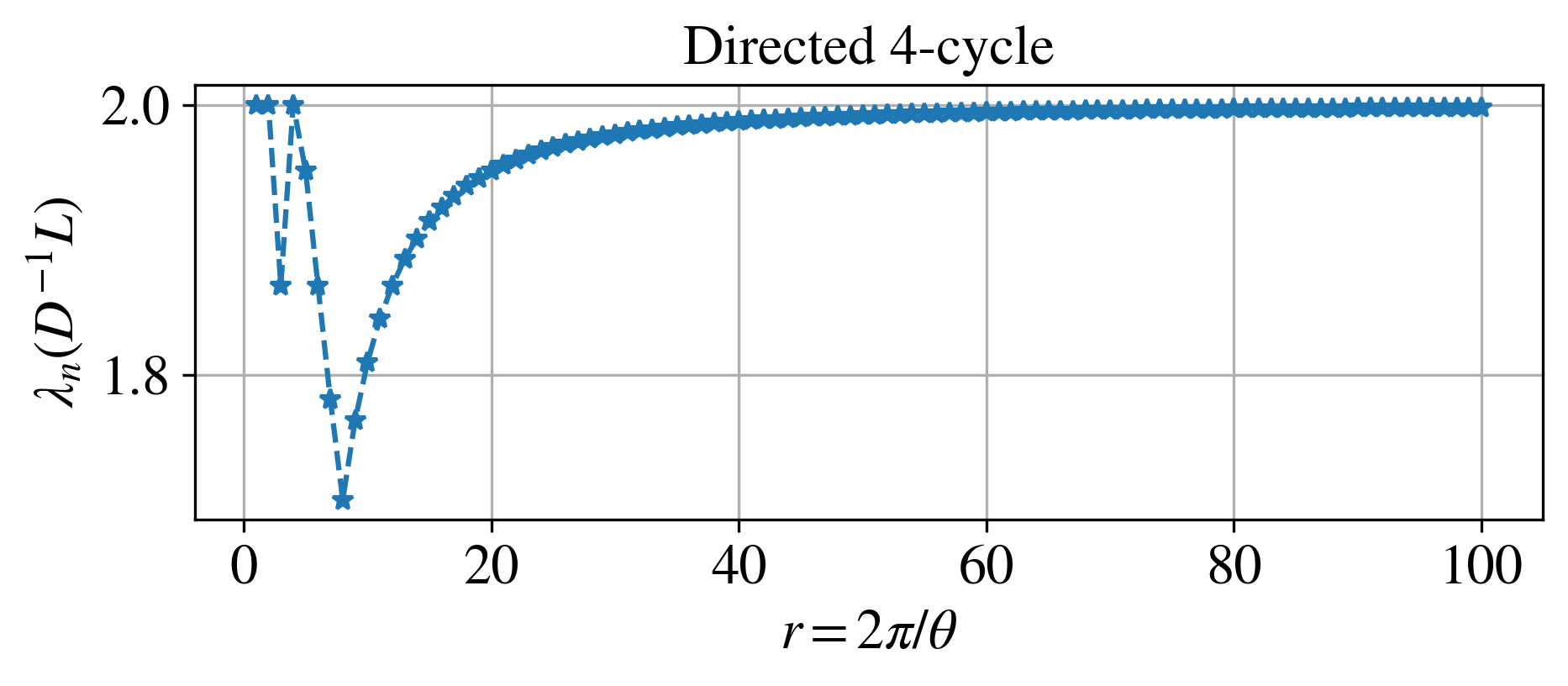}\\
			\includegraphics[width=.4\textwidth]{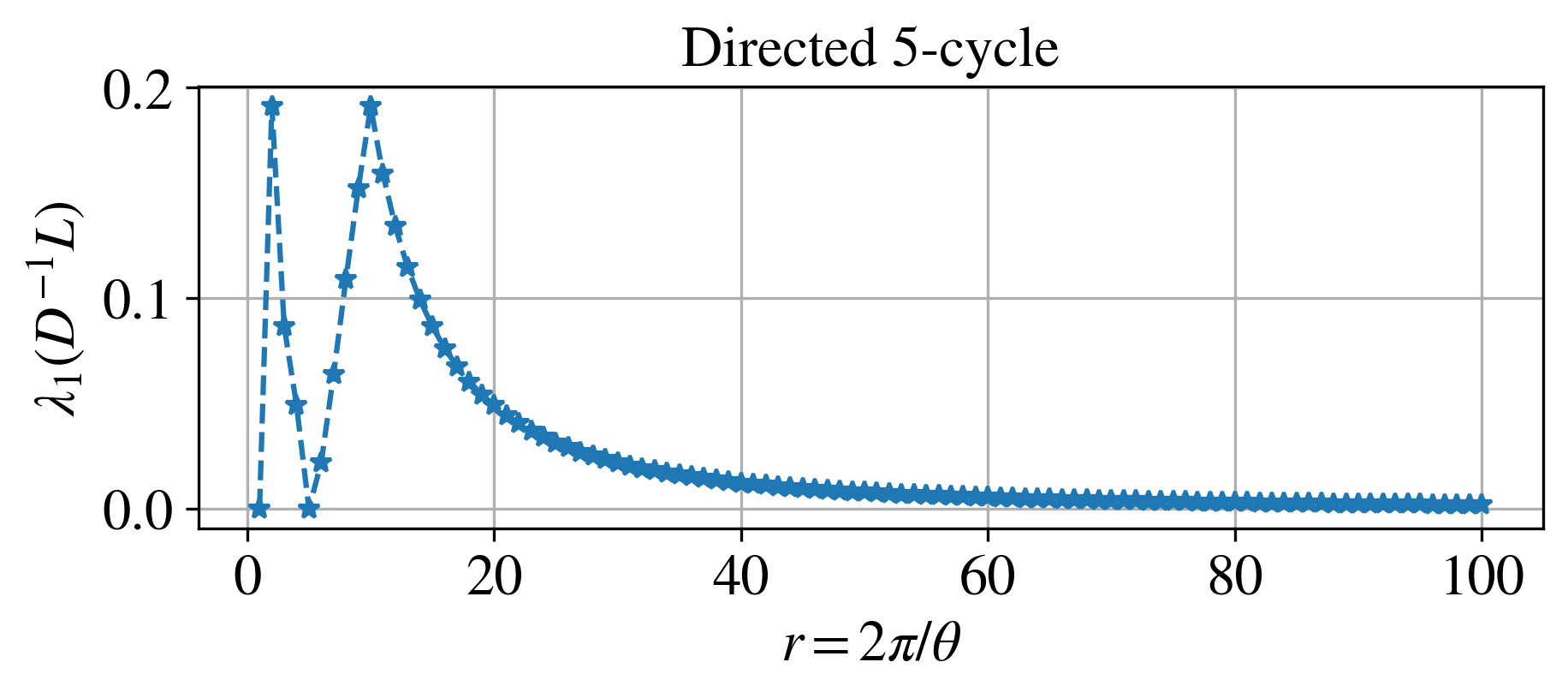} & \includegraphics[width=.4\textwidth]{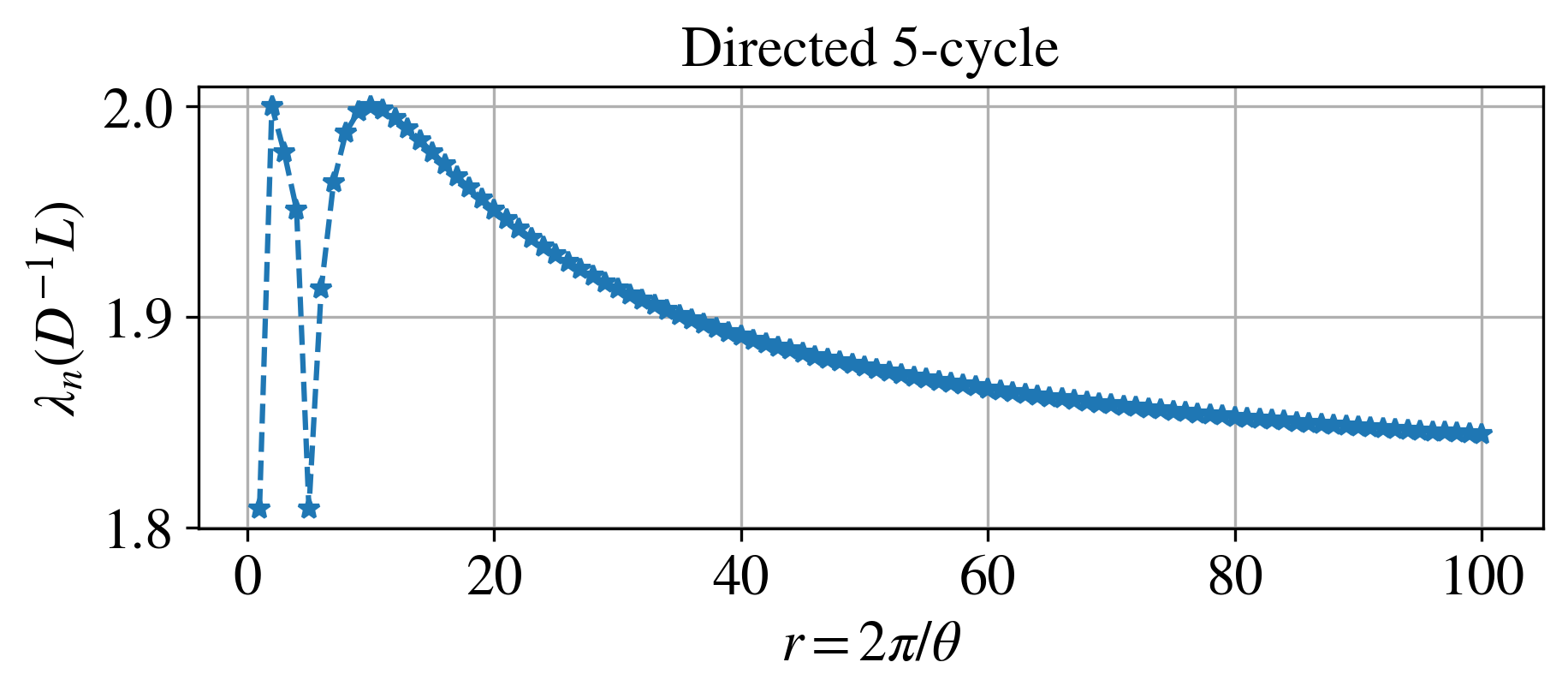}\\
			\includegraphics[width=.4\textwidth]{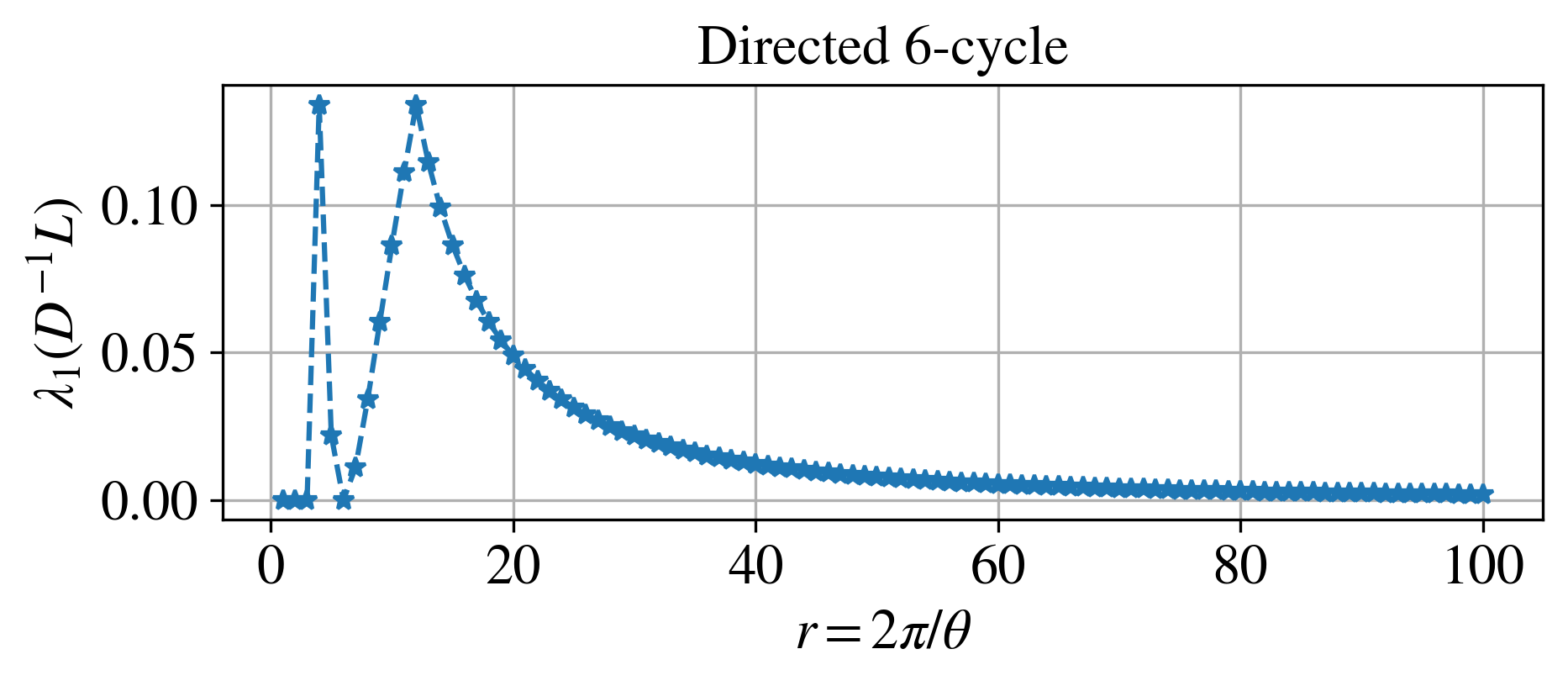} & \includegraphics[width=.4\textwidth]{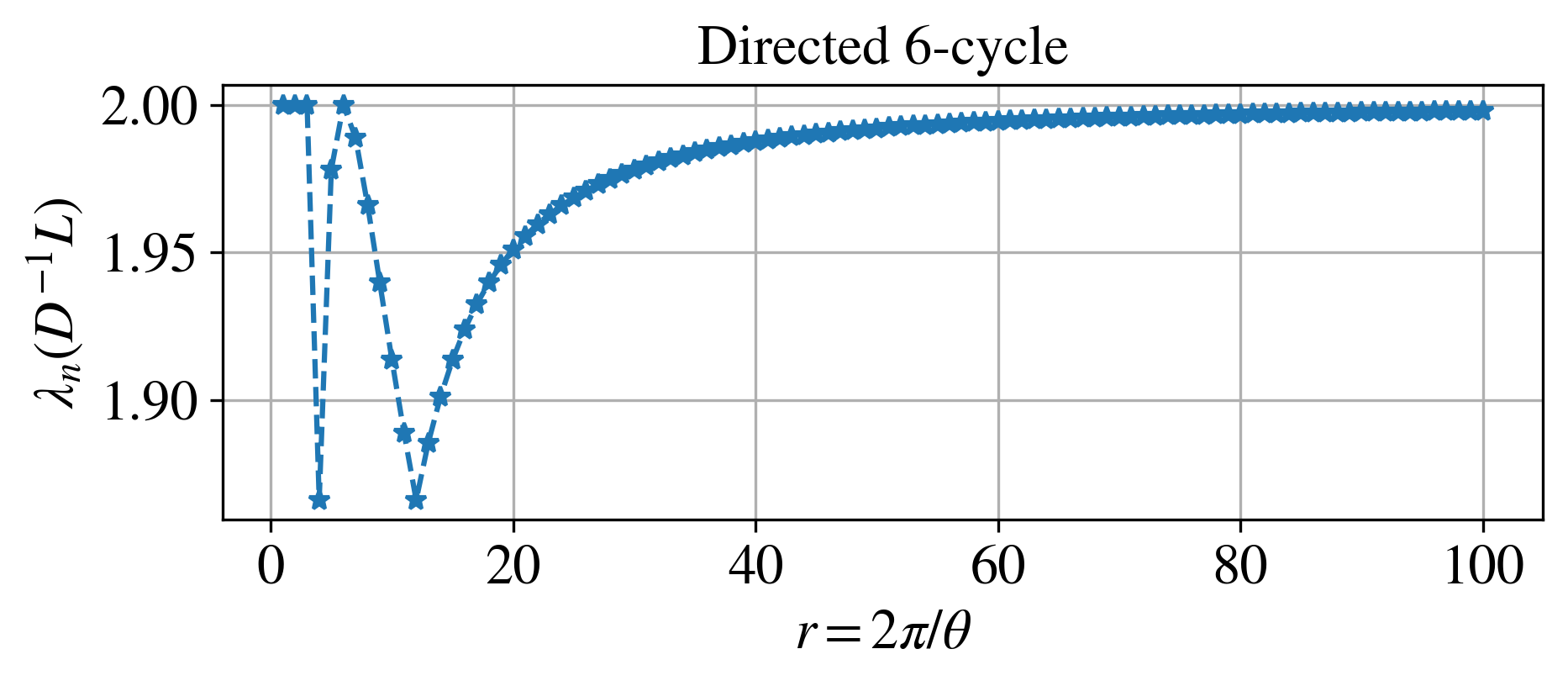}\\
		\end{tabular}
		\caption{Smallest (left) and largest (right) eigenvalues of the normalised magnetic Laplacian of directed cycles while sweeping over integer values of $r=2\pi/\theta$ in $[1,100]$.}
		\label{fig:mag-dcyc-lams}
	\end{figure}
	
	\paragraph{Tree of directed cycles.} We now consider the case when there are more than one directed cycle in the graph. There are various ways to combine cycles, and here we consider two particular cases: (i) tree of directed cycles, where the cycles do not share a single edge, and (ii) nested directed cycles, where the cycles share at least one edge; see Figs.~\ref{fig:mag-tdcyc-g} and \ref{fig:mag-ndcyc-g}, respectively, for examples.
	\begin{figure}[!ht]
		\centering
		\begin{tabular}{cccc}
			\includegraphics[width=.2\textwidth]{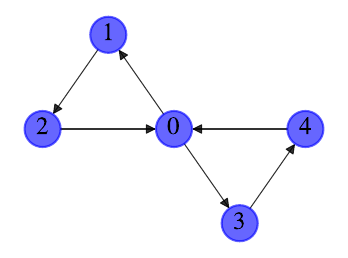} & \includegraphics[width=.2\textwidth]{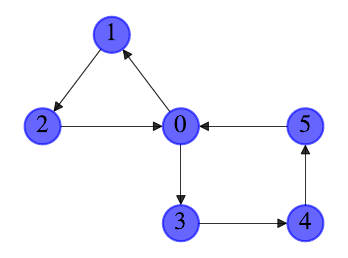} & \includegraphics[width=.2\textwidth]{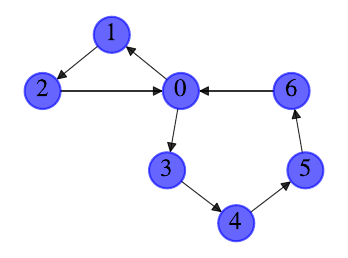} & \includegraphics[width=.2\textwidth]{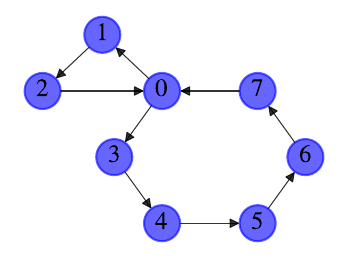}
		\end{tabular}
		\caption{Example of trees of directed cycles of different lengths.}
		\label{fig:mag-tdcyc-g}
	\end{figure}
	Apart from verifying the theoretical results we have developed for the occurrence of $0$ ($2$) as the smallest (largest) eigenvalue of the normalised magnetic Laplacian, our experiments also indicate the following interesting characteristics. Firstly, when the graph is composed of a tree of directed cycles of length $n$, the change of smallest/largest eigenvalue of the normalised Laplacian over $r$ is the same as the change from a directed cycle of length $n$; see the first rows in Figs.~\ref{fig:mag-dcyc-g} and \ref{fig:mag-tdcyc-g}. Secondly, when two graphs have the same set of common divisors but different sets of effective lengths, even though they have the same minimum (maximum) points over $r$ for the smallest (largest) eigenvalues, the maximum (minimum) points are different; see Fig.~\ref{fig:mag-dcyc-g}. Thirdly, for asymptotic behaviour of the largest eigenvalue, if the graph only contains directed cycles of odd lengths, it will maintain the same trend as when there is only one directed cycle of odd length, i.e., decreasing as $r$ rises; while if the graph contains directed cycles of both odd and even lengths, it will combine the behaviours of single directed cycles, where it could increase but not to $2$.
	\begin{figure}[!ht]
		\centering
		\begin{tabular}{cc}
			\includegraphics[width=.4\textwidth]{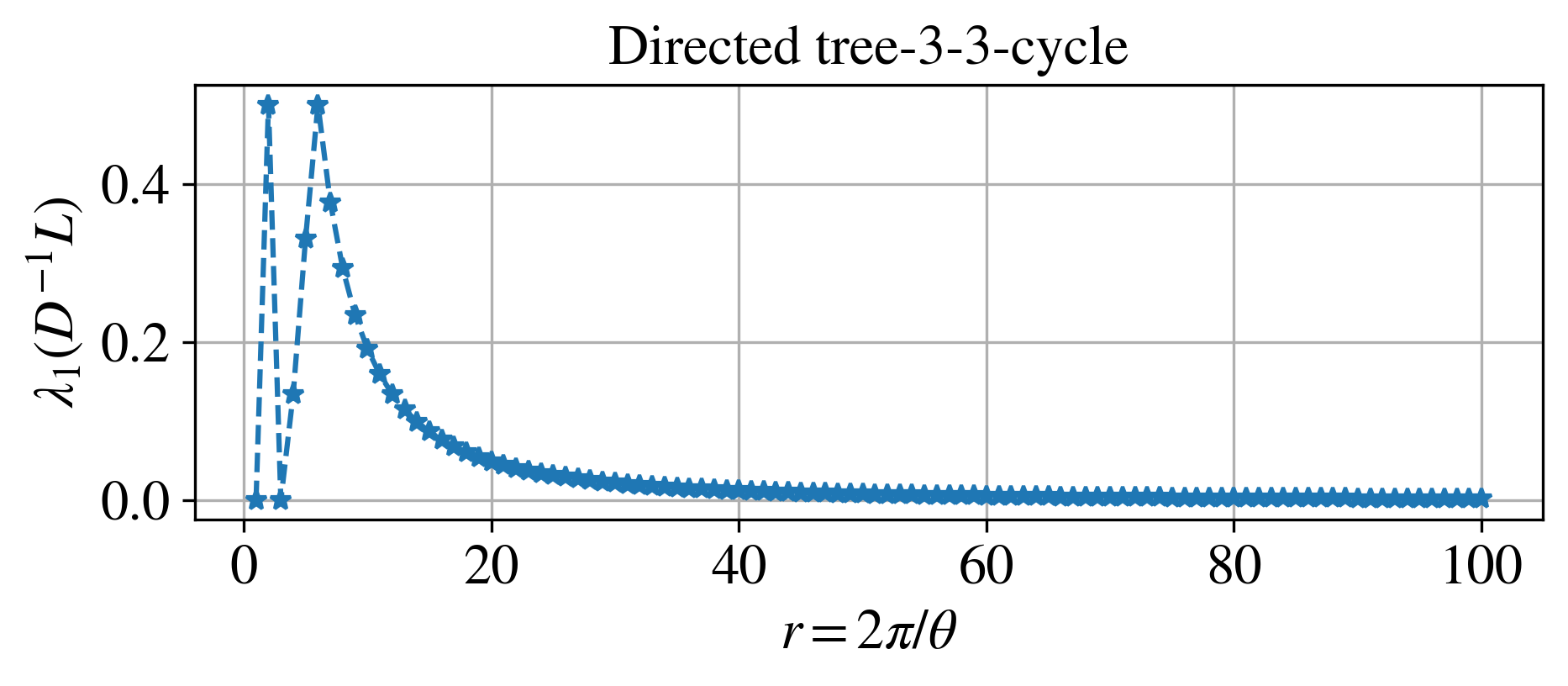} & \includegraphics[width=.4\textwidth]{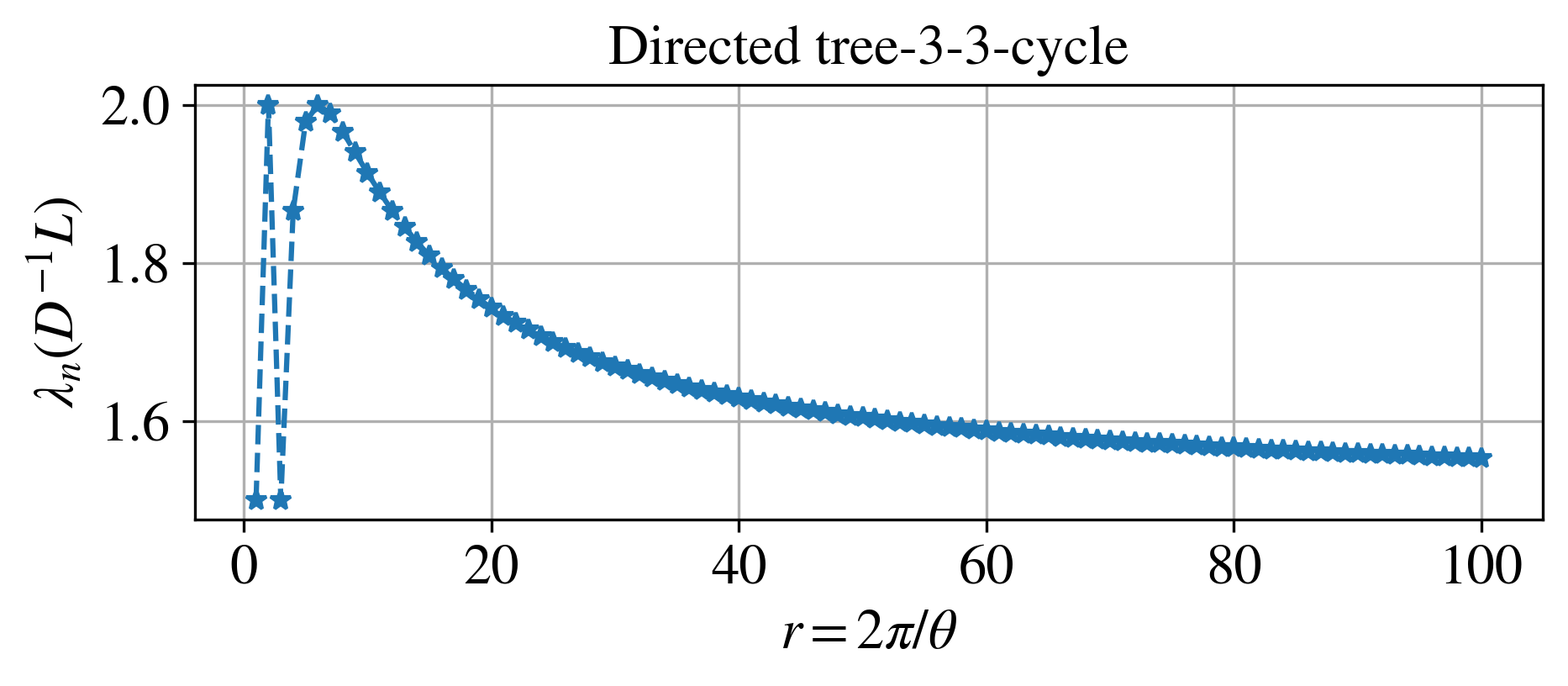} \\
			\includegraphics[width=.4\textwidth]{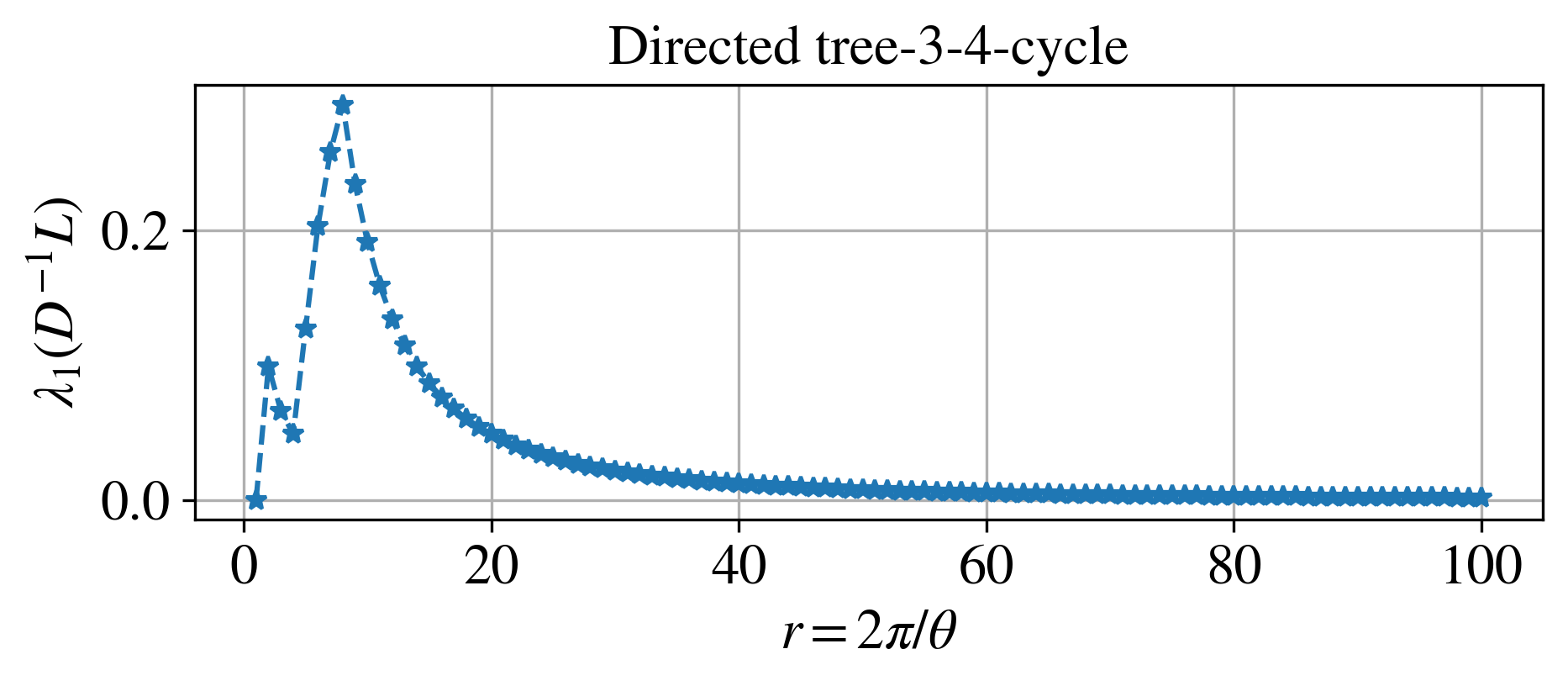} & \includegraphics[width=.4\textwidth]{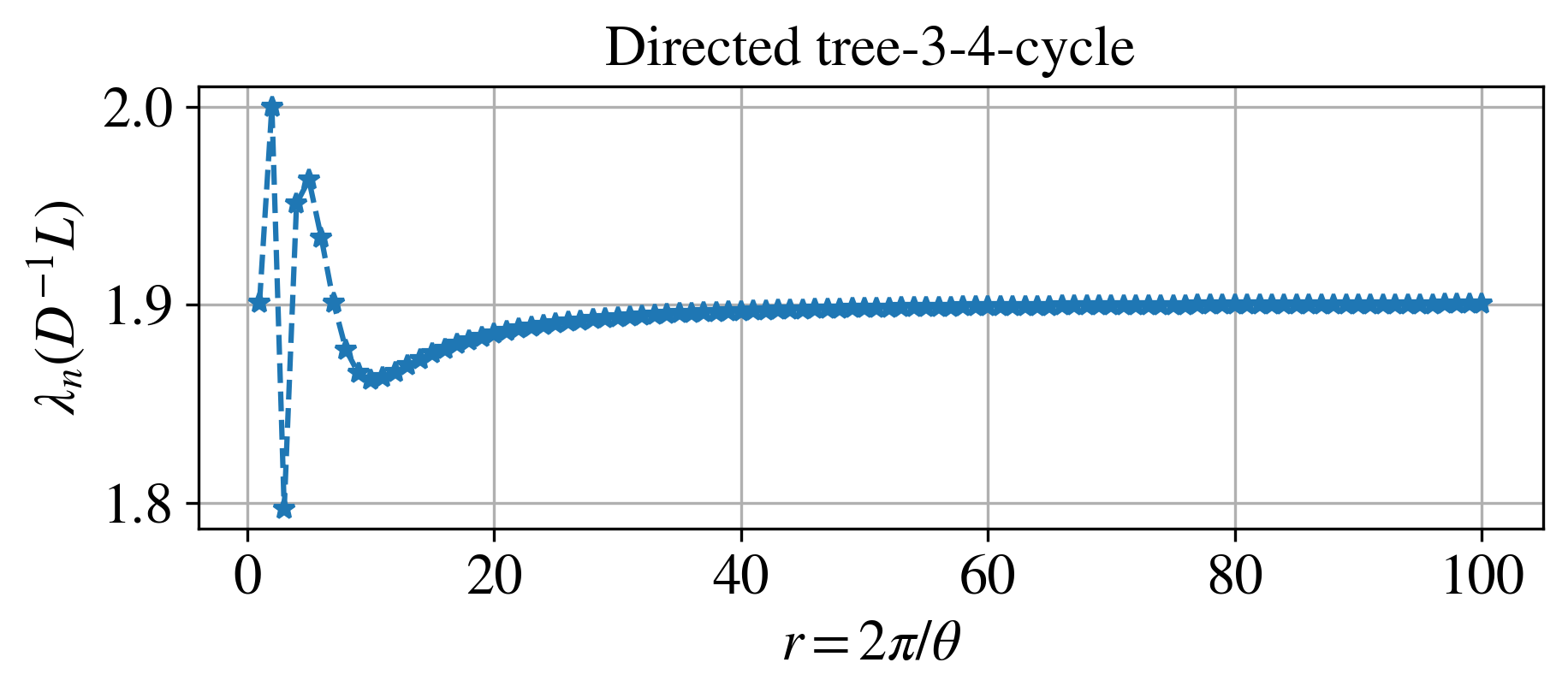} \\
			\includegraphics[width=.4\textwidth]{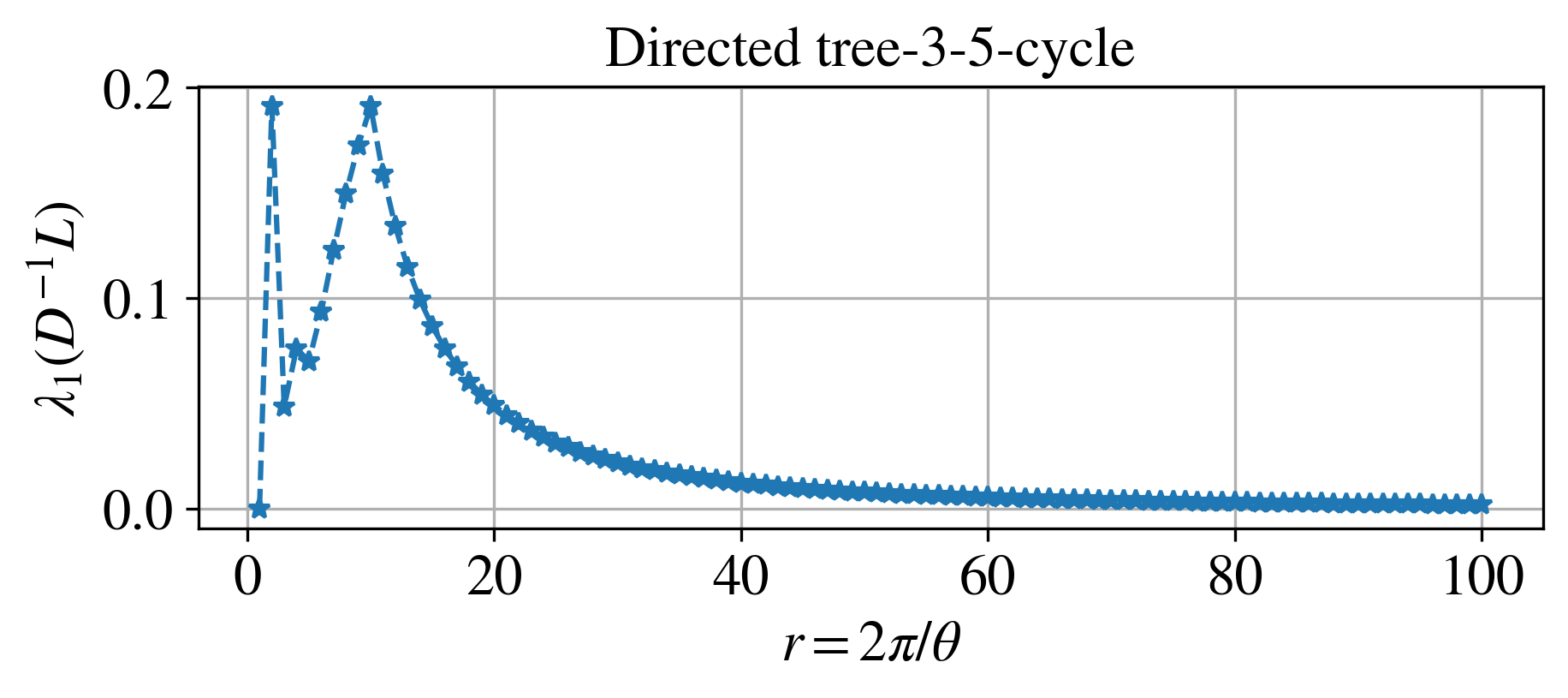} & \includegraphics[width=.4\textwidth]{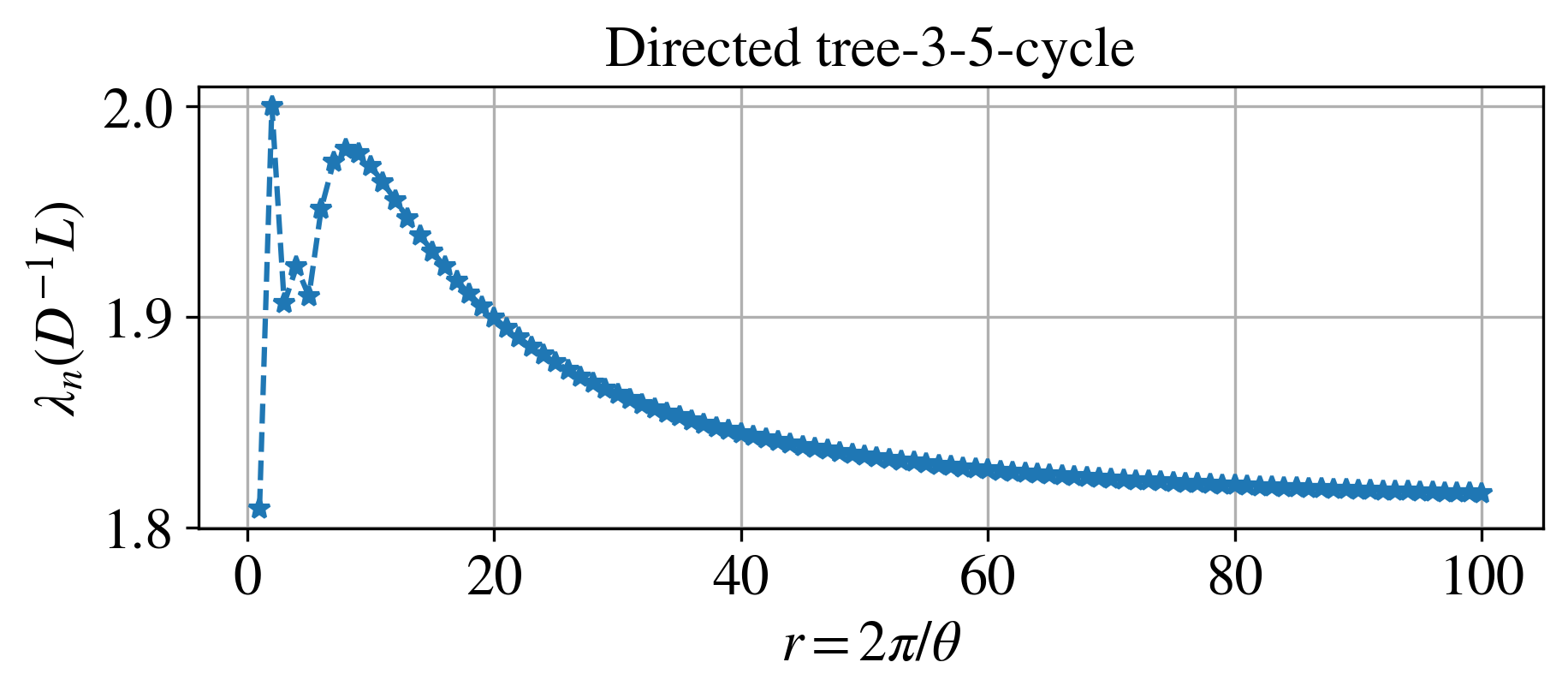} \\
			\includegraphics[width=.4\textwidth]{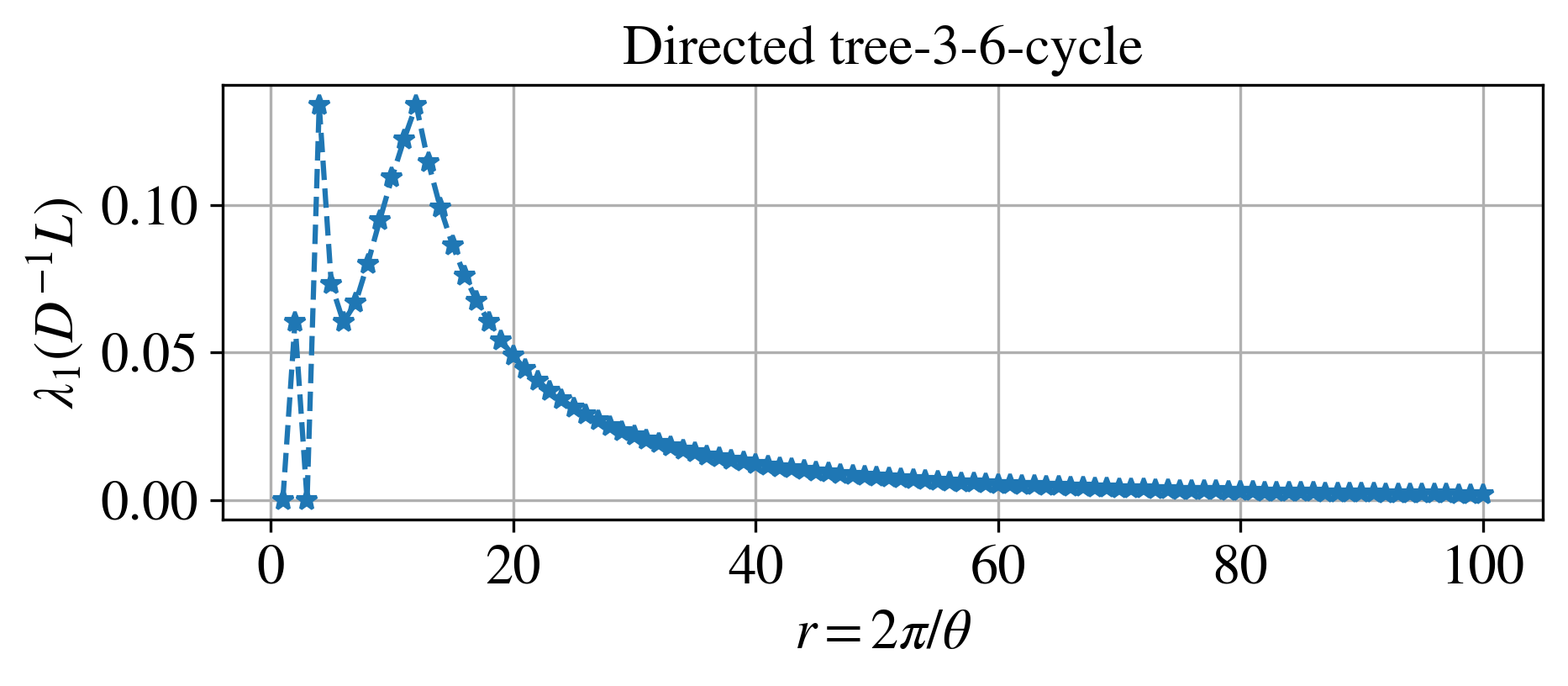} & \includegraphics[width=.4\textwidth]{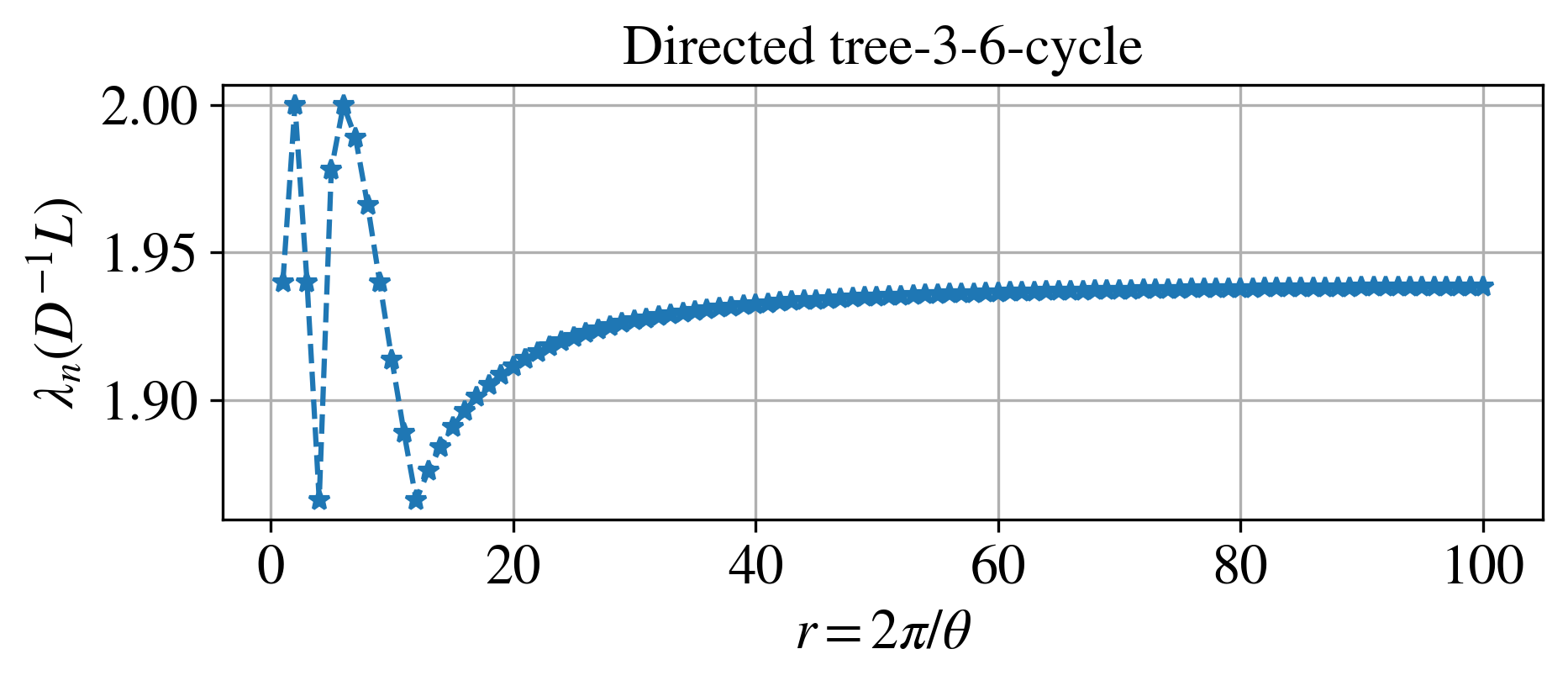} 
		\end{tabular}
		\caption{Smallest (left) and largest (right) eigenvalues of the normalised magnetic Laplacian of trees of directed cycles in Fig.~\ref{fig:mag-tdcyc-g} while sweeping over integer values of $r=2\pi/\theta$ in $[1,100]$.}
		\label{fig:mag-tdcyc-lams}
	\end{figure}
	
	\begin{figure}[!ht]
		\centering
		\begin{tabular}{ccc}
			\includegraphics[width=.2\textwidth]{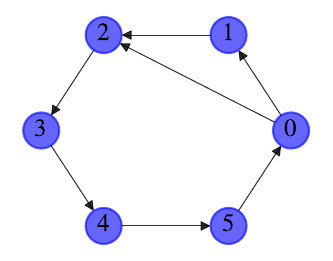} & \includegraphics[width=.2\textwidth]{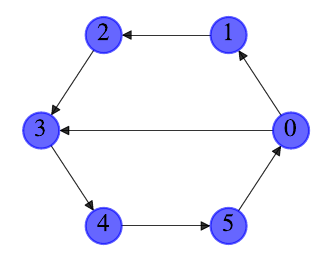} & \includegraphics[width=.2\textwidth]{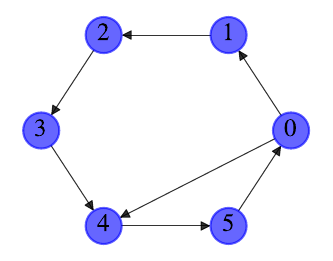} 
		\end{tabular}
		\caption{Example of nested directed cycles of different (effective) lengths.}
		\label{fig:mag-ndcyc-g}
	\end{figure}
	
	\paragraph{Nested directed cycles.} Finally, we construct nested cycles by adding one directed edge within a directed cycle of length $6$; see Fig.~\ref{fig:mag-ndcyc-g}. Our experimental results again confirm the theoretical relationship we have developed between the common divisors and when the smallest (largest) eigenvalue of the normalised magnetic Laplacian becomes $0$ ($2$). Hence, the three graphs already exhibit different behaviours from these minimum (maximum) points. Also, we note that the right-most graphs in Figs.~\ref{fig:mag-ndcyc-g} and \ref{fig:mag-tdcyc-g} both contain cycles of effective length $3$ and $6$, with $3$ and $1$ being all common divisors. However, the overall change curves exhibit different behaviours, where the one corresponding to the nested directed cycle has relatively more similar trend to the directed cycle of length $3$'s. This is expected since the nested directed cycle has two cycles of effective length $3$, while the tree version only has one. 
	\begin{figure}[!ht]
		\centering
		\begin{tabular}{cc}
			\includegraphics[width=.4\textwidth]{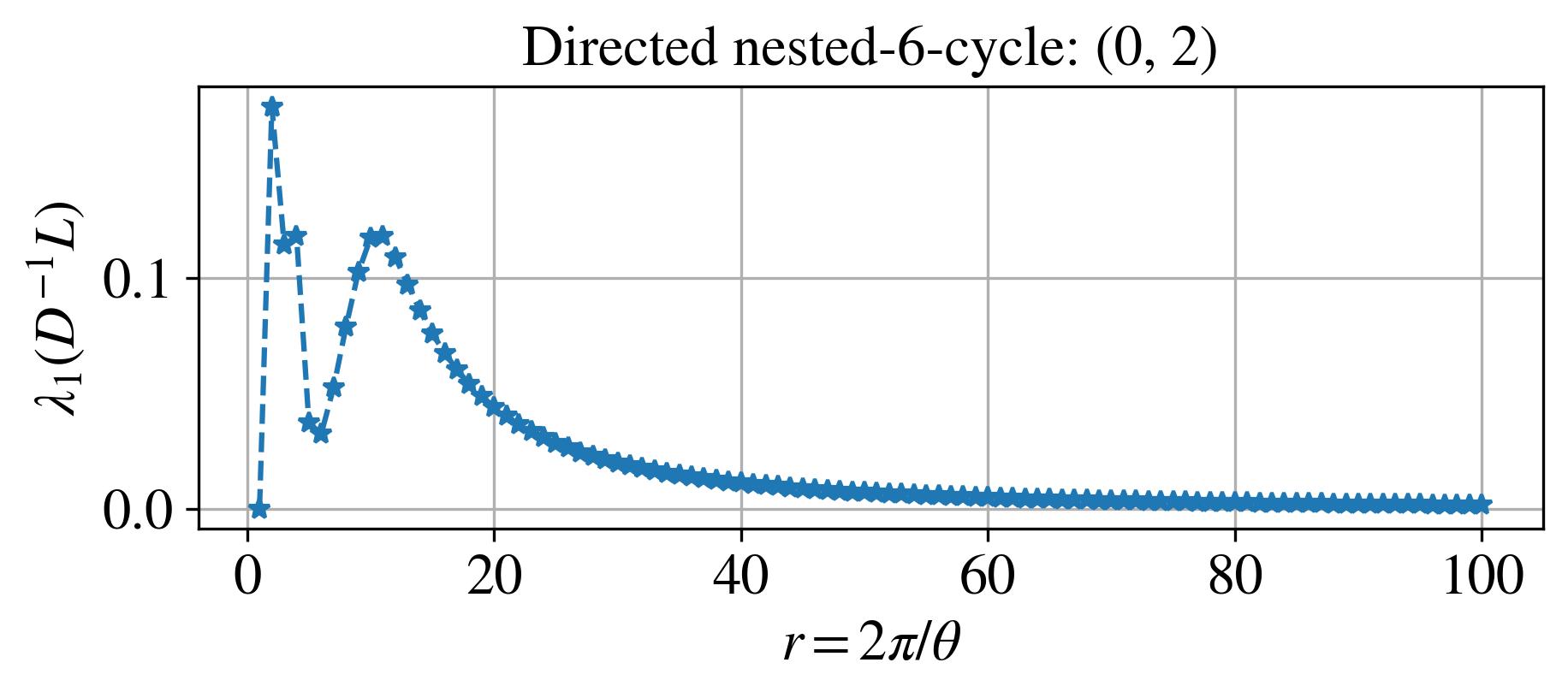} & \includegraphics[width=.4\textwidth]{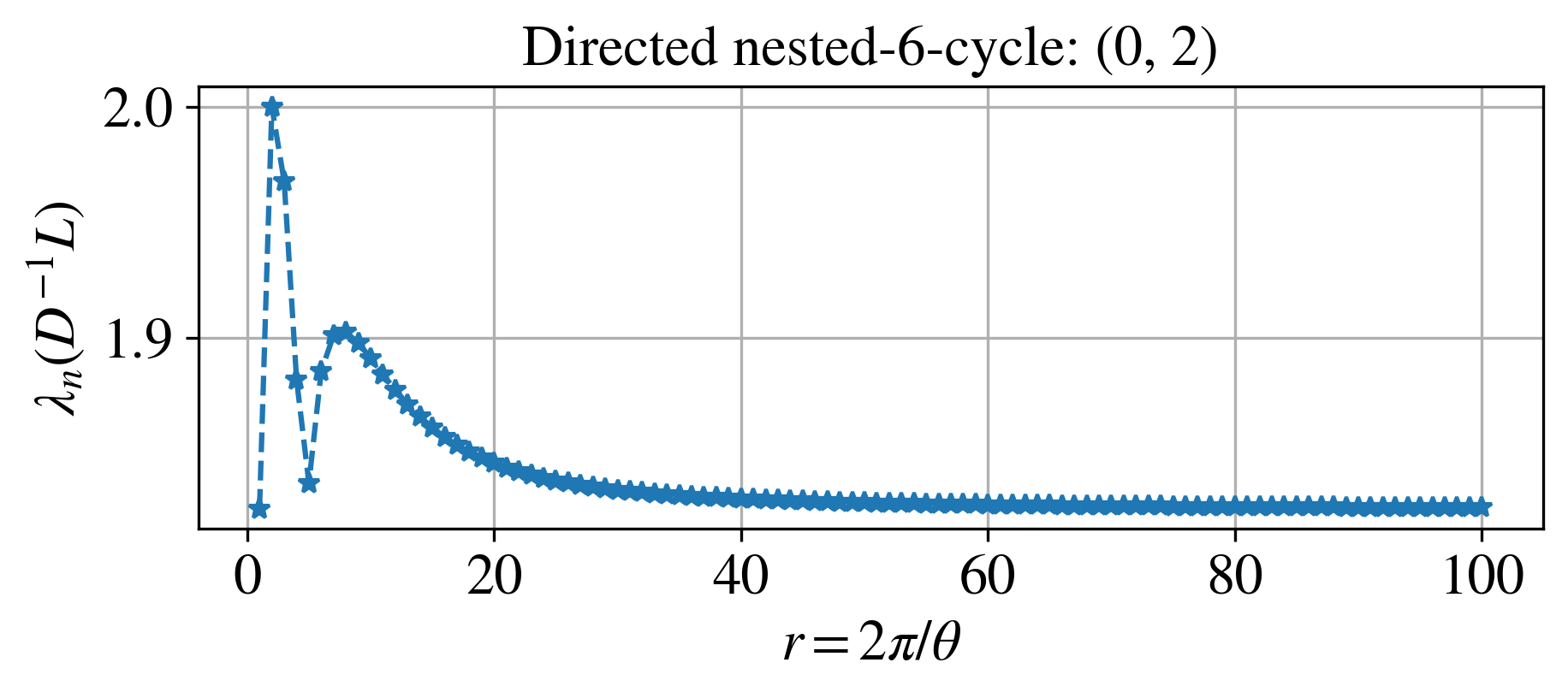} \\
			\includegraphics[width=.4\textwidth]{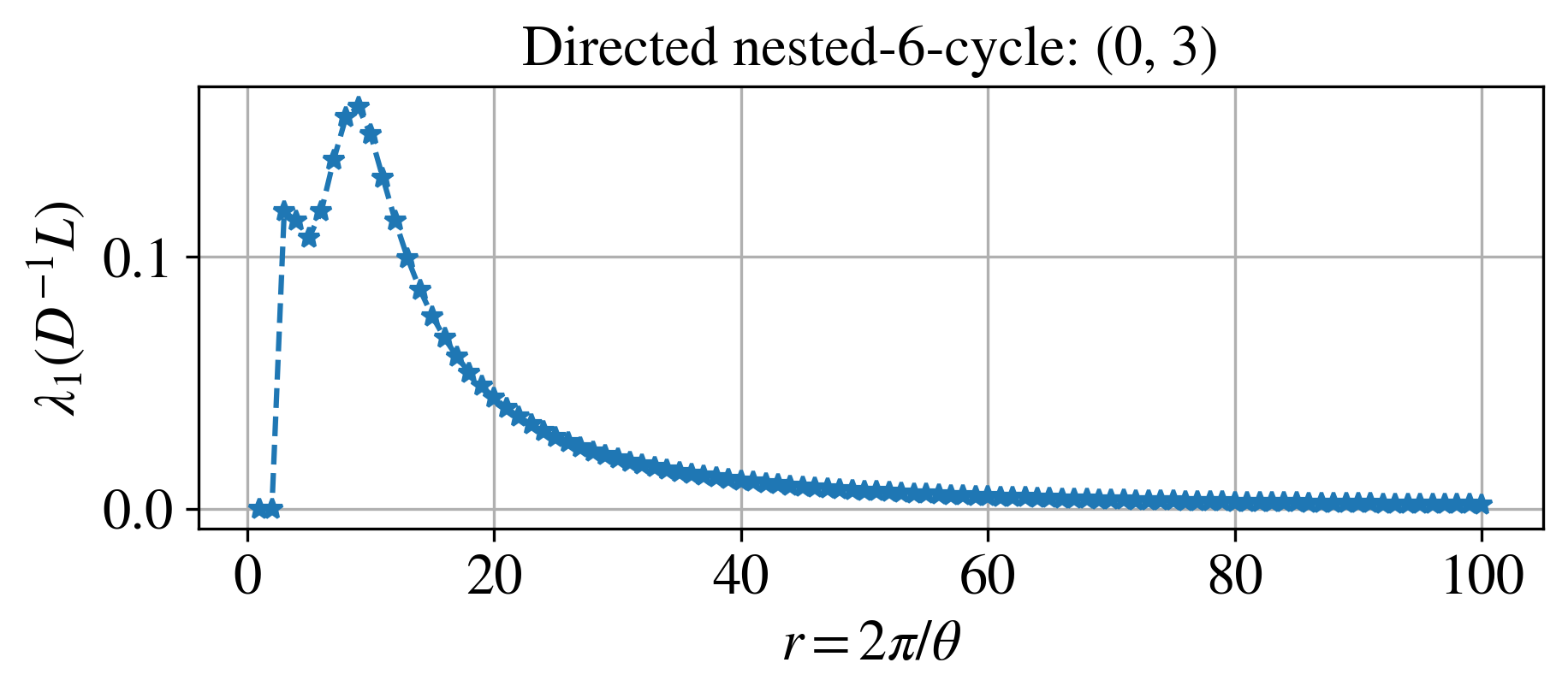} & \includegraphics[width=.4\textwidth]{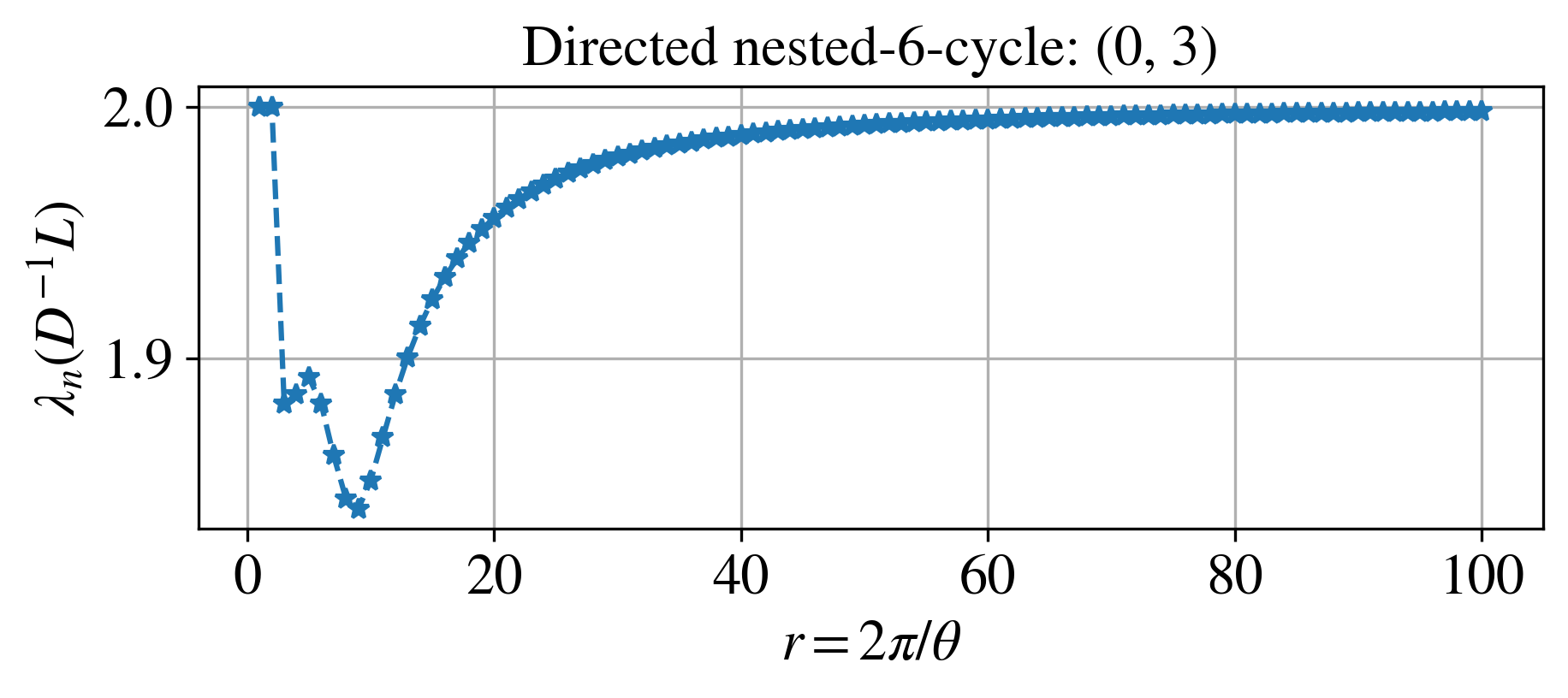} \\
			\includegraphics[width=.4\textwidth]{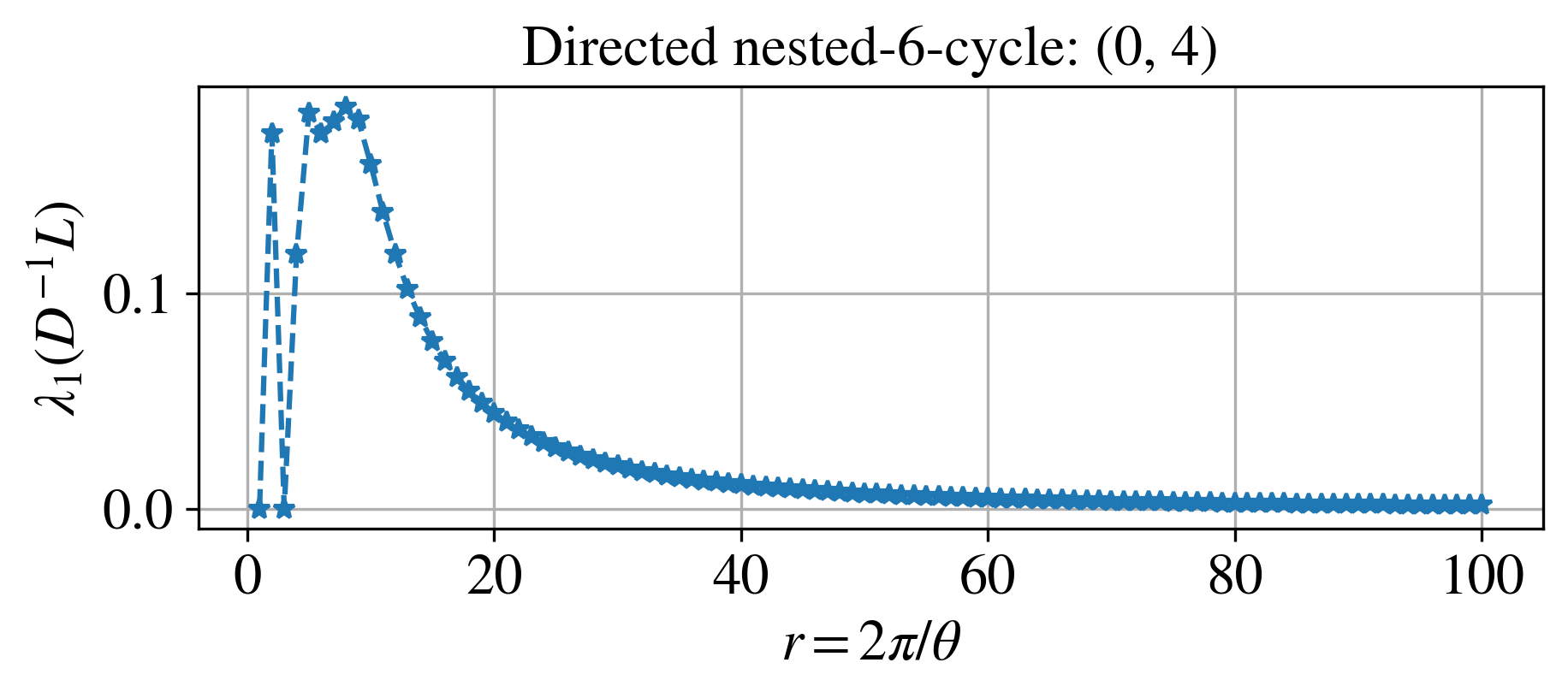} & \includegraphics[width=.4\textwidth]{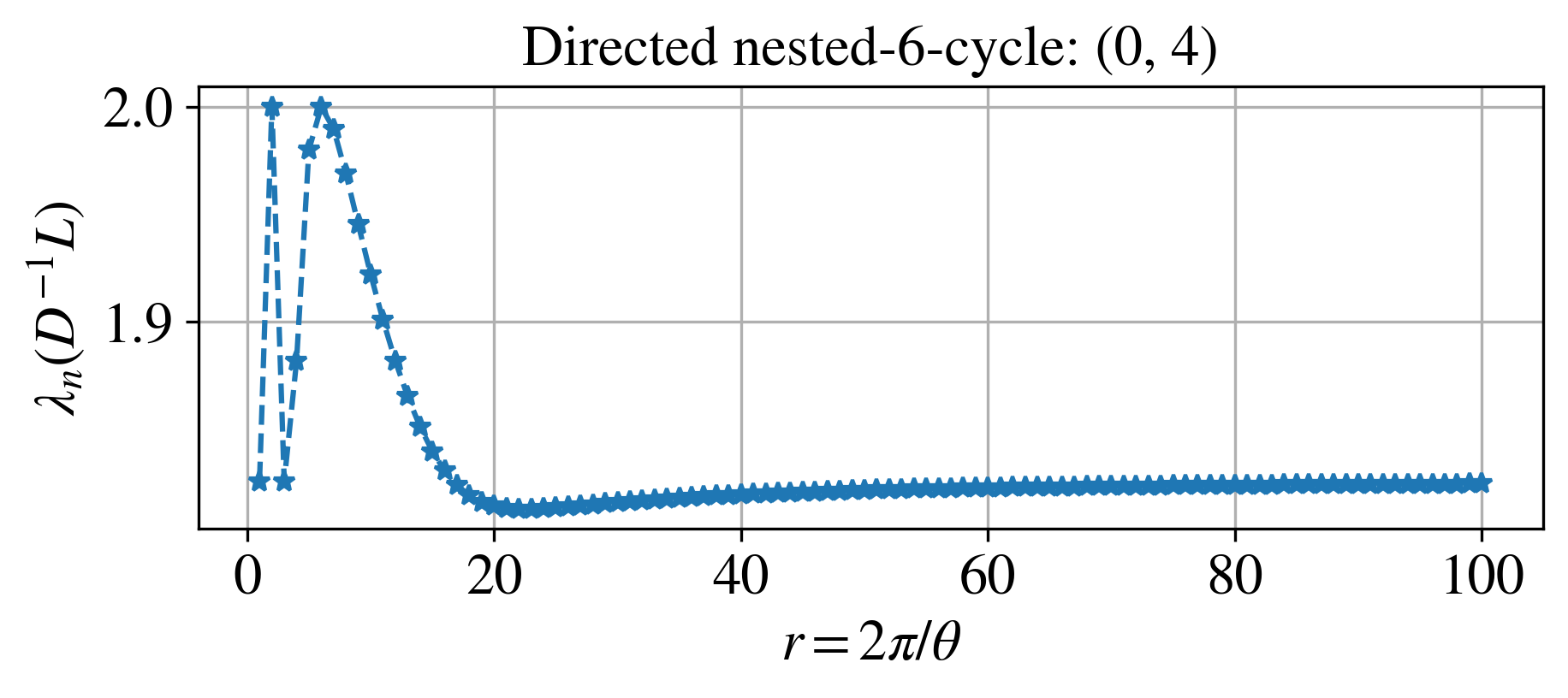} 
		\end{tabular}
		\caption{Smallest (left) and largest (right) eigenvalues of the normalised magnetic Laplacian of nested directed cycles in Fig.~\ref{fig:mag-ndcyc-g} while sweeping over integer values of $r=2\pi/\theta$ in $[1,100]$.}
		\label{fig:mag-ndcyc-lams}
	\end{figure}
	
	\subsection{Eigenvectors: role structure}
	In the case when the smallest eigenvalue of the (normalised) magnetic Laplacian is $0$ or when the largest eigenvalue of the normalised magnetic Laplacian is $2$, the corresponding eigenvector can be used to obtain the role structure following the same procedure as detecting the level-two communities as in Algorithm \ref{alg:sc_complex}; see Fig.~\ref{fig:mag-evec}. The extracted role structure indicates either (i) the global cyclic structure in the presence of a cycle of nonzero effective length, or (ii) a hierarchical structure, to some extent, otherwise. Similarly, if each node is replaced by a group of nodes where the connections within them are bidirectional, thus the corresponding elements in the magnetic Laplacian have phase $0$, the algorithm can also detect the role structure of these groups. 
	\begin{figure}[!ht]
		\centering
		\begin{tabular}{cccc}
			\includegraphics[width=.2\textwidth]{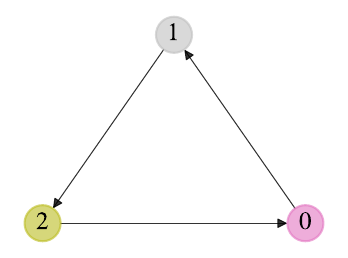} & \includegraphics[width=.2\textwidth]{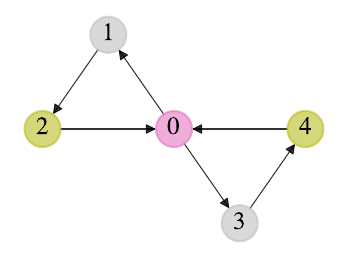} & \includegraphics[width=.2\textwidth]{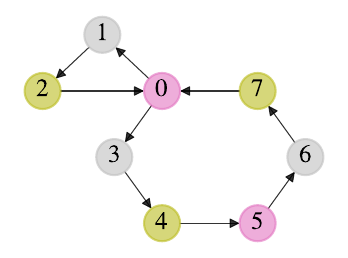} & \includegraphics[width=.2\textwidth]{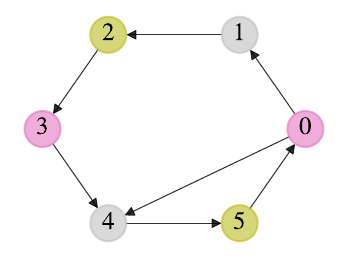}
		\end{tabular}
		\caption{Extracted role structure from the example directed graphs, where $\theta=2\pi/3$ in constructing the magnetic Laplacian, and nodes grouped into the same role are in the same color.}
		\label{fig:mag-evec}
	\end{figure}
	
	\paragraph{Real directed network.} We consider real directed networks, and extract the role structure by applying our spectral clustering method to the magnetic Laplacian. The data was obtained from NeuroData's open source data base\footnote{\url{https://neurodata.io/project/connectomes/}, accessed on 20 June 2023. The database hosts animal connectomes produced using data from a multitude of labs, across model species, using different modalities.}, and characterises the connectomes of a set of different animal species. A connectome is a comprehensive, cell-to-cell mapping of the interaction between neurons, created from cellular data obtained, for instance, via electron microscopy. Here specifically, we consider two networks constructed from mouse (``Mouse 1" and ``Mouse 2") \cite{bock2011mouse}. 
	\begin{table}[htbp]
		\centering
		\caption{Summary statistics of the real networks.}
		\label{tab:animals}
		\begin{tabular}{c|cccccc}
			\hline
			& $n$ & $|E|$ & $d$ & $\sigma(d_{in})/d$ & $\sigma(d_{out})/d$ \\
			\hline
			Mouse 1     & $29$ & $44$ & $1.52$ & $0.79$ & $1.61$ \\
			Mouse 2  & $195$ & $214$ & $1.11$ & $0.46$ & $4.57$ \\
			\hline
		\end{tabular}
	\end{table}
	Table~\ref{tab:animals} provides summary statistics for the real networks in our study, including the average (in- or out-) degree $d$, standard deviation of in-degree $\sigma(d_{in})$ and that of out-degree $\sigma(d_{out})$. For comparison, we also apply the Louvain algorithm to the symmetrised graph using $w_s(i,j) = (w_{ij} + w_{ji})/2$ as edge weights, where we run it for $100$ times, construct the co-occurrence matrix of the clustering results, and then run the algorithm one more time on the graph constructed by the co-occurence matrix, in order to obtain consistent communities. Our experimental results indicate that the eigenvector(s) of the magnetic Laplacian contains information about the role structure; see Fig.~\ref{fig:animals-g} where we choose $\theta=2\pi/3$ in constructing the magnetic Laplacian. From another perspective, they also verify that the performance of our spectral clustering algorithm on real networks.   
	\begin{figure}[!ht]
		\centering
		\begin{tabular}{cc}
			\includegraphics[width=.45\textwidth]{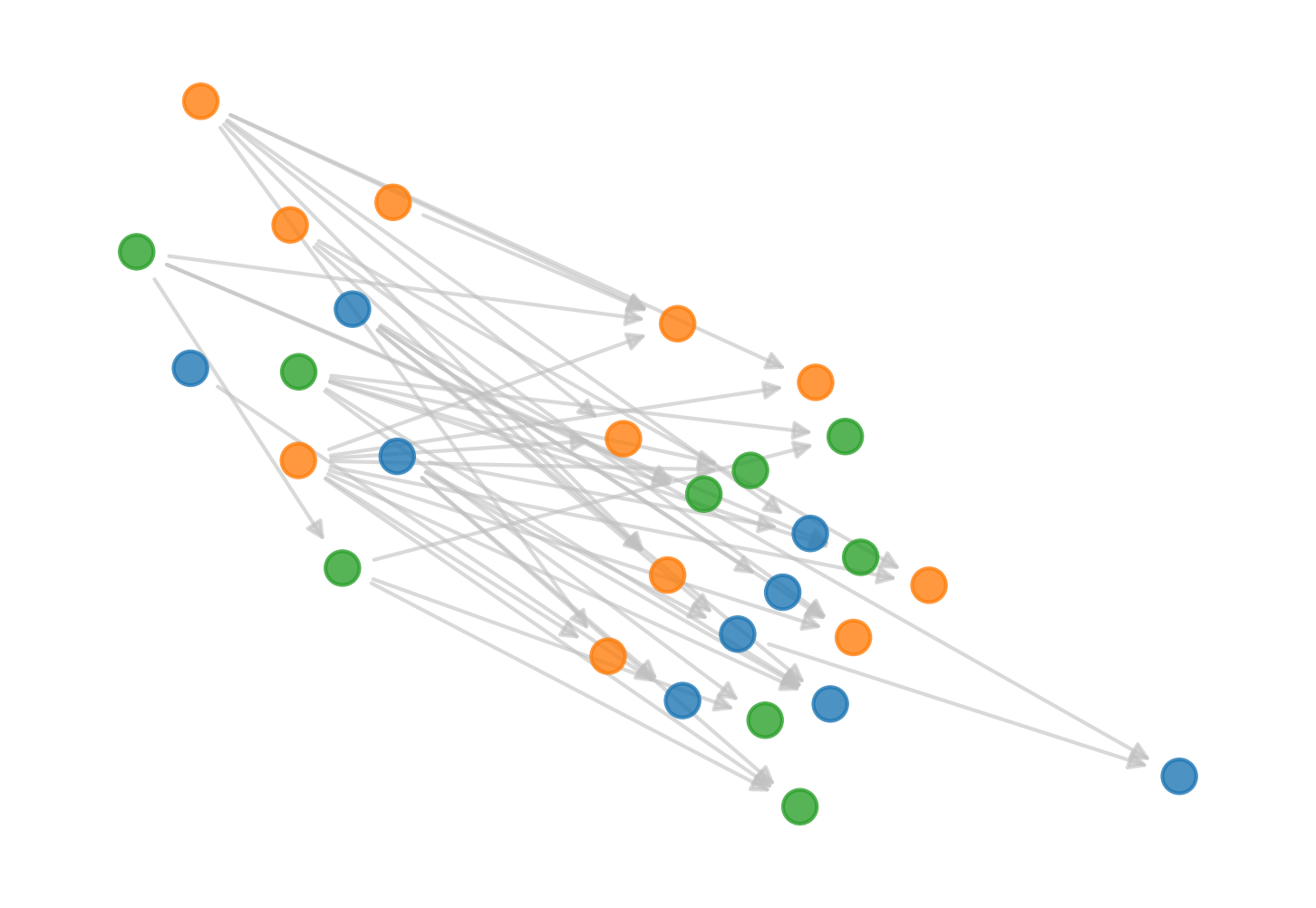} &  \includegraphics[width=.45\textwidth]{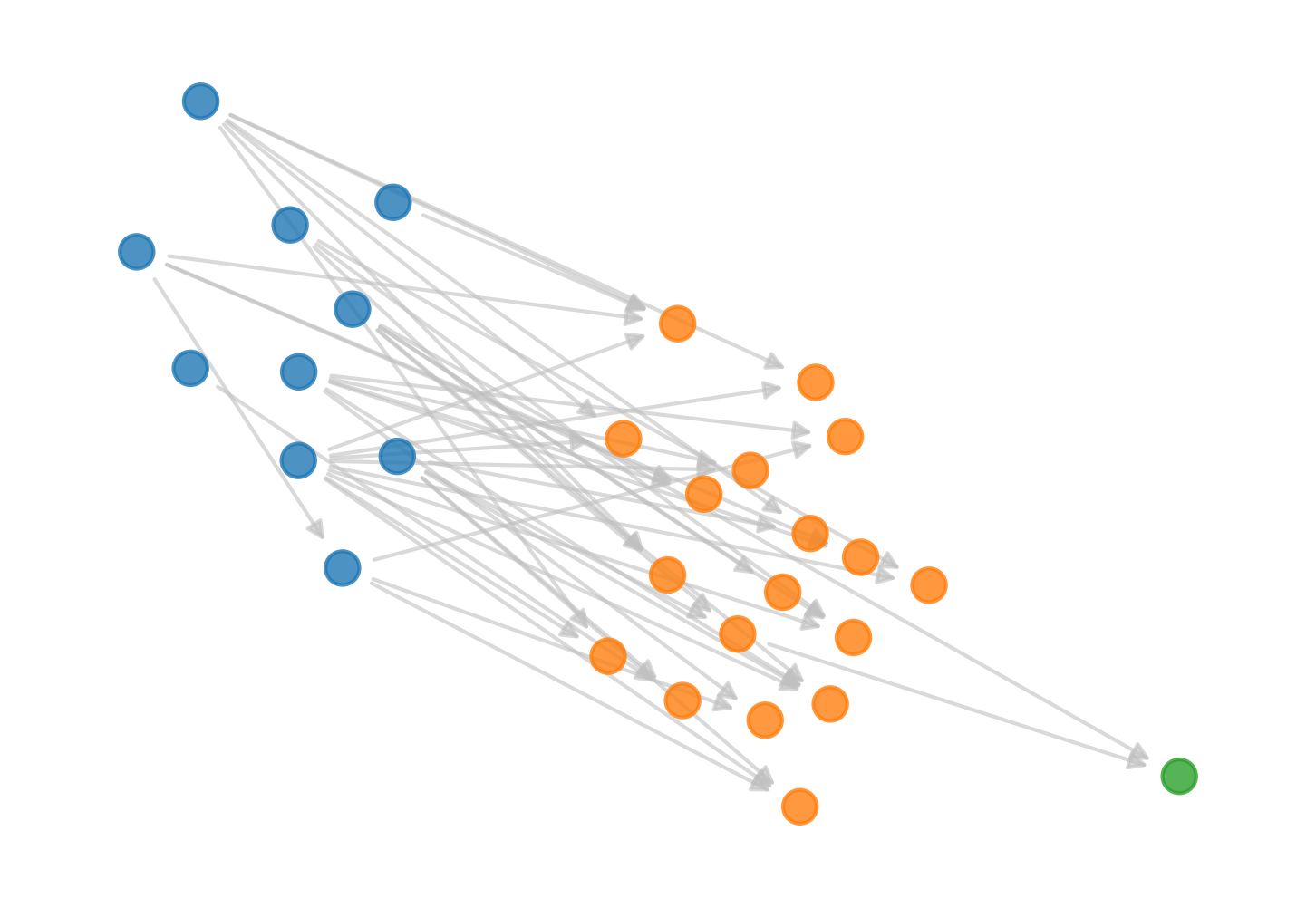}\\
			\includegraphics[width=.45\textwidth]{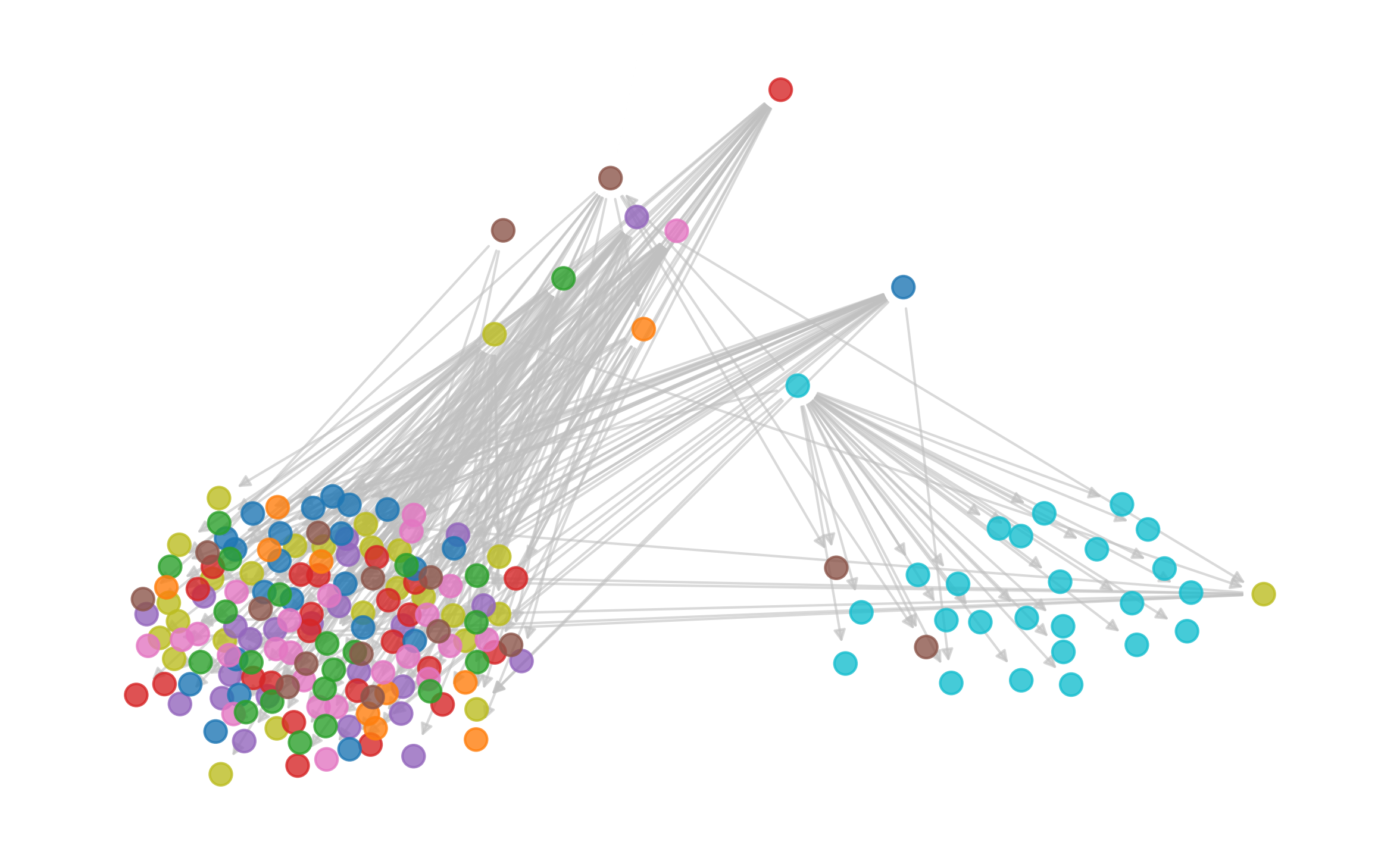} & \includegraphics[width=.45\textwidth]{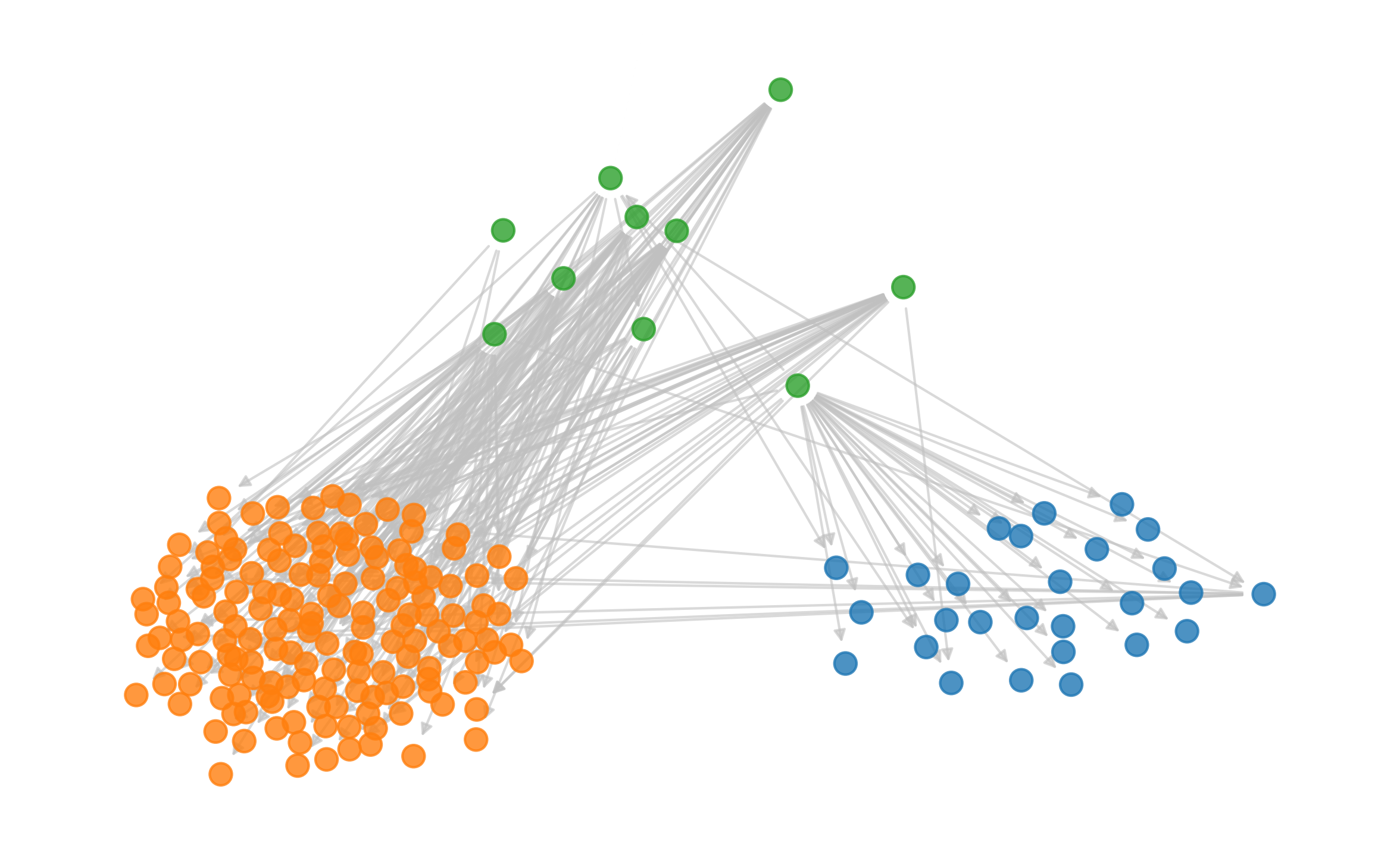}
		\end{tabular}
		\caption{Results from Mouse 1 (top) and Mouse 2 (bottom), where in the left column, nodes are coloured according to the results from applying the Louvain algorithm to the symmetrised graph, while in the right column, nodes are coloured according to the results from applying our spectral clustering algorithm to the magnetic Laplacian with $\theta=2\pi/3$.}
		\label{fig:animals-g}
	\end{figure}
	
	\section{Discussion}\label{sec:discussion}
	In this paper, we have analysed networks with complex weights, with applications in various scientific and engineering fields. Specifically, we have first introduced a classification of complex-weighted networks into balanced, antibalanced and strictly unbalanced ones, and further characterised the spectral properties of the complex weight matrix in each type. There is some work in the literature in analysing the balanced structure in networks with complex weights, but very little for the other types and their dynamical implication. We then applied the results to understand the dynamics of random walks on complex-weighted networks, and showed interesting while distinct behaviour of the dynamics in different types of networks. Finally, based on the structural and dynamical characterisation of complex-weighted networks, we further analysed the applications for spectral clustering and the study of the magnetic Laplacian. Corresponding to the information encoded in complex weights, we first defined the general cut problem in the complex-weighted networks, and then proposed a spectral clustering algorithm, whose efficiency has been verified through both synthetic and real networks. For the magnetic Laplacian, we provided further characteristics of its eigenvalues and eigenvectors in terms of cycles in the directed networks, which further extends the spectral clustering algorithm to detect the role structure in directed networks.
	
	There are also several venues for future investigation. In the analysis of complex-weighted networks, structural balanced and antibalanced ones only correspond to two switching equivalent classes \cite{Lange2015MagL}, and we have grouped all remaining networks, potentially more than the previous two types, into one category. Hence, it would be interesting to consider more switching equivalent classes, and provide finer characterisation for strictly unbalanced networks. In analysing the dynamics on complex-weighted networks, we only consider discrete-time random walks for now, which leaves the investigation of more dynamics, potentially with mechanisms inspired by real applications, to future work. Also, the spectral clustering algorithm that has been developed in this paper, although already exhibiting promising performance, is still in the vanilla format, and there are many other techniques that can be incorporated to further improve the performance \cite{damle2019sc}. Last but not the least, we have considered exclusively Hermitian weight matrices for now, however, the weight matrix can be non-Hermitian \cite{bottcher2022complex}, thus it may be necessary to consider more general settings in certain applications. 
	
	

	\section*{Acknowledgments}
	Y.T.~is funded by the Wallenberg Initiative on Networks and Quantum Information (WINQ). R.L.~acknowledges support from the EPSRC Grants EP/V013068/1 and EP/V03474X/1.
	
	\bibliographystyle{plain}
	\bibliography{refs}
	
	\newpage
	\appendix
	\section{Further details}\label{sec:sm-details}
	\subsection{Networks with complex weights}\label{sec:sm-complex}
	\begin{proof}[Proof of Theorem \ref{the:balance-part}]
		Suppose that such partition exists. Then there are two possible cases for each cycle: it either completely lies in one of the node subset, thus has phase $0$, or contains one or more cycles by considering each node subset as a super node and edges within each node subset, thus the overall phase is still $0$. Hence, the graph is balanced by Definition \ref{def:balance}. 
		
		Now, suppose the graph is balanced. (i) Starting from a node $v_1$, we group its neighbours together in the same node subset if the corresponding edges with $v_1$ have the same phase, and we also group $v_1$ to the node subset with nodes corresponding to phase $0$. (ii) Then for one of $v_1$'s neighbours, say $v_2$, we repeat the same process to group the nodes that are only neighbours to $v_2$. For nodes that are neighbours of both $v_1$ and $v_2$, say nodes $v_3$ and $v_4$, by Definition \ref{def:balance} of structural balance, we have
		\begin{align*}
			(\varphi_{12} + \varphi_{23} + \varphi_{31}) \mod 2\pi &= 0\\
			(\varphi_{12} + \varphi_{24} + \varphi_{41}) \mod 2\pi &= 0.
		\end{align*}
		If they are in the same community in the previous step, we have $\varphi_{31} = \varphi_{41}$, and then since $\varphi_{23}, \varphi_{24}\in [0, 2\pi)$, we have $\varphi_{23} = \varphi_{24}$. If further $\varphi_{23} = 0$, then we have $\varphi_{12} = \varphi_{13}$. Hence, combining with nodes that are only neighbours of $v_2$, this step is effectively repeat the same process as in (i) for node $v_2$, to group neighbours together in the same node subset if the corresponding edge with $v_2$ have the same phase, and also group nodes corresponding to phase $0$ to the same node subset as $v_2$. (iii) Hence, we can simply repeat the process in (i) to all $v_1$'s neighbours, and then the neighbours of $v_1$'s neighbours, which can further propagate to the whole network. (iv) Finally, it is straightforward to check that the resulting partition has the feature that any edges within each node subset has phase $0$ and if we consider each node subset as a super node, then the phase of any cycle is $0$. 
	\end{proof}
	
	\begin{proof}[Proof of Theorem \ref{the:antibalance-part}]
		Since a complex-weighted graph $G = (V, E, \mathbf{W})$ is antibalanced if and only if the graph after adding phase $\pi$ to each edge, denoted $G_n = (V, E, -\mathbf{W})$, is balanced. From Theorem \ref{the:balance-part}, $G_n$ is balanced if and only if there is a partition of the node set $\{V_i\}_{i=1}^{l_p}$ s.t.~any edges (in $G_n$) within each node subset has phase $0$ and if we consider each node subset as a super node, then the phase of any cycle (in $G_n$) is $0$. Hence, $G$ is antibalanced if and only if there is a partition of the node set $\{V_i\}_{i=1}^{l_p}$ s.t.~any edges within each node subset has phase $\pi$ and if we consider each node subset as a super node and add $\pi$ to each super edge, then the phase of any cycle is $0$.   
	\end{proof}
	
	\begin{proof}[Proof of Lemma \ref{lem:ba-path}]
		Suppose for contradiction that there are two paths from node $v_i$ to $v_j$ that are of different phases, denoted $P_1$ and $P_2$ with phases $\varphi_1$ and $\varphi_2$, respectively. WLOG, we can assume there is no overlapping nodes in $P_1$ and $P_2$ apart from $v_i$ and $v_j$, since otherwise we can consider a pair of overlapping nodes to be the start and end points with this feature. Then if we denote the path reversing each edge in $P_2$ by $P_2'$, $P_2'$ has phase $2\pi - \varphi_2$. Further $P_1 + P_2'$ is a cycle and has phase $((\varphi_1 + 2\pi - \varphi_2)\mod 2\pi) \ne 0$, contradicting $G$ being balanced.
	\end{proof}
	
	\begin{proof}[Proof of Proposition \ref{pro:transition-spect-rho}]
		Since $\bar{\mathbf{W}}$ is an non-negative matrix, and $\bar{G}$ is irreducible and aperiodic, then by Perron-Frobenius theorem, (i) $\rho(\bar{\mathbf{W}})$ is real positive and an eigenvalue of $\bar{\mathbf{W}}$, i.e., $\bar{\lambda}_1 = \rho(\bar{\mathbf{W}})$, (ii) this eigenvalue is simple s.t.~the associated eigenspace is one-dimensional, (iii) the associated eigenvector, i.e., $\bar{\mathbf{u}}_1$, has all positive entries and is the only one of this pattern, and (iv) $\bar{\mathbf{W}}$ has only $1$ eigenvalue of the magnitude $\rho(\bar{\mathbf{W}})$. 
		
		Then, if $G$ is balanced, from Theorem \ref{the:transition-spect}, (i) $\mathbf{W}$ and $\bar{\mathbf{W}}$ share the same spectrum, and (ii) $\mathbf{U} = \mathbf{I}^*_1\bar{\mathbf{U}}$, where $\mathbf{U} = [\mathbf{u}_1, \mathbf{u}_2, \dots, \mathbf{u}_n]$ and $\bar{\mathbf{U}} = [\bar{\mathbf{u}}_1, \bar{\mathbf{u}}_2, \dots, \bar{\mathbf{u}}_n]$ containing all the eigenvectors, and $\mathbf{I}_1$ is the diagonal matrix whose $(i,i)$ element is $\exp(\theta_{1i}\iu)$. Hence, $\lambda_1 = \bar{\lambda}_1 = \rho(\bar{\mathbf{W}}) = \rho(\mathbf{W})$, and this eigenvalue is simple and the only one of the largest magnitude. Meanwhile, $\mathbf{u}_1 = \mathbf{I}^*_1\bar{\mathbf{u}}_1$, thus it has the pattern as described and is the only one of this pattern. The results of antibalanced graphs follow similarly.
	\end{proof}
	
	\begin{lemma}
		If $G$ is periodic, and $\forall v_i, v_j\in V$, all walks of the same length from $v_i$ to $v_j$ have the same phase, then $\forall v_i, v_j\in V$, all walks of even lengths from $v_i$ to $v_j$ have the same phase.
		\label{lem:unbalanced-aperiod1}
	\end{lemma}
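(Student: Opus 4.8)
The plan is to reduce the claim about even-length walks to the given equal-length hypothesis by means of a phase-preserving padding operation. The central observation is that a single back-and-forth excursion along an edge carries zero net phase: since $\mathbf{W}$ is Hermitian, any edge $\{v_i,v_k\}$ satisfies $\varphi_{ik}+\varphi_{ki}\equiv 0\pmod{2\pi}$, so the detour $v_i\to v_k\to v_i$ increases the length of a walk by $2$ while leaving its phase unchanged.

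First I would fix a pair $v_i,v_j$ and two walks $W_1,W_2$ from $v_i$ to $v_j$ of even lengths $\ell_1\le\ell_2$. Because both are even, the gap $\ell_2-\ell_1=2s$ is a nonnegative even number. By connectivity of $G$, the node $v_i$ has a neighbour $v_k$, so inserting $s$ copies of the excursion $v_i\to v_k\to v_i$ at the beginning of $W_1$ produces a walk $W_1'$ from $v_i$ to $v_j$ of length $\ell_1+2s=\ell_2$ whose phase equals that of $W_1$.

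Then I would apply the hypothesis: $W_1'$ and $W_2$ are both walks from $v_i$ to $v_j$ of the \emph{same} length $\ell_2$, hence they share the same phase. Combining this with the padding equality yields $\mathrm{phase}(W_1)=\mathrm{phase}(W_1')=\mathrm{phase}(W_2)$, which is precisely the assertion; since the pair and the two walks were arbitrary, the conclusion holds for all even-length walks between all pairs of nodes.

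The argument has no genuine difficulty: the only steps demanding care are verifying that the excursion truly contributes phase $0$ (this is exactly where the Hermitian relation $\varphi_{ik}=2\pi-\varphi_{ki}$ enters) and checking that a neighbour $v_k$ is available to host it, which connectivity guarantees. I note that periodicity is not actually used anywhere in this reduction; it merely delineates the case of the surrounding proof of Lemma \ref{lem:unbalanced} in which this lemma is invoked.
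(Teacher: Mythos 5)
Your proof is correct, and it takes a genuinely different and more elementary route than the paper's. The paper argues by contradiction: given two even-length walks from $v_i$ to $v_j$ with different phases, it invokes aperiodicity (via Proposition 4 in the Appendix of \cite{li2015voter}) to obtain an even-length walk from $v_j$ back to $v_i$, closes both walks into even closed walks at $v_i$ --- at least one of which then has nonzero phase --- and splits that closed walk into two halves of equal length; reversing one half via Hermiticity yields two equal-length walks from the same pair of nodes with different phases, contradicting the hypothesis. Your padding argument instead equalizes the lengths of the two given walks directly, by inserting zero-phase excursions $v_i\to v_k\to v_i$ (resting on the same Hermitian identity $\varphi_{ik}+\varphi_{ki}\equiv 0\pmod{2\pi}$ that powers the paper's reversal step), and then applies the equal-length hypothesis exactly once. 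Your route buys several things: it is direct rather than by contradiction, it needs no external result, and --- as you note --- it uses no (a)periodicity assumption at all. That last observation is sharper than you may realize: the lemma as stated says ``periodic'' while the paper's own proof assumes aperiodicity (the statement carries a typo, since the lemma is invoked in the aperiodic case of the proof of Lemma \ref{lem:unbalanced}), and your argument shows the hypothesis is simply superfluous, dissolving the discrepancy. What the paper's approach buys is mainly structural economy: its split-and-reverse device on an even closed walk is the same trick used in case (i) of the proof of Lemma \ref{lem:unbalanced}, so the two cases stay parallel; but as a proof of this lemma in isolation, yours is the cleaner one.
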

	\begin{proof}
		We prove it by contradiction. Suppose that there exist two even length walks $P_1$ and $P_2$ from $v_i$ to $v_j$ with different phases. Since $G$ is aperiodic, by Proposition 4 in Appendix in \cite{li2015voter}, there exist a walk from $v_j$ to $v_i$ of even length, denoted by $P_e$. Then $C_1 = P_1 + P_e$ forms a closed walk at node $v_i$ with even length, and $C_2 = P_2 + P_e$ forms another closed walk at node $v_i$ with even length. The two closed walks $C_1, C_2$ carry different phases, denoted by $\theta_1, \theta_2$, respectively. Hence, at least one of the phases is nonzero, and WLOG, suppose $\theta_1\ne 0$. 
		
		
		
		Then within the closed walk $C_1$, we can find two nodes $v_h, v_l$ s.t. the part from $v_h$ to $v_l$ is of the same length as $v_l$ to $v_h$, denoted by $P_3, P_4$ respectively. Then we can find two walks of the same length from $v_h$ to $v_l$, $P_3$ and the one by reversing each edge in $P_4$, denoted by $P_4'$, with different phases since $\theta_1\ne 0$.  
	\end{proof}
	
	\begin{lemma}
		If $G$ is aperiodic, and $\forall v_i, v_j\in V$, all walks of even lengths from $v_i$ to $v_j$ have the same phase, then $G$ is either balanced or antibalanced. 
		\label{lem:unbalanced-aperiod2}
	\end{lemma}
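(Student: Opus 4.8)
The plan is to reduce the balanced-versus-antibalanced dichotomy to two statements about cycle phases: every even-length cycle has phase $0$, and every odd-length cycle has one and the same phase, necessarily either $0$ or $\pi$. Recalling Definition \ref{def:balance}, a cycle of length $n_c$ and phase $\theta$ acquires phase $\theta + n_c\pi \pmod{2\pi}$ after adding $\pi$ to each composing edge; hence balance is equivalent to all cycles having phase $0$, whereas antibalance is equivalent to even cycles having phase $0$ and odd cycles having phase $\pi$. So once the two facts above are in hand, the case in which all odd cycles have phase $0$ yields balance, and the case in which they all have phase $\pi$ yields antibalance, and these are the only possibilities.

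First I would exploit the Hermitian structure to manufacture a canonical even closed walk of phase $0$ at every node. For any edge incident to $v_i$, the back-and-forth walk $v_i\to v_h\to v_i$ has even length and phase $\varphi_{ih}+\varphi_{hi}=2\pi\equiv 0$, using $\varphi_{hi}=2\pi-\varphi_{ih}$. Since $G$ is connected every node has such a neighbour, so by the hypothesis that all even-length walks between a fixed ordered pair share a single phase, every even-length closed walk based at $v_i$ must have phase $0$; in particular every even cycle has phase $0$. Next, for an odd cycle $C$ based at $v_i$ with phase $\theta$, the concatenation $C\cdot C$ is an even closed walk at $v_i$, so the previous step forces $2\theta\equiv 0\pmod{2\pi}$, whence $\theta\in\{0,\pi\}$.

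The remaining step, which I expect to be the main obstacle, is to show that two odd cycles based at different nodes carry the same phase. Given odd cycles $C_1$ at $v_i$ (phase $\theta_1$) and $C_2$ at $v_j$ (phase $\theta_2$), I would fix a walk $P$ from $v_i$ to $v_j$ of phase $\psi$ (available by connectivity) with reverse $\bar P$ of phase $2\pi-\psi$, and form the closed walk $C_1\cdot P\cdot \bar{C_2}\cdot \bar P$ at $v_i$, where $\bar{C_2}$ is $C_2$ traversed backwards with phase $2\pi-\theta_2$. Its length $|C_1|+|C_2|+2|P|$ is even, and its phase telescopes to $\theta_1+\psi+(2\pi-\theta_2)+(2\pi-\psi)\equiv\theta_1-\theta_2\pmod{2\pi}$. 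Applying the first step to this even closed walk gives $\theta_1-\theta_2\equiv 0$, and since both phases lie in $\{0,\pi\}$ we conclude $\theta_1=\theta_2$. Thus all odd cycles share a common phase in $\{0,\pi\}$, which together with the even-cycle fact places $G$ in exactly one of the balanced or antibalanced classes. The only delicate points are bookkeeping the parities of the concatenated walks and verifying that the reversal identity makes every back-and-forth contribution vanish; aperiodicity enters only in guaranteeing (with connectivity) that the odd cycles being compared exist, so that the conclusion is not vacuous.
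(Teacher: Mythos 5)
Your proof is correct, and it takes a genuinely different route from the paper's. The paper's argument is constructive and partition-based: it uses aperiodicity (through a cited result guaranteeing an even-length walk between every pair of nodes) to partition $V$ according to the common phase of even walks from a fixed base node, then verifies that this partition meets the requirements of the structural theorems (Theorems \ref{the:balance-part} and \ref{the:antibalance-part}) --- edges inside subsets all carry phase $0$ or all carry phase $\pi$, edges between two subsets share a phase, and contracted cycles behave correctly --- from which the dichotomy follows. You bypass both the partition and the structural theorems, working directly with the cycle-phase characterisation in Definition \ref{def:balance}: the back-and-forth walk across an edge pins every even closed walk (hence every even cycle) to phase $0$; doubling an odd cycle forces its phase into $\{0,\pi\}$; and conjugating one odd cycle by a connecting path transfers that phase to every other odd cycle, with the telescoping computation $\theta_1+\psi+(2\pi-\theta_2)+(2\pi-\psi)\equiv\theta_1-\theta_2$ and the parity bookkeeping both checking out. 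A notable consequence is that your argument never actually uses aperiodicity: connectivity alone suffices, because in the bipartite case there are no odd cycles, every cycle is even and hence of phase $0$, and the graph comes out balanced. So you prove a slightly stronger statement while dispensing with the external citation on which the paper relies; what the paper's route buys in exchange is the explicit balanced/antibalanced partition itself, which is the object used by the structural machinery elsewhere in the paper, whereas your route buys brevity, self-containedness, and a weaker hypothesis.
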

	\begin{proof}
		From Proposition 4 in Appendix in \cite{li2015voter}, there exists an even length walk between any two nodes. Hence, we can partition $V$ into $\{V_i\}_{i=1}^{l_p}$ based on the phases of even length walks originated from a node $v_i\in V$, where nodes to which all even length walks from $v_1$ have phase $\theta_i$ are grouped into node subset $V_i$, for all $i$. 
		
		We argue that (a) within each node subset, all edges have phase $0$ or $\pi$, (b) between two node subsets, all edges have the same phase, and (c) if there is any cycle $C$ after considering each node subset as a super node, the phase of the cycle is either $0$ when edges within each node subset is $0$ or $\pi\abs{C}$ when edges within each node subset is $\pi$. It follows from Theorems \ref{the:balance-part} and \ref{the:antibalance-part} that $G$ is either balanced or antibalanced.
		
		For (a), we consider two edges $ab$ and $cd$ in the same node subset, say $V_1$. Then we can construct two even length walks from $v_i$ to $c$ and $v_i$ to $d$ as follows. 
		\begin{align*}
			P_e(v_i, c) &= P_e(v_i, b) + P_e(b, c),\\
			P_e(v_i, d) &= P_e(v_i, a) + ab + P_e(b, c) + cd,
		\end{align*}
		where $P_e(x,y)$ represents the constructed even length walk from node $x$ to node $y$. We also denote the phase of a walk $P$ by $\theta(P)$. Since nodes $a,b,c,d$ are in the same node subset, 
		\begin{align*}
			\theta(P_e(v_i, a)) = \theta(P_e(v_i, b)) = \theta(P_e(v_i, c)) = \theta(P_e(v_i, d)).
		\end{align*}
		Hence, $\exists k\in\mathbf{N}$, s.t.
		\begin{align*}
			\theta(ab) + \theta(cd) = 2k\pi,
		\end{align*}
		and this is true for any two edges within the node subset. Hence, $\theta(ab) = \theta(cd) = k\pi$, for some $k\in \mathbb{N}$. Therefore, all edges have the same phase, either $0$ or $\pi$. 
		
		For (b), we consider two edges $ab$ and $cd$ between two different node subsets, say $V_1$ and $V_2$. Then we can still consider the two even length walks $P_e(v_i, c)$ and $P_e(v_i, d)$ as before. Since $a,c$ are in the same node subset, while $b,d$ are in the same node subset,
		\begin{align*}
			\theta(P_e(v_i, a)) &= \theta(P_e(v_i, c)) \coloneqq \theta_a,\\
			\theta(P_e(v_i, b)) &= \theta(P_e(v_i, d)) \coloneqq \theta_b.
		\end{align*}
		Hence, $\exists k\in\mathbf{N}$, s.t.
		\begin{align*}
			\theta(ab) + \theta(cd) = 2(\theta_b - \theta_a) + 2k\pi,
		\end{align*}
		and this is true for any two edges between the two different node sets. Hence, $\theta(ab) = \theta(cd) = (\theta_b - \theta_a) + k\pi$, for some $k\in \mathbb{N}$. Therefore, all edges have the same phase, either $(\theta_b - \theta_a)$ or $(\theta_b - \theta_a) + \pi$. 
		
		For (c), we consider a cycle $C=V_{i_1}V_{i_2}\cdots V_{i_\abs{C}}V_{i_1}$ after treating each node subset as a super node. Then for each $j$, $\exists a_{j}, a'_{j}\in V_{i_j}$ s.t. $a_{{j-1}}a'_{j}$ and $a_{j}a'_{{j+1}}$ are two edges in the graph (with the convention that $\abs{C} + 1 = 1$ for the index $j$). (i) If $\abs{C}$ is even, then we can construct an even length walk from $v_i$ to $a'_1$ as follows. 
		\begin{align*}
			P_e(v_i, a_1') = P_e(v_i, a_1) + a_1a'_2 + P_e(a'_2, a_2) + a_2a'_3 + \cdots + a_{\abs{C}}a'_1. 
		\end{align*}
		Since $a_1, a'_1 \in V_{i_1}$, $\theta(P_e(v_i, a_1')) = \theta(P_e(v_i, a_1))$. By (a), $\theta(P_e(a_j, a_j')) = 0,\, \forall j$. Hence, 
		\begin{align*}
			\theta(C) = \theta(a_1a'_2  + a_2a'_3 + \cdots + a_{\abs{C}}a'_1) = 0.
		\end{align*}
		(ii) If $\abs{C}$ is odd, then we can construct an even length walk from $v_i$ to a neighbour of $a_1'$, denoted by $b$, as follows. 
		\begin{align*}
			P_e(v_i, b) = P_e(v_i, a_1) + a_1a'_2 + P_e(a'_2, a_2) + a_2a'_3 + \cdots + a_{\abs{C}}a'_1 + a'_1b. 
		\end{align*}
		Then for the same reasons as before, 
		\begin{align*}
			\theta(a_1a'_2  + a_2a'_3 + \cdots + a_{\abs{C}}a'_1 + a'_1b) = 0.
		\end{align*}
		Hence, by (a), $\theta(C) = 0$ if all edges within the node subset have phase $0$ and $\pi$ if all edges within the node subset have phase $\pi$. 
	\end{proof}
	
	\begin{proof}[Proof of Lemma \ref{lem:unbalanced}]
		(i) If $G$ is periodic, then $G$ is bipartite, because of the presence of length-$2$ cycle(s). Then all cycles have even length, and for each cycle $C$, we can find node $v_i, v_j\in C$, s.t.~the part starting from $v_i$ to $v_j$ has the same length as the remaining part from $v_j$ back to $v_i$. Since $\mathbf{W}$ is Hermitian, it means that we can find two walks of the same length from $v_i$ to $v_j$. Then, suppose the statement is not true, i.e., all walks of the same length between each pair of nodes $v_h,v_l\in V$ have the same phase, then all cycles have phase $0$, noting that the phases of the two edges connecting the same pair of nodes sum to $2\pi$, thus $G$ is balanced, which leads to contradiction. (ii) Otherwise, $G$ is aperiodic, then $G$ is strictly unbalanced by Lemmas \ref{lem:unbalanced-aperiod1} and \ref{lem:unbalanced-aperiod2}.
	\end{proof}
	
	\begin{proof}[Proof of Theorem \ref{the:strict-unb-rho}]
		We first note that if $\rho(\mathbf{W}) < \rho(\bar{\mathbf{W}})$, then $G$ is strictly unbalanced, since the spectral radius will be the same if $G$ is balanced or antibalanced by Theorem \ref{the:transition-spect}. 
		
		For the other direction, if $G$ is strictly unbalanced, by Lemma \ref{lem:unbalanced}, $\exists v_i,v_j\in V$ and $z_1\in \mathbb{Z}^+$ s.t.~there are two walks of length $z_1$ between nodes $v_i, v_j$ of different phases. Then
		\begin{align*}
			\abs{(\mathbf{W}^{z_1})_{ij}} <  (\bar{\mathbf{W}}^{z_1})_{ij}, 
		\end{align*}
		where $(\mathbf{W})_{ij}$ indicates the $(i,j)$ element of a matrix $\mathbf{W}$. Hence, for sufficiently large $z_2$, the walks between each pair of nodes will be able to go through the two walks of different phases between nodes $v_i, v_j$, thus $\forall v_h,v_l\in V$,
		\begin{align*}
			\abs{(\mathbf{W}^{z_2})_{hl}} < 
			(\bar{\mathbf{W}}^{z_2})_{hl}.  
		\end{align*}
		Then for each vector $\mathbf{x} = (x_h)\in \mathbb{R}^n$ and $\norm{\mathbf{x}}_2 = 1$, we can find $\bar{\mathbf{x}} = (\abs{x_h})$ s.t.~$\norm{\bar{\mathbf{x}}}_2 = 1$ and
		\begin{align*} 
			\abs{(\mathbf{W}^{z_2}\mathbf{x})_h} = \abs{\sum_{l}(\mathbf{W}^{z_2})_{hl}x_l} \le \sum_{l}\abs{(\mathbf{W}^{z_2})_{hl}x_l} < \sum_{l}(\bar{\mathbf{W}}^{z_2})_{hl}\abs{x_l} = (\bar{\mathbf{W}}^{z_2}\bar{\mathbf{x}})_h,
		\end{align*}
		where $(\mathbf{x})_h$ indicates the $h$-th element of an vector $\mathbf{x}$. Therefore, $\norm{\mathbf{W}^{z_2}\mathbf{x}}_2 < \norm{\bar{\mathbf{W}}^{z_2}\bar{\mathbf{x}}}_2$. Hence, by definition,
		\begin{align*}
			\norm{\mathbf{W}^{z_2}}_2 = \max_{\norm{\mathbf{x}}_2 = 1}\norm{\mathbf{W}^{z_2}\mathbf{x}}_2 < \max_{\norm{\mathbf{y}}_2 = 1}\norm{\bar{\mathbf{W}}^{z_2}\mathbf{y}}_2 = \norm{\bar{\mathbf{W}}^{z_2}}_2,
		\end{align*}
		then $\rho(\mathbf{W})^{z_2} = \rho(\mathbf{W}^{z_2}) < \rho(\bar{\mathbf{W}}^{z_2}) = \rho(\bar{\mathbf{W}})^{z_2}$, and finally $\rho(\mathbf{W}) < \rho(\bar{\mathbf{W}})$. 
	\end{proof}
	
	\subsection{Random walks}\label{sec:sm-randwalk}
	\begin{proposition}
		If $G$ is balanced, then $\mathbf{P}^t$ is still a complex transition matrix, and has the following phase pattern: 
		\begin{displaymath}
			(\mathbf{P}^t)_{ij} = \exp(\theta_{\sigma(i)\sigma(j)}\iu)(\bar{\mathbf{P}}^t)_{ij},
		\end{displaymath}
		where $\sigma(\cdot)$ and $\theta_{hl}$ are as defined in Theorem \ref{the:transition-spect}, and $2m = \sum_{j}d_j$. 
	\end{proposition}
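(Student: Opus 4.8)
The plan is to exploit the diagonal conjugation that relates the balanced transition matrix to its phase-free counterpart. From Theorem~\ref{the:balance-part} (as used in the proof of Theorem~\ref{the:transition-spect}) we have $\mathbf{W} = \mathbf{I}_1^*\bar{\mathbf{W}}\mathbf{I}_1$ when $G$ is balanced. Since $\mathbf{D}$ and $\mathbf{I}_1$ are both diagonal they commute, so
\begin{align*}
\mathbf{P} = \mathbf{D}^{-1}\mathbf{W} = \mathbf{D}^{-1}\mathbf{I}_1^*\bar{\mathbf{W}}\mathbf{I}_1 = \mathbf{I}_1^*\mathbf{D}^{-1}\bar{\mathbf{W}}\mathbf{I}_1 = \mathbf{I}_1^*\bar{\mathbf{P}}\mathbf{I}_1.
\end{align*}
Because $\mathbf{I}_1$ is a diagonal unitary matrix, $\mathbf{I}_1\mathbf{I}_1^* = \mathbf{I}$, and the intermediate factors telescope when this is raised to the $t$-th power, giving $\mathbf{P}^t = \mathbf{I}_1^*\bar{\mathbf{P}}^t\mathbf{I}_1$.

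First I would read off the $(i,j)$ entry. Writing $(\mathbf{I}_1)_{jj} = \exp(\theta_{1\sigma(j)}\iu)$ and $(\mathbf{I}_1^*)_{ii} = \exp(-\theta_{1\sigma(i)}\iu)$, the conjugation yields
\begin{align*}
(\mathbf{P}^t)_{ij} = \exp(-\theta_{1\sigma(i)}\iu)\,(\bar{\mathbf{P}}^t)_{ij}\,\exp(\theta_{1\sigma(j)}\iu) = \exp\!\big((\theta_{1\sigma(j)} - \theta_{1\sigma(i)})\iu\big)(\bar{\mathbf{P}}^t)_{ij}.
\end{align*}
To match the claimed pattern it then remains to verify the phase identity $\theta_{\sigma(i)\sigma(j)} = \theta_{1\sigma(j)} - \theta_{1\sigma(i)}$ modulo $2\pi$. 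This follows from the additivity of path phases together with the reversal rule $\varphi_{ji} = -\varphi_{ij} + 2\pi$: a path from $V_{\sigma(i)}$ to $V_{\sigma(j)}$ can be routed through $V_1$, so $\theta_{\sigma(i)\sigma(j)} = \theta_{\sigma(i)1} + \theta_{1\sigma(j)}$, while reversing a $V_1$-to-$V_{\sigma(i)}$ path negates its total phase modulo $2\pi$, giving $\theta_{\sigma(i)1} = -\theta_{1\sigma(i)}$. These quantities are well defined precisely because Lemma~\ref{lem:ba-path} guarantees that all paths between the same pair of subsets share a common phase.

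For the claim that $\mathbf{P}^t$ is again a complex transition matrix, I would take magnitudes in the entrywise identity above. Since $\bar{\mathbf{P}}^t$ is a product of nonnegative matrices it has nonnegative entries, so $\abs{(\mathbf{P}^t)_{ij}} = (\bar{\mathbf{P}}^t)_{ij}$, and hence $\sum_j \abs{(\mathbf{P}^t)_{ij}} = \sum_j (\bar{\mathbf{P}}^t)_{ij} = 1$ because $\bar{\mathbf{P}} = \mathbf{D}^{-1}\bar{\mathbf{W}}$ is row-stochastic and powers of row-stochastic matrices remain row-stochastic. I expect the only genuinely delicate step to be the phase bookkeeping of the previous paragraph: one must track the additive cocycle structure of $\theta_{hl}$ carefully and confirm that the routing-through-$V_1$ and reversal identities hold modulo $2\pi$ rather than exactly, which is exactly where the Hermitian constraint on $\mathbf{W}$ enters.
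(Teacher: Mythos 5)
Your proof is correct and takes essentially the same route as the paper's: both establish $\mathbf{P} = \mathbf{I}_1^*\bar{\mathbf{P}}\mathbf{I}_1$ from the balanced decomposition of $\mathbf{W}$, telescope to $\mathbf{P}^t = \mathbf{I}_1^*\bar{\mathbf{P}}^t\mathbf{I}_1$ using $\mathbf{I}_1\mathbf{I}_1^*=\mathbf{I}$, and read off the $(i,j)$ entry together with the phase identity $\theta_{\sigma(i)\sigma(j)} = \theta_{1\sigma(j)} - \theta_{1\sigma(i)} \pmod{2\pi}$. Your version is in fact slightly more explicit than the paper's, which compresses that phase identity into the remark that the last equality holds ``by $G$ being balanced'' and does not spell out the routing-through-$V_1$ and reversal argument you give.
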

	\begin{proof}
		If $G$ is balanced, then $\mathbf{P} = \mathbf{I}_1^*\bar{\mathbf{P}}\mathbf{I}_1$, where $\mathbf{I}_1$ is the diagonal matrix whose $(i,i)$ element is $\exp(\theta_{1\sigma(i)})$, by Theorem \ref{the:balance-part}. Then
		\begin{align*}
			\mathbf{P}^t = (\mathbf{I}_1^*\bar{\mathbf{P}}\mathbf{I}_1)^t = \mathbf{I}_1^*\bar{\mathbf{P}}^t\mathbf{I}_1. 
		\end{align*}
		Since $\bar{\mathbf{P}}^t$ is still a transition matrix, $\mathbf{P}^t$ is still a complex transition matrix. Meanwhile, $(\mathbf{P}^t)_{ij} = (\bar{\mathbf{P}}^t)_{ij}(\mathbf{I}_1^*)_{ii}(\mathbf{I}_1)_{jj} = \exp(-\theta_{1\sigma(i)}\iu + \theta_{1\sigma(j)}\iu)(\bar{\mathbf{P}}^t)_{ij} = \exp(\theta_{\sigma(i)\sigma(j)}\iu)(\bar{\mathbf{P}}^t)_{ij}$, where the last equality is by $G$ being balanced.
	\end{proof}
	
	\begin{proposition}
		If $G$ is antibalanced, then $\mathbf{P}^t$ is still a complex transition matrix, and has the following phase pattern: 
		\begin{displaymath}
			(\mathbf{P}^t)_{ij} = (-1)^t\exp(\theta_{\sigma(i)\sigma(j)}\iu)(\bar{\mathbf{P}}^t)_{ij},
		\end{displaymath}
		where $\sigma(\cdot)$ and $\theta_{hl}$ are as defined in Theorem \ref{the:transition-spect}, and $2m = \sum_{j}d_j$.
	\end{proposition}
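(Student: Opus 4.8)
The plan is to mirror the proof of the balanced counterpart, inserting the single sign that distinguishes antibalance. The natural starting point is the structural identity for antibalanced graphs, $\mathbf{W} = -\mathbf{I}_1^*\bar{\mathbf{W}}\mathbf{I}_1$, which was already derived inside the proof of Theorem \ref{the:transition-spect} (itself a consequence of Theorem \ref{the:antibalance-part}). Since $\mathbf{D}$ and $\mathbf{I}_1$ are both diagonal they commute, so left-multiplying by $\mathbf{D}^{-1}$ gives $\mathbf{P} = \mathbf{D}^{-1}\mathbf{W} = -\mathbf{I}_1^*\bar{\mathbf{P}}\mathbf{I}_1$, where $\bar{\mathbf{P}} = \mathbf{D}^{-1}\bar{\mathbf{W}}$.

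Next I would raise this to the $t$-th power. Because $\mathbf{I}_1\mathbf{I}_1^* = \mathbf{I}$, the conjugation telescopes and the scalar $-1$ factors out, yielding $\mathbf{P}^t = (-1)^t\mathbf{I}_1^*\bar{\mathbf{P}}^t\mathbf{I}_1$. Reading off the $(i,j)$ entry and substituting $(\mathbf{I}_1^*)_{ii} = \exp(-\theta_{1\sigma(i)}\iu)$ and $(\mathbf{I}_1)_{jj} = \exp(\theta_{1\sigma(j)}\iu)$ gives $(\mathbf{P}^t)_{ij} = (-1)^t\exp((-\theta_{1\sigma(i)}+\theta_{1\sigma(j)})\iu)(\bar{\mathbf{P}}^t)_{ij}$. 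The phase of the two diagonal factors then collapses to $\theta_{\sigma(i)\sigma(j)}$ exactly as in the balanced case: in the $\pi$-shifted graph, which is balanced, path phases are additive and path-independent (Lemma \ref{lem:ba-path} applied to that shift), so $-\theta_{1\sigma(i)}+\theta_{1\sigma(j)} \equiv \theta_{\sigma(i)\sigma(j)} \pmod{2\pi}$, producing the claimed formula.

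Finally, for the transition-matrix claim I would observe that neither the unimodular diagonals $\mathbf{I}_1, \mathbf{I}_1^*$ nor the sign $(-1)^t$ affects magnitudes, so $\abs{(\mathbf{P}^t)_{ij}} = (\bar{\mathbf{P}}^t)_{ij}$; since $\bar{\mathbf{P}}^t$ is a nonnegative row-stochastic matrix, the magnitudes in each row of $\mathbf{P}^t$ sum to one, which is precisely the defining property of a complex transition matrix. The only genuinely new bookkeeping relative to the balanced proposition is carrying the sign through the power, and the one point that needs care is confirming that $\theta_{hl}$ — here defined via the $\pi$-shifted phases — still composes additively so that the diagonal phases reduce cleanly to $\theta_{\sigma(i)\sigma(j)}$ without a stray $\pi$; this is exactly where the $\pi$-shift correspondence from Theorem \ref{the:antibalance-part} is invoked.
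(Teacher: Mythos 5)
Your proof is correct and follows essentially the same route as the paper's: both conjugate $\bar{\mathbf{P}}$ by the diagonal unimodular matrix $\mathbf{I}_1$ with the extra sign, i.e.\ $\mathbf{P} = -\mathbf{I}_1^*\bar{\mathbf{P}}\mathbf{I}_1$, telescope the conjugation to get $\mathbf{P}^t = (-1)^t\mathbf{I}_1^*\bar{\mathbf{P}}^t\mathbf{I}_1$, read off entries, and collapse $-\theta_{1\sigma(i)}+\theta_{1\sigma(j)}$ to $\theta_{\sigma(i)\sigma(j)}$ via the $\pi$-shifted balanced graph. Your explicit justification of that last phase identity (path-independence in the shifted graph) and of the row-stochasticity of magnitudes is merely a more careful spelling-out of what the paper compresses into ``the last equality is by $G$ being antibalanced.''
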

	\begin{proof}
		If $G$ is antibalanced, then $\mathbf{P} = -\mathbf{I}_1^*\bar{\mathbf{P}}\mathbf{I}_1$, where $\mathbf{I}_1$ is the diagonal matrix whose $(i,i)$ element is $\exp(\theta_{1\sigma(i)})$, by Theorem \ref{the:antibalance-part}. Then
		\begin{align*}
			\mathbf{P}^t = (-\mathbf{I}_1^*\bar{\mathbf{P}}\mathbf{I}_1)^t = (-1)^{t}\mathbf{I}_1^*\bar{\mathbf{P}}^t\mathbf{I}_1. 
		\end{align*}
		Since $\bar{\mathbf{P}}^t$ is still a transition matrix, $\mathbf{P}^t$ is still a complex transition matrix. Meanwhile, $(\mathbf{P}^t)_{ij} = (-1)^t(\bar{\mathbf{P}}^t)_{ij}(\mathbf{I}_1^*)_{ii}(\mathbf{I}_1)_{jj} = (-1)^t\exp(-\theta_{1\sigma(i)}\iu + \theta_{1\sigma(j)}\iu)(\bar{\mathbf{P}}^t)_{ij} = (-1)^t\exp(\theta_{\sigma(i)\sigma(j)}\iu)(\bar{\mathbf{P}}^t)_{ij}$, where the last equality is by $G$ being antibalanced.
	\end{proof}
	
	\begin{proposition}
		If $G$ is balanced and is not bipartite, with $A_{ij}\in S_k^1$ if there is an edge, then the steady state is $\mathbf{x}^* = (x_j^*)$ where
		\begin{displaymath}
			x_j^* = \exp((\sigma(j)-1)2\pi\iu/k)(\mathbf{x}(0)^*\tilde{\mathbf{1}}_1)d_j/(2m).
		\end{displaymath}
		\label{pro:balance-steady-s1k}
	\end{proposition}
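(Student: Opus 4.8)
The plan is to derive this as an immediate specialisation of Proposition~\ref{pro:balance-steady}: since $G$ is balanced and not bipartite, that proposition already provides the steady state
\begin{align*}
    x_j^* = \exp(\theta_{1\sigma(j)}\iu)\,(\mathbf{x}(0)^*\tilde{\mathbf{1}}_1)\,d_j/(2m),
\end{align*}
and the only task remaining is to evaluate the phase factor $\exp(\theta_{1\sigma(j)}\iu)$ under the restriction $A_{ij}\in S_k^1$. So I would reduce the whole statement to computing $\theta_{1i}$ for each part index $i$, inheriting from Proposition~\ref{pro:balance-steady} the role of the ``not bipartite'' hypothesis, which guarantees (via aperiodicity of $\bar G$) that the Perron eigenvalue is the unique one of maximal magnitude and hence that a genuine limit exists.

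First I would invoke the explicit balanced partition for phases in $S_k^1$ described in Section~\ref{sec:randwalk-eg}: the nodes split into $\{V_i\}_{i=1}^{k}$ so that every within-subset edge has phase $0$ and every edge from $V_i$ to $V_{i+1}$ carries phase $2\pi/k$ (indices modulo $k$, with $V_{k+1}=V_1$). By Lemma~\ref{lem:ba-path}, in a balanced graph all paths between a fixed pair of nodes share one common phase, so the quantity $\theta_{1i}$ appearing in Theorem~\ref{the:transition-spect} is well-defined as the phase of an arbitrary path from $V_1$ to $V_i$. Evaluating it along the canonical path $V_1\to V_2\to\cdots\to V_i$, which crosses $i-1$ inter-subset edges of phase $2\pi/k$ together with (possibly) some phase-$0$ intra-subset edges, gives
\begin{align*}
    \theta_{1i} = (i-1)\,2\pi/k \bmod 2\pi, \qquad \text{hence} \qquad \exp(\theta_{1i}\iu) = \exp\!\big((i-1)2\pi\iu/k\big).
\end{align*}

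Substituting $\exp(\theta_{1\sigma(j)}\iu) = \exp((\sigma(j)-1)2\pi\iu/k)$ into the general formula then yields exactly the claimed expression for $x_j^*$, and I would be done. The only point needing care — and the sole place where the $S_k^1$ restriction enters — is the identification $\theta_{1i} = (i-1)2\pi/k$; this is pure bookkeeping once path-independence (Lemma~\ref{lem:ba-path}) and the fixed inter-subset phase increments of the $S_k^1$ balanced partition are in hand, so I do not anticipate any genuine obstacle beyond keeping the indexing convention $V_{k+1}=V_1$ consistent throughout.
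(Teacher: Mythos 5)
Your proposal is correct and follows exactly the paper's route: the paper also obtains this statement by specialising Proposition~\ref{pro:balance-steady} to the balanced partition for phases in $S_k^1$ described in section~\ref{sec:randwalk-eg}, and your computation of $\theta_{1i}=(i-1)2\pi/k$ simply fills in the bookkeeping the paper leaves implicit. One small refinement: rather than assuming the canonical path $V_1\to V_2\to\cdots\to V_i$ exists, note that every inter-subset edge from $V_a$ to $V_b$ has phase $(b-a)2\pi/k \bmod 2\pi$, so by connectedness and Lemma~\ref{lem:ba-path} the phase telescopes to $(i-1)2\pi/k$ along \emph{any} path from $V_1$ to $V_i$.
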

	\begin{proof}
		The steady state in this case can be obtained by the partition corresponding to the balanced structure, and Proposition \ref{pro:balance-steady}.
	\end{proof}
	
	\begin{proposition}
		If $G$ is antibalanced and is not bipartite, with $A_{ij}\in S_k^1$ if there is an edge, then the random walks have different steady states for odd or even times, denoted by $\mathbf{x}^{*o} = (x_j^{*o})$ and $\mathbf{x}^{*e} = (x_j^{*e})$, respectively, where
	\begin{align*}
		x_j^{*o} &= -\exp((\sigma(j)-1)2\pi\iu/k)(\mathbf{x}(0)^*\tilde{\mathbf{1}}_1)d_j/(2m),\\
		x_j^{*e} &= \exp((\sigma(j)-1)2\pi/k\iu)(\mathbf{x}(0)^*\tilde{\mathbf{1}}_1)d_j/(2m).
	\end{align*}
		\label{pro:antibalance-steady-s1k}
	\end{proposition}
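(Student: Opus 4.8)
The plan is to deduce this from the general antibalanced steady-state result, Proposition~\ref{pro:antibalance-steady}, by substituting the explicit form that $\theta_{1i}$ takes once the phases are constrained to the cyclic group $S_k^1$. The structure of the argument should mirror exactly the balanced specialisation in Proposition~\ref{pro:balance-steady-s1k}, whose proof is a one-line appeal to the corresponding general statement plus the explicit partition.

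First I would make the partition explicit. As described in Section~\ref{sec:randwalk-eg}, when the edge phases lie in $S_k^1$ the antibalanced structure is governed by a partition $\{V_i\}_{i=1}^{k}$ that becomes the standard balanced $S_k^1$ configuration after adding phase $\pi$ to every edge; in that configuration edges from $V_i$ to $V_{i+1}$ carry phase $2\pi/k$ (with the convention $V_{k+1}=V_1$). Consequently, a path from $V_1$ to $V_i$, after adding $\pi$ to each composing edge, accumulates phase $(i-1)\cdot 2\pi/k$, so that in the notation of Theorem~\ref{the:transition-spect} one has $\theta_{1i} = (i-1)\cdot 2\pi/k$.

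Next I would invoke Proposition~\ref{pro:antibalance-steady} directly: since $G$ is antibalanced and not bipartite, the random walk has the two alternating steady states $x_j^{*o} = -\exp(\theta_{1\sigma(j)}\iu)(\mathbf{x}(0)^*\tilde{\mathbf{1}}_1)d_j/(2m)$ and $x_j^{*e} = \exp(\theta_{1\sigma(j)}\iu)(\mathbf{x}(0)^*\tilde{\mathbf{1}}_1)d_j/(2m)$. Substituting $\theta_{1\sigma(j)} = (\sigma(j)-1)\cdot 2\pi/k$ from the previous step yields precisely the claimed expressions.

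The only point requiring genuine care is verifying the explicit value of $\theta_{1i}$ for the antibalanced partition, and in particular accounting for the parity of $k$: as noted in the footnote of Section~\ref{sec:randwalk-eg}, the antibalanced structure sits inside $S_k^1$ only when $k$ is even (otherwise $\pi\notin S_k^1$ and one must pass to $S_{2k}^1$ for the construction to be valid). Once this is settled, the result is a direct specialisation of the general antibalanced case, and no further computation beyond the substitution is needed.
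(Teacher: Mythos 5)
Your proposal is correct and follows essentially the same route as the paper: the paper's own proof is a one-line appeal to Proposition \ref{pro:antibalance-steady} combined with the partition corresponding to the antibalanced structure under $S_k^1$, which is exactly your argument. Your write-up is in fact more explicit than the paper's, since you spell out $\theta_{1\sigma(j)} = (\sigma(j)-1)\cdot 2\pi/k$ and flag the parity caveat (odd $k$ requiring $S_{2k}^1$) that the paper relegates to a footnote in section \ref{sec:randwalk-eg}.
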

	\begin{proof}
		The steady state in this case can be obtained by the partition corresponding to the antibalanced structure, and Proposition \ref{pro:antibalance-steady}.
	\end{proof}
	
	\subsection{Application: Spectral clustering}\label{sec:sm-app-sc}
	\begin{proof}[Proof of Proposition \ref{pro:rcut-obj-L}]
		By definition, 
		\begin{align}
			\Tr(\mathbf{X}^*\mathbf{L}\mathbf{X}) = \sum_h\mathbf{x}^{(h)*}\mathbf{L}\mathbf{x}^{(h)}.
			\label{equ:trace}
		\end{align}
		From the bilinear form of the complex Laplacian in Eq.~\eqref{equ:laplacian-bilinear}, 
		\begin{align*}
			\mathbf{x}^{(h)*}\mathbf{L}\mathbf{x}^{(h)} 
			&= \sum_{(i,j), (j,i)\in E}r_{ij}\abs{x^{(h)}_i - \exp(\iu\varphi_{ij})x^{(h)}_j}^2\\
			&= \sum_{v_i\in X^{(h)}, v_j\notin X^{(h)}}\frac{r_{ij}}{\abs{X^{(h)}}} + \frac{1}{2}\sum_{a=1}^{l_h}\sum_{v_i, v_j\in X^{(h)}_a}\frac{r_{ij}\abs{\exp(\iu\theta_{X^{(h)}_a}) - \exp(\iu(\varphi_{ij} + \theta_{X^{(h)}_a}))}}{\abs{X^{(h)}}}\\
			&\quad + \sum_{a < b}\sum_{v_i\in X^{(h)}_a, v_j\in X^{(h)}_b}\frac{r_{ij}\abs{\exp(\iu\theta_{X^{(h)}_a}) - \exp(\iu(\varphi_{ij} + \theta_{X^{(h)}_b}))}}{\abs{X^{(h)}}}\\
			&= \frac{1}{\abs{X^{(h)}}}\left(cut(X^{(h)}, X^{(h)c}) + \frac{1}{2}\sum_{a=1}^{n_l}\sum_{v_i, v_j\in X^{(h)}_a}\left(2 - 2\cos(\varphi_{ij})\right)r_{ij}\right.\\
			&\left.\quad + \sum_{a < b}\sum_{v_i\in X^{(h)}_a, v_j\in X^{(h)}_b}\left(2 - 2\cos(\varphi_{ij} - (\theta^{(h)}_a - \theta^{(h)}_b))\right)r_{ij}\right)\\
			&= \frac{cut(X^{(h)}, X^{(h)c}) + \sum_{a=1}^{n_l}\sum_{b=1}^{n_l}ccut(X_a^{(h)}, X_b^{(h)})}{\abs{X^{(h)}}} = \frac{gcut(X^{(h)}, X^{(h)c})}{\abs{X^{(h)}}}.
		\end{align*}
		Then from Eq.~\eqref{equ:trace}, $\Tr(\mathbf{X}^*\mathbf{L}\mathbf{X})$ retrieves $grcut(\{X^{(1)}_{a}\}_{a=1}^{n_1}, \dots, \{X^{(k)}_{a}\}_{a=1}^{n_k})$.  
	\end{proof}
	
	\section{Magnetic Laplacian}\label{sec:sm-magnetic}
	Here, we include more results from examining the changes in the smallest and largest eigenvalues of the normalised magnetic Laplacian while varying integer values of $r=2\pi/\theta$ between $1$ and $100$. We observe that the results still hold for directed cycles of larger sizes; see Fig.~\ref{fig:mag-dcyc-b-lams}.
	\begin{figure}[!ht]
		\centering
		\begin{tabular}{cc}
			\includegraphics[width=.4\textwidth]{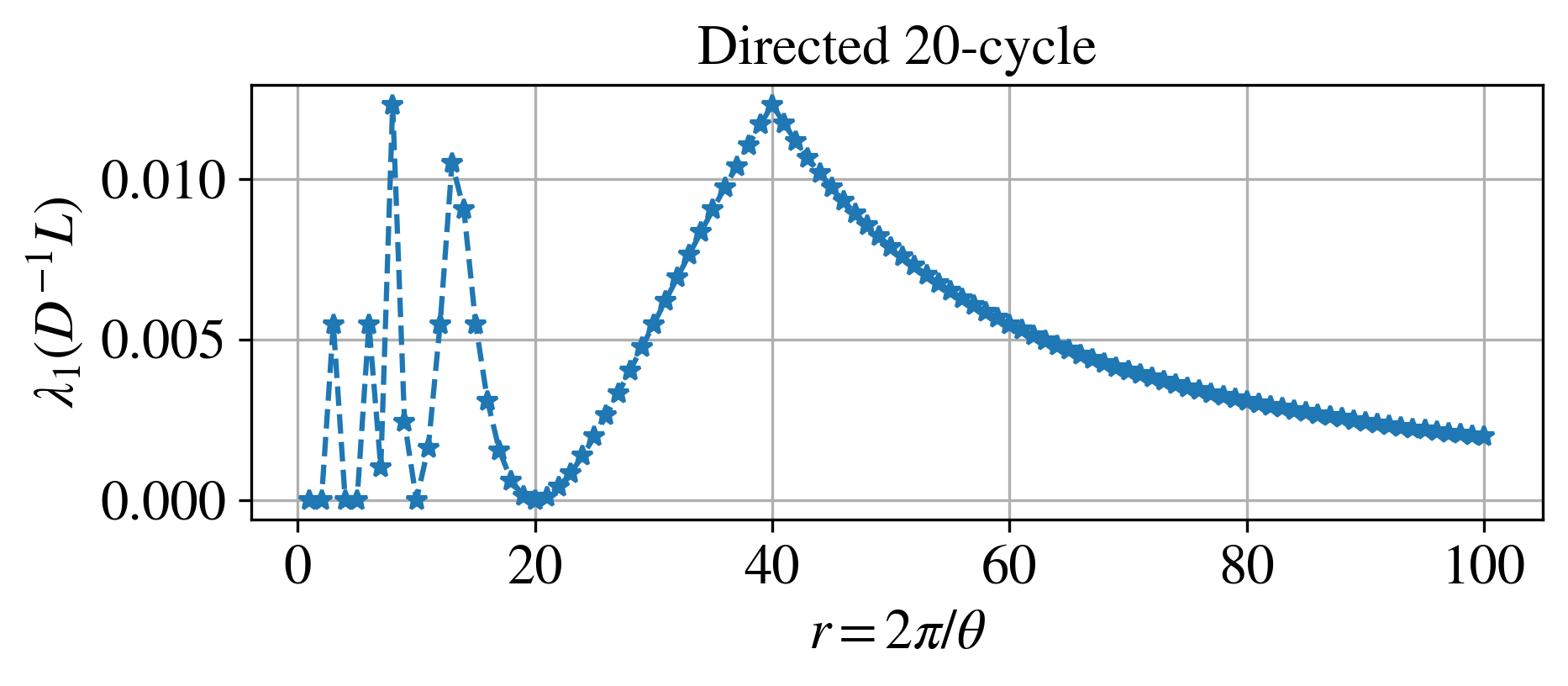} & \includegraphics[width=.4\textwidth]{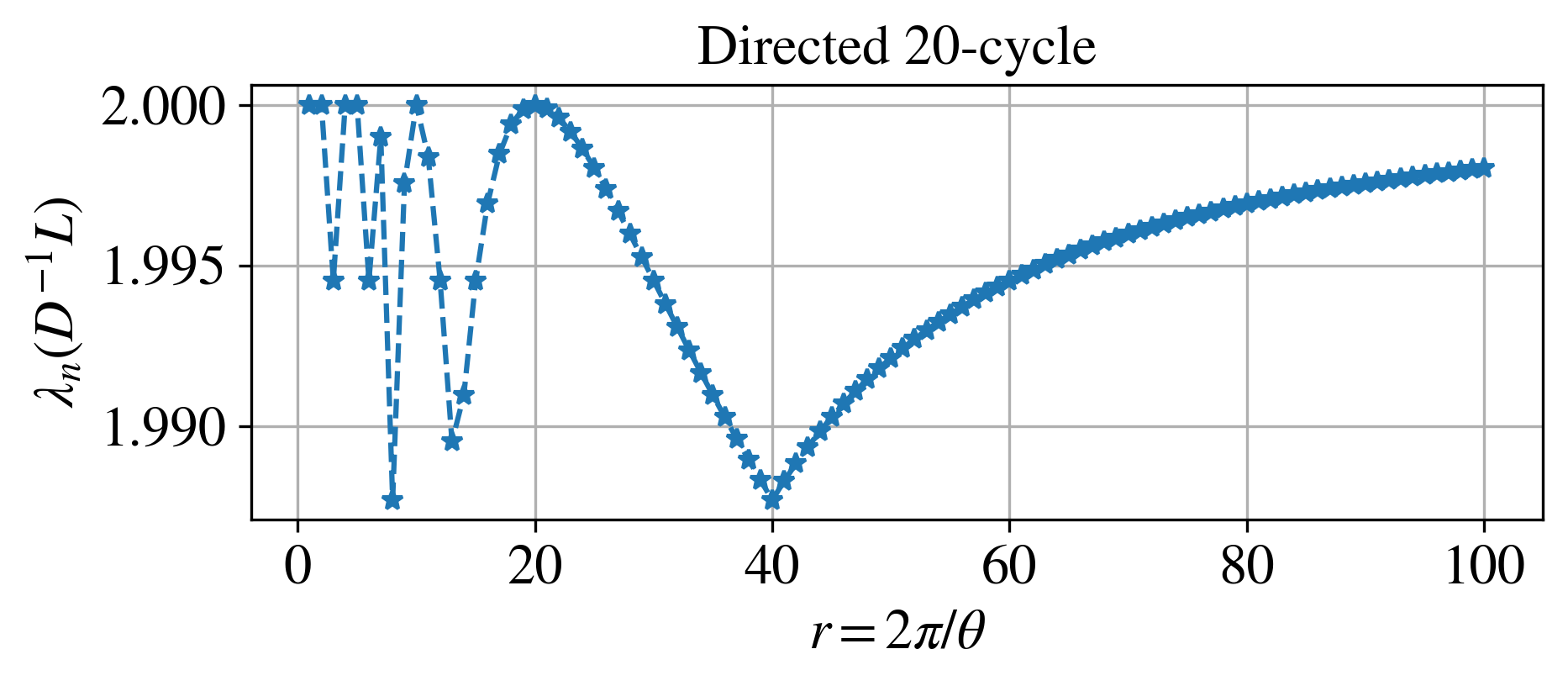} \\
			\includegraphics[width=.4\textwidth]{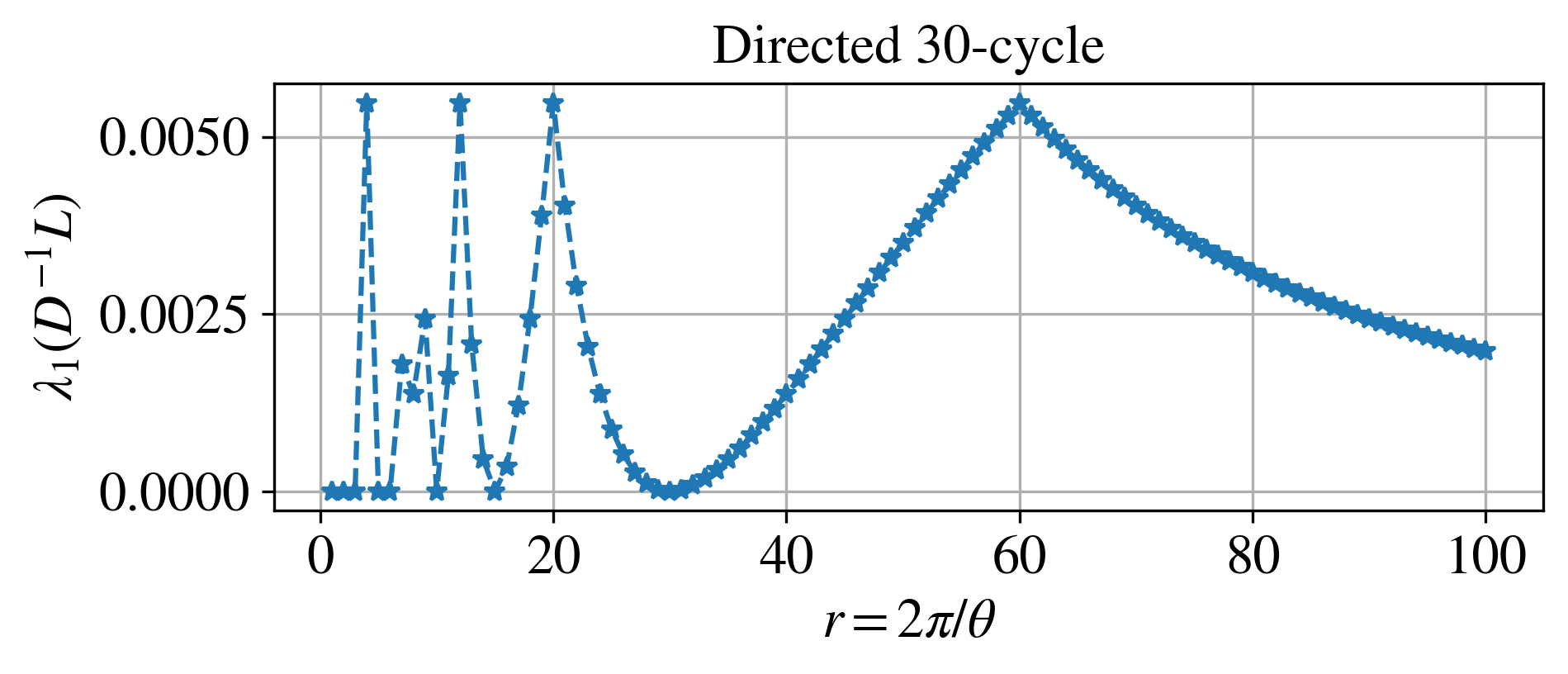} & \includegraphics[width=.4\textwidth]{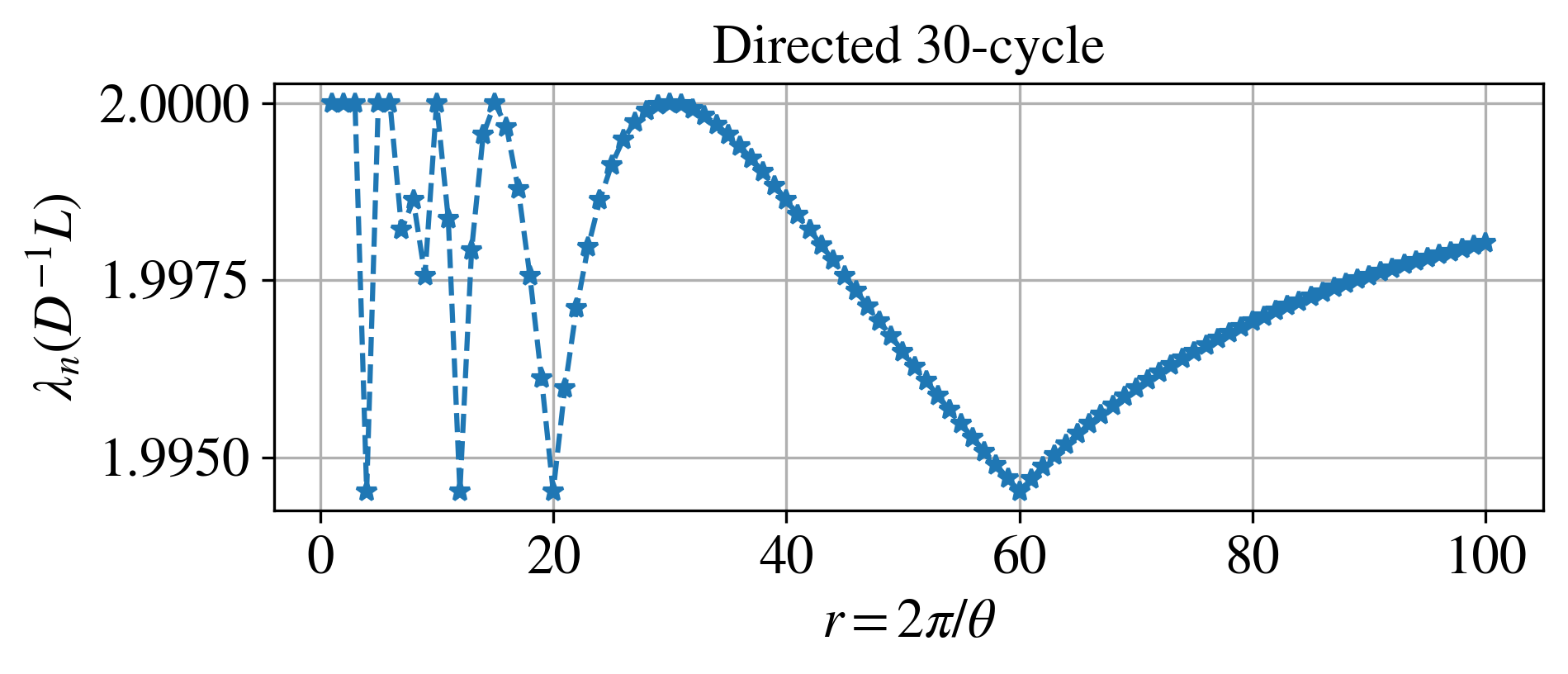}\\
			\includegraphics[width=.4\textwidth]{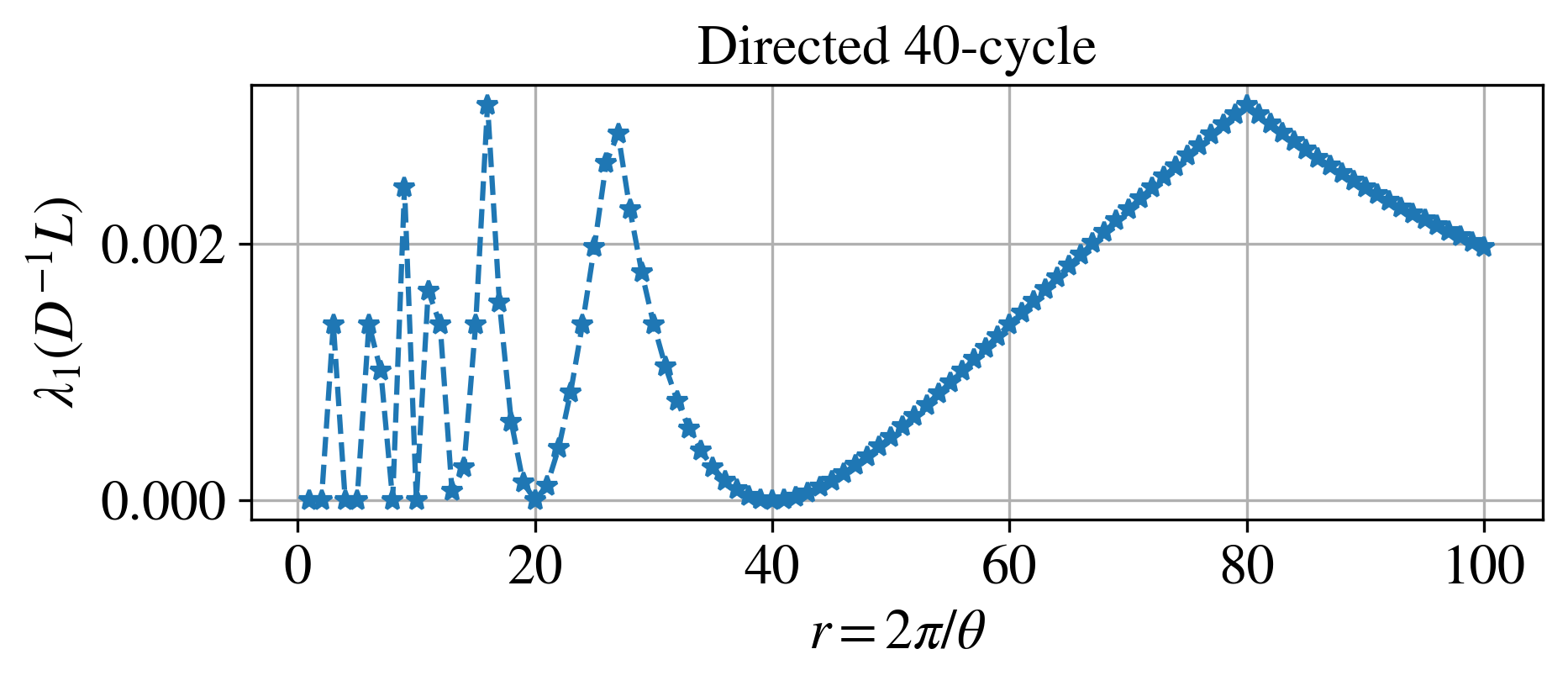} & \includegraphics[width=.4\textwidth]{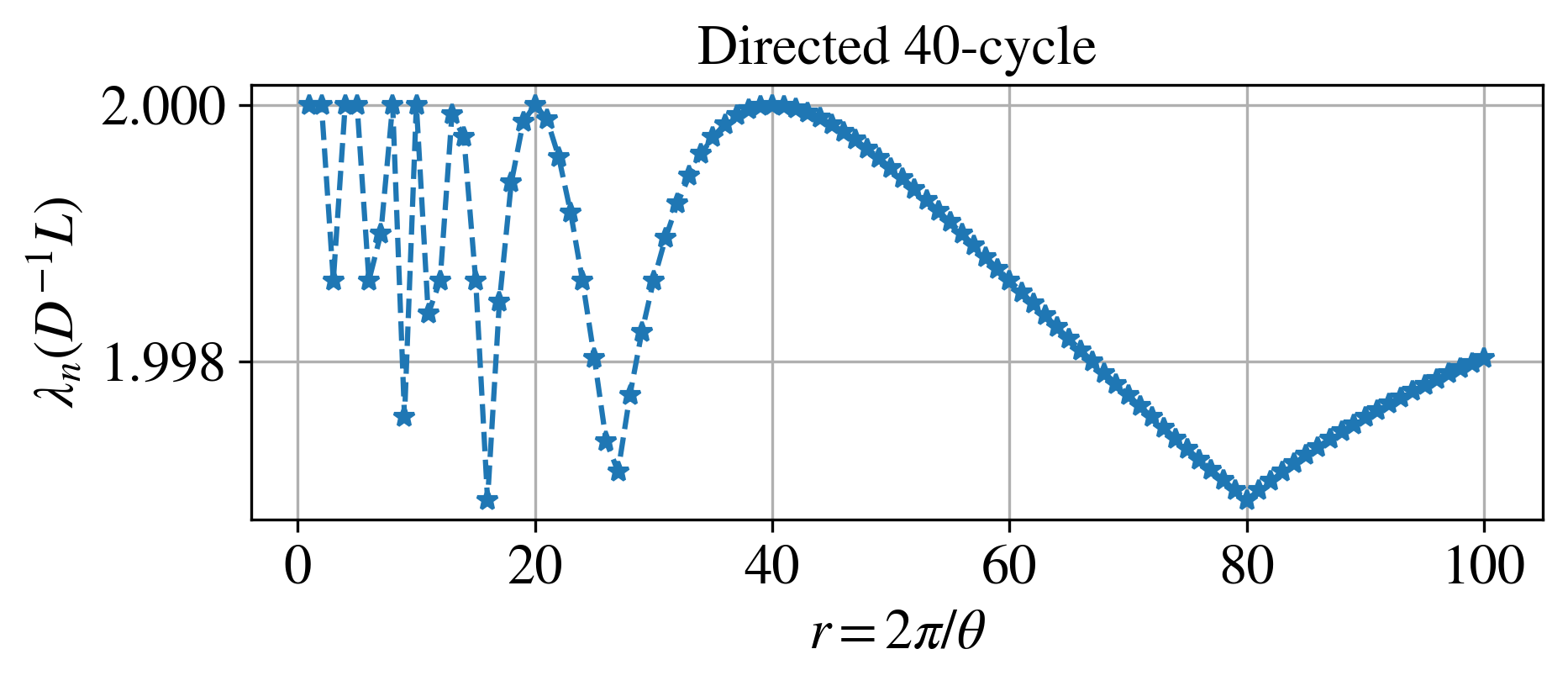}\\
			\includegraphics[width=.4\textwidth]{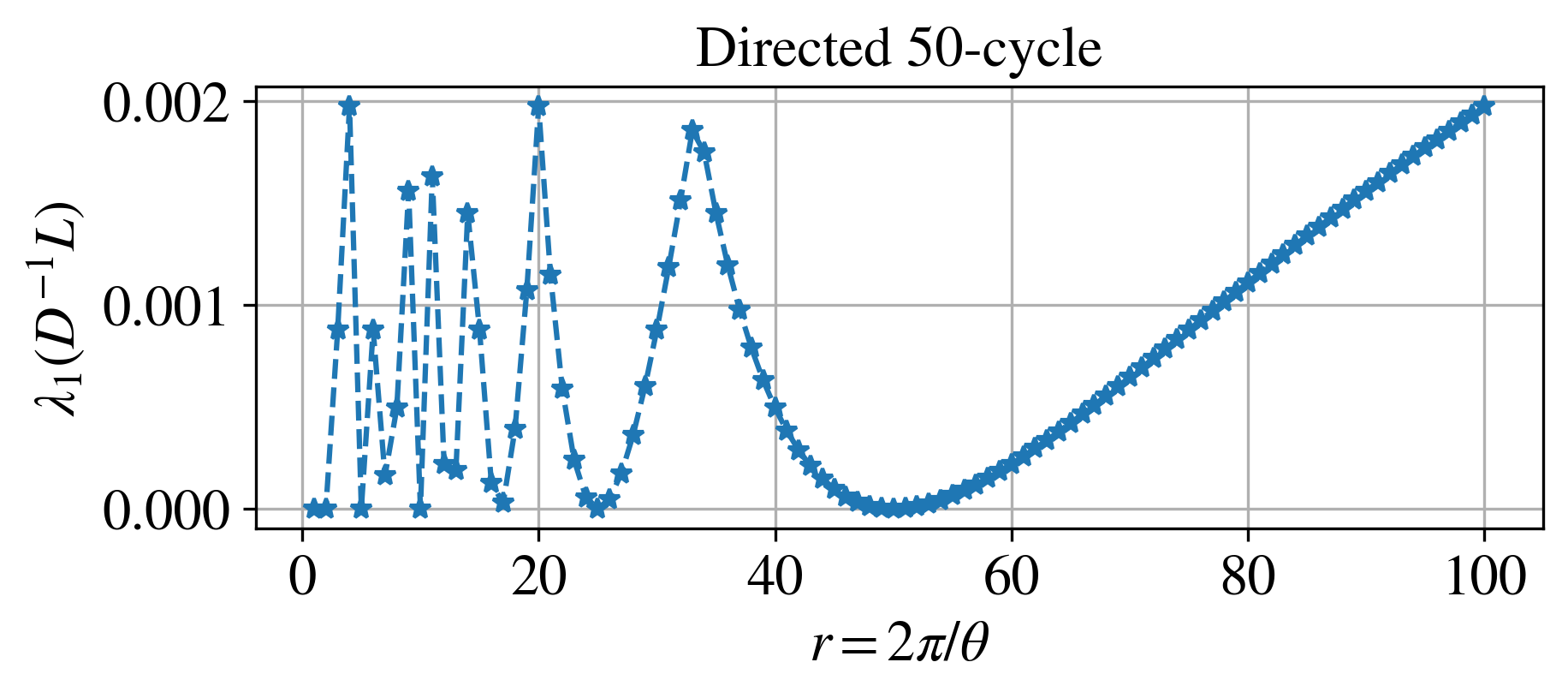} & \includegraphics[width=.4\textwidth]{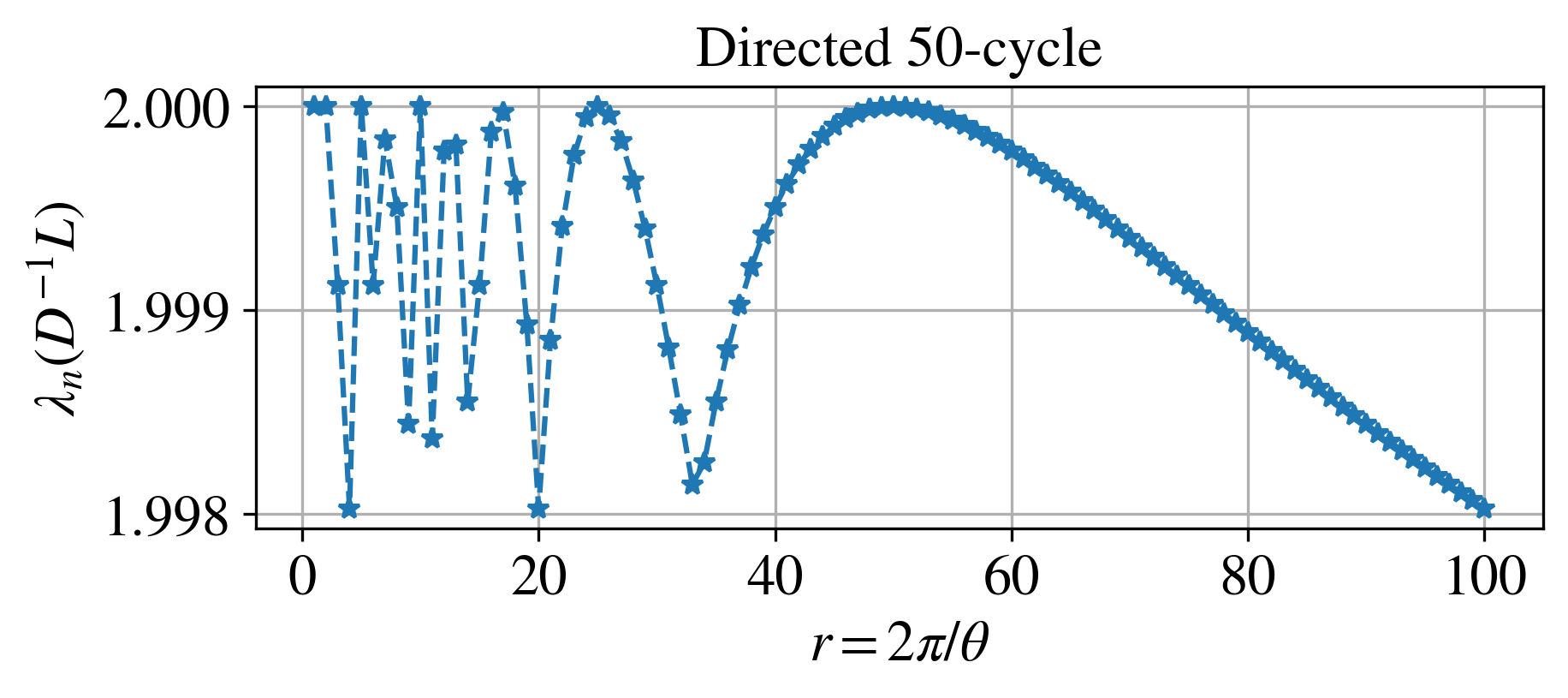}\\
		\end{tabular}
		\caption{Smallest (left) and largest (right) eigenvalues of the normalised magnetic Laplacian of directed cycles of relatively larger sizes while sweeping over integer values of $r=2\pi/\theta$ in $[1,100]$.}
		\label{fig:mag-dcyc-b-lams}
	\end{figure}
\end{document}